\documentclass[11pt]{llncs}

\pdfoutput=1

\newcommand{\LF}{\ensuremath{\mathit{LF}}}
\newcommand{\BWT}{\ensuremath{\mathit{BWT}}}
\newcommand{\DISA}{\ensuremath{\mathit{DISA}}}
\newcommand{\ISA}{\ensuremath{\mathit{ISA}}}
\newcommand{\SA}{\ensuremath{\mathit{SA}}}
\newcommand{\DSA}{\ensuremath{\mathit{DSA}}}
\newcommand{\LCP}{\ensuremath{\mathit{LCP}}}
\newcommand{\DLCP}{\ensuremath{\mathit{DLCP}}}
\newcommand{\PLCP}{\ensuremath{\mathit{PLCP}}}
\newcommand{\TDE}{\ensuremath{\mathit{TDE}}}

\newcommand{\PTDE}{\ensuremath{\mathit{PTDE}}}
\newcommand{\RMQ}{\ensuremath{\mathrm{RMQ}}}
\newcommand{\PSV}{\ensuremath{\mathrm{PSV}}}
\newcommand{\NSV}{\ensuremath{\mathrm{NSV}}}
\newcommand{\SV}{\ensuremath{\mathrm{SV}}}
\newcommand{\LCE}{\ensuremath{\mathit{LCE}}}

\newcommand{\rank}{\ensuremath{\mathrm{rank}}}

\usepackage{fullpage}
\usepackage{graphicx}
\usepackage{color}
\usepackage{amsmath}

\begin{document}

\title{Optimal-Time Text Indexing in BWT-runs Bounded Space
\thanks{Partially funded by Basal Funds FB0001, Conicyt, by Fondecyt Grants
1-171058 and 1-170048, Chile, and by the Danish Research Council DFF-4005-00267.}}
\author{Travis Gagie\inst{1,2} \and Gonzalo Navarro\inst{2,3} \and Nicola Prezza\inst{4}}
\institute{EIT, Diego Portales University, Chile 
\and 
Center for Biotechnology and Bioengineering (CeBiB), Chile
\and 
Department of Computer Science, University of Chile, Chile 
\and 
DTU Compute, Technical University of Denmark, Denmark}

\maketitle

\begin{abstract}
Indexing highly repetitive texts --- such as genomic databases, software 
repositories and versioned text collections --- has become an important problem
since the turn of the millennium. A relevant compressibility measure for 
repetitive texts is $r$, the number of runs in their Burrows-Wheeler Transform
(BWT). One of the earliest indexes for repetitive collections, the Run-Length
FM-index, used $O(r)$ space and was able to efficiently count the number of
occurrences of a pattern of length $m$ in the text (in loglogarithmic time per 
pattern symbol, with current techniques). However, it was unable to locate the
positions of those occurrences efficiently within a space bounded in terms of
$r$. Since then, a number of other indexes with space bounded by other measures
of repetitiveness --- the number of phrases in the Lempel-Ziv parse, the size
of the smallest grammar generating the text, the size of the smallest automaton
recognizing the text factors --- have been proposed for efficiently locating,
but not directly counting, the occurrences of a pattern. In this paper we close
this long-standing problem, showing how to extend the Run-Length FM-index
so that it can locate the $occ$ occurrences efficiently within $O(r)$ space (in
loglogarithmic time each), and reaching optimal time $O(m+occ)$ within 
$O(r\log(n/r))$ space, on a RAM machine of $w=\Omega(\log n)$ bits. 
Within $O(r\log (n/r))$ space, our index can also count in optimal time $O(m)$.
Raising the space to $O(r w\log_\sigma(n/r))$, we support count and locate in 
$O(m\log(\sigma)/w)$ and $O(m\log(\sigma)/w+occ)$ time, which is optimal in the packed setting and had not been obtained before in compressed space. We also 
describe a structure using $O(r\log(n/r))$ space that replaces the text and 
extracts any text substring of length $\ell$ in almost-optimal time
$O(\log(n/r)+\ell\log(\sigma)/w)$. Within that space, we similarly provide 
direct access to suffix array, inverse suffix array, and longest common prefix 
array cells, and extend these capabilities to full suffix tree functionality, 
typically in $O(\log(n/r))$ time per operation. Finally, we uncover new 
relations between $r$, the size of the smallest grammar generating the text, 
the Lempel-Ziv parsing, and the optimal bidirectional parsing.
\end{abstract}


\section{Introduction}

The data deluge has become a routine problem in most organizations that aim
to collect and process data, even in relatively modest and focused scenarios.
We are concerned about string (or text, or sequence) data, formed by collections
of symbol sequences. This includes natural language text collections, DNA and
protein sequences, source code repositories, digitalized music, and many others.
The rate at which those sequence collections are growing is daunting, and
outpaces Moore's Law by a significant margin \cite{Plos15}. One of the
key technologies to handle those growing datasets is {\em compact data
structures}, which aim to handle the data directly in compressed form, without
ever decompressing it \cite{Nav16}. In general, however, compact data 
structures do not compress the data by orders of magnitude,
but rather offer complex functionality within the space required by the
raw data, or a moderate fraction of it. As such, they do not seem to offer the
significant space reductions that are required to curb the sharply growing
sizes of today's datasets.

What makes a fundamental difference, however, is that the fastest-growing
string collections are in many cases {\em highly repetitive}, that is, most 
of the strings can be obtained from others with a few modifications.
For example, most genome sequence collections store many genomes from the
same species, which in the case of, say, humans differ by 0.1\% \cite{PHDR00}
(there is some discussion about the exact percentage). The 1000-genomes 
project\footnote{{\tt http://www.internationalgenome.org}} uses a 
Lempel-Ziv-like compression mechanism that reports compression ratios around 1\%
\cite{FLCB11}.
Versioned document collections and software repositories are another natural 
source of repetitiveness. For example, Wikipedia reports that, by June 2015, 
there were over 20 revisions (i.e., versions) per article in its 10 TB 
content, and that {\tt p7zip} compressed it to about 1\%. They also report 
that what grows the fastest today are the revisions rather than the new 
articles, which increases repetitiveness.%
\footnote{{\tt https://en.wikipedia.org/wiki/Wikipedia:Size\_of\_Wikipedia}}
A study of GitHub (which surpassed 20 TB in 2016)%
\footnote{{\tt https://blog.sourced.tech/post/tab\_vs\_spaces}}
reports a ratio of {\em commit} (new versions) over {\em create} (brand new
projects) around 20.%
\footnote{{\tt http://blog.coderstats.net/github/2013/event-types}, see the
ratios of {\em push} per {\em create} and {\em commit} per {\em push}.}
Repetitiveness also arises in other less obvious
scenarios: it is estimated that about 50\% of (non-versioned) software sources 
\cite{KG05},
40\% of the Web pages \cite{Hen06}, 50\% of emails \cite{EO06}, and 80\% of 
tweets \cite{TAHHG13}, are near-duplicates.

When the versioning structure is explicit, version management systems are able 
to factor out repetitiveness efficiently while providing access to any version.
The idea is simply to store the first version of a document in plain form and
then the edits of each version of it, so as to reconstruct any version 
efficiently. This becomes much harder when there is not a clear versioning 
structure (as in genomic databases) or when we want to provide more advanced 
functionalities, such as counting or locating the positions 
where a string pattern occurs across the collection. In this case, the problem 
is how to reduce the size of classical data structures for indexed pattern 
matching, like suffix trees \cite{Wei73} or suffix arrays \cite{MM93}, so that 
they become proportional to the amount of distinct material in the collection.
It should be noted that all the work on {\em statistical} compression of suffix
trees and arrays \cite{NM07} is not useful for this purpose, as it
does not capture this kind of repetitiveness \cite[Lem.~2.6]{KN13}.

M\"akinen et al.~\cite{MN05,MNSV08,MNSVrecomb09,MNSV09} pioneered the research 
on structures for searching repetitive collections. They regard the collection 
as a single concatenated text $T[1..n]$ with separator symbols, and note that 
the number $r$ of {\em runs} (i.e., maximal substrings formed by a single 
symbol) in the {\em Burrows-Wheeler Transform} \cite{BW94} of the text is very 
low on repetitive texts. Their index, {\em Run-Length FM-Index (RLFM-index)}, 
uses $O(r)$ words and can {\em count} the number of occurrences of a pattern 
$P[1..m]$ in time $O(m\log n)$ and even less. However, they are unable to 
{\em locate} where those positions are in $T$ unless they add a set of samples 
that require $O(n/s)$ words in order to offer $O(s\log n)$ time
to locate each occurrence. On repetitive texts, either this sampled structure is
orders of magnitude larger than the $O(r)$-size basic index, or the locating
time is unacceptably high.

Many proposals since then aimed at reducing the locating time by building on
other measures related to repetitiveness: indexes based 
on the Lempel-Ziv parse \cite{LZ76} of $T$, with size bounded in terms of
the number $z$ of phrases \cite{KN13,gagie2014lz77,NIIBT15,BCGPR15};
indexes based on the smallest context-free grammar \cite{CLLPPSS05} that 
generates $T$, with size bounded in terms of the size $g$ of the grammar 
\cite{CNfi10,CNspire12,GGKNP12}; and
indexes based on the size $e$ of the smallest automaton (CDAWG) \cite{BBHMCE87}
recognizing the substrings of $T$ 
\cite{BCGPR15,TGFIA17,BCspire17}. 
The achievements are summarized in Table~\ref{tab:related}; note that none of
those later approaches is able to count the occurrences without enumerating
them all.
We are not considering in this paper indexes based on other measures of
repetitiveness that only apply in restricted scenarios, such as based on
Relative Lempel-Ziv \cite{KPZ10,DJSS14,BGGMS14,FGNPS17} or on alignments
\cite{NPCHIMP13,NPLHLMP13}.

\begin{table}[p]
\begin{center}
\begin{tabular}{l|c|c}
Index & Space & {\bf Count time} \\
\hline
M\"akinen et al.~\cite[Thm.~17]{MNSV09}~~
	& ~~$O(r)$~~
	& ~~$O(m(\frac{\log\sigma}{\log\log r} + (\log\log n)^2))$ \\
{\bf This paper (Lem.~\ref{lem:rlfm})}
	& $O(r)$ 
	& $O(m\log\log_w(\sigma+n/r))$ \\
{\bf This paper (Thm.~\ref{thm:optimal time count})}
& $O(r\log (n/r))$
& $O(m)$ \\
{\bf This paper (Thm.~\ref{thm:optimal time count packed})}
& $O(r w\log_\sigma(n/r))$ 
& $O(m\log(\sigma)/w)$ \\
\end{tabular}
\end{center}

\begin{center}
\begin{tabular}{l|c|c}
Index & Space & {\bf Locate time} \\
\hline
Kreft and Navarro \cite[Thm.~4.11]{KN13} 
	& $O(z)$ 
	& $O(m^2 h+(m+occ)\log z)$ \\
Gagie et al.~\cite[Thm.~4]{gagie2014lz77} 
	& $O(z\log(n/z))$ 
	& $O(m\log m + occ \log\log n)$ \\
Bille et al.~\cite[Thm.~1]{PhBiCPM17}
& $O(z\log (n/z))$
& ~~$O(m(1+\log^\epsilon z / \log(n/z))+occ(\log^\epsilon z+\log\log n))$ \\
Nishimoto et al.~\cite[Thm.~1]{NIIBT15}
	& $O(z\log n\log^* n)$
	& ~~$O(m\log\log n\log\log z+\log z \log m \log n(\log^*n)^2$ \\
        & & ~~~~$+occ\log n)$ \\
Bille et al.~\cite[Thm.~1]{PhBiCPM17}
& $O(z\log (n/z)\log\log z)$
& ~~$O(m+occ\log\log n)$ \\
\hline
Claude and Navarro~\cite[Thm.~4]{CNfi10} 
	& $O(g)$ 
	& $O(m(m+\log n)\log n + occ \log^2 n)$ \\
Claude and Navarro~\cite[Thm.~1]{CNspire12} 
	& $O(g)$ 
	& $O(m^2\log_g n + (m+occ) \log g)$ \\
Gagie et al.~\cite[Thm.~4]{GGKNP12}
	& $O(g+z\log\log z)$
	& $O(m^2+(m+occ)\log\log n)$ \\
\hline
M\"akinen et al.~\cite[Thm.~20]{MNSV09} 
	& $O(r+n/s)$ 
	& $O((m+s\cdot occ)(\frac{\log\sigma}{\log\log r} + (\log\log n)^2))$ \\
Belazzougui et al.~\cite[Thm.~3]{BCGPR15}
	& $O(\overline{r}+z)$ 
	& $O(m(\log z + \log\log n) + occ (\log^\epsilon z + \log\log n))$ \\
{\bf This paper (Thm.~\ref{thm:locating})}
	& $O(r)$
	& $O(m\log\log_w(\sigma+n/r)+occ\log\log_w(n/r))$ \\
{\bf This paper (Thm.~\ref{thm:locating})}
	& ~~$O(r\log\log_w(n/r))~~$
	& $O(m\log\log_w(\sigma+n/r)+occ)$ \\
{\bf This paper (Thm.~\ref{thm:optimal time})}
	& $O(r\log (n/r))$
	& $O(m+occ)$ \\
{\bf This paper (Thm.~\ref{thm:optimal time packed})}
& $O(r w \log_\sigma(n/r))$
& $O(m\log(\sigma)/w+occ)$ \\
\hline
Belazzougui et al.~\cite[Thm.~4]{BCGPR15}
	& $O(e)$
	& $O(m\log\log n + occ)$ \\
Takagi et al.~\cite[Thm.~9]{TGFIA17}
	& $O(\overline{e})$ 
	& $O(m+occ)$ \\
Belazzougui and Cunial \cite[Thm.~1]{BCspire17}~~
	& $O(e)$ 
	& $O(m+occ)$ \\
\end{tabular}
\end{center}

\begin{center}
\begin{tabular}{l|c|c}
Structure & Space & {\bf Extract time} \\
\hline
Kreft and Navarro \cite[Thm.~4.11]{KN13} 
	& $O(z)$ 
	& $O(\ell\, h)$ \\
Gagie et al.~\cite[Thm.~1--2]{GGP15} 
	& $O(z\log n)$ 
	& $O(\ell + \log n)$ \\
Rytter \cite{Ryt03}, Charikar et al.~\cite{CLLPPSS05}
	& $O(z\log(n/z))$ 
	& $O(\ell+\log n)$ \\
Bille et al.~\cite[Lem.~5]{PhBiCPM17}
	& $O(z\log(n/z))$ 
	& $O(\ell+\log(n/z))$ \\
Gagie et al.~\cite[Thm.~2]{BGGKOPT15}   
	& $O(z\log(n/z))$ 
	& $O((1+\ell/\log_\sigma n)\log(n/z))$ \\
\hline
Bille et al.~\cite[Thm.~1.1]{BLRSRW15} 
	& $O(g)$ 
	& $O(\ell+\log n)$ \\
Belazzougui et al.~\cite[Thm.~1]{BPT15} 
	& $O(g)$ 
	& $O(\log n + \ell/\log_\sigma n)$ \\
Belazzougui et al.~\cite[Thm.~2]{BPT15} 
	& $O(g\log^\epsilon n\log(n/g))$ 
	& $O(\log n/\log\log n + \ell/\log_\sigma n)$ \\
\hline
M\"akinen et al.~\cite[Thm.~20]{MNSV09} 
	& $O(r+n/s)$ 
	& ~~$O((\ell+s)(\frac{\log\sigma}{\log\log r} + (\log\log n)^2))$ \\
{\bf This paper (Thm.~\ref{thm:extract})}
	& ~~$O(r\log(n/r))$~~
	& $O(\log(n/r) + \ell\log(\sigma)/w)$ \\
\hline
Takagi et al.~\cite[Thm.~9]{TGFIA17} 
	& $O(\overline{e})$ 
	& $O(\log n + \ell)$ \\
Belazzougui and Cunial \cite[Thm.~1]{BCspire17}~~
	& $O(e)$ 
	& $O(\log n+\ell)$ \\
\end{tabular}
\end{center}

\begin{center}
\begin{tabular}{l|c|c}
Structure & Space & {\bf Typical suffix tree operation time} \\
\hline
M\"akinen et al.~\cite[Thm.~30]{MNSV09} 
	& $O(r+n/s)$ 
	& ~~$O(s(\frac{\log\sigma}{\log\log r} + (\log\log n)^2))$ \\
{\bf This paper (Thm.~\ref{thm:stree})}
	& ~~$O(r\log(n/r))$~~
	& $O(\log(n/r))$ \\
\hline
Belazzougui and Cunial \cite[Thm.~1]{BC17}~~
	& $O(\overline{e})$ 
	& $O(\log n)$ \\
\end{tabular}
\end{center}
\caption{Previous and our new results on counting, locating, extracting, and
supporting suffix tree functionality. We simplified some formulas with tight
upper bounds. The main variables are the text size $n$, pattern length $m$, 
number of occurrences $occ$ of the pattern, alphabet size $\sigma$, Lempel-Ziv
parsing size $z$, smallest grammar size $g$, $\BWT$ runs $r$, CDAWG size $e$,
and machine word length in bits $w$. 
Variable $h\le n$ is the depth of the dependency chain in the Lempel-Ziv 
parse, and $\epsilon>0$ is an arbitrarily small constant. 
Symbols $\overline{r}$ or $\overline{e}$ mean $r$ or $e$ of $T$
plus $r$ or $e$ of its reverse. The $z$ in Nishimoto et al.~\cite{NIIBT15} 
refers to the Lempel-Ziv variant that does not allow overlaps between sources
and targets (Kreft and Navarro \cite{KN13} claim the same but their index 
actually works in either variant). 
Rytter \cite{Ryt03} and Charikar et al.~\cite{CLLPPSS05} enable the given
extraction time because they produce balanced grammars of the given size
(as several others that came later).
Takagi et al.~\cite{TGFIA17} claim time $O(m\log\sigma+occ)$ but 
they can reach $O(m+occ)$ by using perfect hashing.}
\label{tab:related}
\end{table}


There are a few known asymptotic bounds between the repetitiveness measures 
$r$, $z$, $g$, and $e$: $z \le g = O(z\log(n/z))$ \cite{Ryt03,CLLPPSS05,Jez16}
and $e = \Omega(\max(r,z,g))$ \cite{BCGPR15,BC17}. 
Several examples of string families are known that show that $r$ is not
comparable with $z$ and $g$ \cite{BCGPR15,Pre16}. Experimental results
\cite{MNSV09,KN13,BCGPR15,CFMPN16}, on the other hand,
suggest that in typical repetitive texts it holds $z < r \approx g \ll e$.

In highly repetitive texts, one expects not only to have a compressed index
able to count and locate pattern occurrences, but also to {\em replace} the
text with a compressed version that nonetheless can efficiently {\em extract}
any substring $T[i..i+\ell]$. Indexes that, implicitly or not, contain a
replacement of $T$, are called {\em self-indexes}. As can be seen in
Table~\ref{tab:related}, self-indexes with $O(z)$ space require up to $O(n)$
time per extracted character, and none exists within $O(r)$ space. Good
extraction times are instead obtained with $O(g)$, $O(z\log(n/z))$, or $O(e)$
space. A lower
bound \cite{VY13} shows that $\Omega((\log n)^{1-\epsilon}/\log g)$ time, for
every constant $\epsilon>0$, is needed to access one random position within 
$O(\mathrm{poly}(g))$ space. This bound shows that various current techniques
using structures bounded in terms of $g$ or $z$ 
\cite{BLRSRW15,BPT15,GGP15,BGGKOPT15} are nearly optimal (note that 
$g =\Omega(\log n)$, so the space of all these structures is 
$O(\mathrm{poly}(g))$). In an extended article \cite{CVY13}, they give a
lower bound in terms of $r$, but only for binary texts and $\log r = o(w)$:
$\Omega\left(\frac{\log n}{w^{\epsilon/(1-\epsilon)} \log r}\right)$ for
any constant $\epsilon>0$, where $w=\Omega(\log n)$ is the number of bits in 
the RAM word. In fact, since there are string families where $z = \Omega(r
\log n)$ \cite{Pre16}, no extraction mechanism in space $O(\mathrm{poly}(r))$ 
can escape in general from the lower bound \cite{VY13}.

In more sophisticated applications, especially in bioinformatics, it is 
desirable to support a more complex set of operations, which constitute a 
full suffix tree functionality \cite{Gus97,Ohl13,MBCT15}. While M\"akinen et 
al.~\cite{MNSV09} offered suffix tree functionality, they had the same problem
of needing $O(n/s)$ space to achieve $O(s\log n)$ time for most suffix tree
operations. Only recently a suffix tree of size $O(\overline{e})$ supports
most operations in time $O(\log n)$ \cite{BCGPR15,BC17}, where $\overline{e}$
refers to the $e$ measure of $T$ plus that of $T$ reversed.

Summarizing Table~\ref{tab:related} and our discussion, the situation 
on repetitive text indexing is as follows.

\begin{enumerate}
\item The RLFM-index is the only structure able to efficiently count the 
occurrences of $P$ in $T$ without having to enumerate them all. However, it 
does not offer efficient locating within $O(r)$ space.
\item The only structure clearly smaller than the RLFM-index, using $O(z)$
space \cite{KN13}, has unbounded locate time. Structures using $O(g)$ space,
which is about the same space of the RLFM-index, have an additive penalty 
quadratic in $m$ in their locate time.
\item Structures offering lower locate time require 
$O(z\log(n/z))$ space or more \cite{gagie2014lz77,NIIBT15,PhBiCPM17}, 
$O(\overline{r}+z)$ space \cite{BCGPR15} (where $\overline{r}$ is the sum of 
$r$ for $T$ and its reverse), or 
$O(e)$ space or more \cite{BCGPR15,TGFIA17,BCspire17}.
\item Self-indexes with efficient extraction require $O(z\log(n/z))$ space
or more \cite{GGP15,BGGKOPT15}, $O(g)$ space \cite{BLRSRW15,BPT15}, or $O(e)$ 
space or more \cite{TGFIA17,BCspire17}.
\item The only efficient compressed suffix tree requires $O(\overline{e})$
space \cite{BC17}.
\end{enumerate}

Efficiently locating the occurrences of $P$ in $T$ within $O(r)$ space has been
a bottleneck and an open problem for almost a decade.
In this paper we give the first solution to this problem.
Our precise contributions, largely detailed in
Tables~\ref{tab:related} and \ref{tab:contrib}, are the following.

\begin{enumerate}
\item We improve the counting time of the RLFM-index to 
$O(m\log\log_w(\sigma+n/r))$, where $\sigma\le r$ is the alphabet size of $T$.
\item We show how to locate each occurrence in time $O(\log\log_w(n/r))$,
within $O(r)$ space. We reduce the locate time to $O(1)$ per 
occurrence by using slightly more space, $O(r\log\log_w (n/r))$.
\item We give the first structure building on $\BWT$ runs that 
replaces $T$ while retaining direct access. It extracts any substring of 
length $\ell$ in time $O(\log(n/r)+\ell\log(\sigma)/w)$, using $O(r\log(n/r))$ 
space. As discussed, even the additive penalty is near-optimal \cite{VY13}.
\item By using $O(r\log (n/r))$ space, we obtain optimal locate time, $O(m+occ)$.
This had been obtained only using $O(e)$ \cite{BCspire17} or $O(\overline{e})$
\cite{TGFIA17} space. By increasing the space to $O(r w\log_\sigma(n/r))$, 
we obtain optimal locate time
$O(m\log(\sigma)/w+occ)$ in the packed setting 
(i.e., the pattern symbols come packed in blocks of $w/\log\sigma$ symbols per 
word). This had not been achieved so far by any compressed 
index, but only on uncompressed ones \cite{NN17}.
\item By using $O(r\log (n/r))$ space, we obtain optimal count time, $O(m)$. We also achieve optimal count time in the packed setting, $O(m\log(\sigma)/w)$, by increasing the space to $O(rw\log_\sigma(n/r))$. As for locate, this is the first compressed index achieving optimal-time count. 
\item We give the first compressed suffix tree whose space is bounded in 
terms of $r$, $O(r\log(n/r))$ words. It implements most navigation
operations in time $O(\log(n/r))$. There exist only comparable suffix trees
within $O(\overline{e})$ space \cite{BC17}, taking $O(\log n)$ time for most
operations.
\item We uncover new relations between $r$, $g$, $g_{rl} \le g$ (the 
smallest {\em run-length } context-free grammar \cite{NIIBT16}, which allows 
size-1 rules of the form $X \rightarrow Y^t$), $z$, $z_{no}$ (the Lempel-Ziv
parse that does not allow overlaps), and $b$ (the smallest
bidirectional parsing of $T$ \cite{SS82}). Namely, we show that $b \le r$,
$z \le 2g_{rl} = O(b\log(n/b))$, $\max(g,z_{no}) = O(b\log^2(n/b))$,
and that $z$ can be $\Omega(b\log n)$.
\end{enumerate}

\begin{table}[t]
\begin{center}
\begin{tabular}{l|c|c}
Functionality & Space (words) & Time \\
\hline
Count (Lem.~\ref{lem:rlfm})      
	& $O(r)$        & $O(m\log\log_w(\sigma+n/r))$ \\
Count (Thm.~\ref{thm:optimal time count})      
& $O(r\log (n/r))$        & $O(m)$ \\
Count (Thm.~\ref{thm:optimal time count packed})      
& $O(r w \log_\sigma(n/r))$        & $O(m\log(\sigma)/w)$ \\

Count $+$ Locate (Thm.~\ref{thm:locating})
	& $O(r)$        & ~~$O(m\log\log_w(\sigma+n/r)+occ\log\log_w(n/r))$ \\
Count $+$ Locate (Thm.~\ref{thm:locating})
	& $O(r\log\log_w(n/r))$ & $O(m\log\log_w(\sigma+n/r)+occ)$ \\
Count $+$ Locate (Thm.~\ref{thm:optimal time})
	& $O(r\log(n/r))$  & $O(m+occ)$ \\
Count $+$ Locate (Thm.~\ref{thm:optimal time packed})
& $O(r w \log_\sigma(n/r))$  & $O(m\log(\sigma)/w+occ)$ \\
Extract (Thm.~\ref{thm:extract}) 
& $O(r\log(n/r))$ & $O(\log(n/r)+\ell\log(\sigma)/w)$ \\
Access $\SA$, $\ISA$, $\LCP$ (Thm.~\ref{thm:dsa}--\ref{thm:dlcp})~~ 
	& ~~$O(r\log(n/r))$~~ & $O(\log(n/r)+\ell)$ \\
Suffix tree (Thm.~\ref{thm:stree})
	& ~~$O(r\log(n/r))$~~ & $O(\log(n/r))$ for most operations \\
\end{tabular}
\end{center}
\caption{Our contributions.}
\label{tab:contrib}
\end{table}

Contribution 1 is a simple update of the RLFM-index \cite{MNSV09} with newer 
data structures for rank and predecessor queries \cite{BN14}.
Contribution 2 is one of the central parts of the paper, and is obtained in
two steps. The first uses the fact that we can carry out the classical RLFM 
counting process for $P$ in a way that we always know the position of one 
occurrence in $T$ \cite{Pre16,PP16}; we give a simpler proof of this fact.
The second shows that, if we know the position in $T$ of one occurrence of
$\BWT$, then we can obtain the preceding and following ones with an $O(r)$-size
sampling. This is achieved by using the $\BWT$ runs to induce {\em phrases} in 
$T$ (which are somewhat analogous to the Lempel-Ziv phrases \cite{LZ76}) and 
showing that the positions of occurrences within phrases can be obtained from 
the positions of their preceding phrase beginning. The time $O(1)$ is obtained
by using an extended sampling. Contribution 3 creates an 
analogous of the Block Tree \cite{BGGKOPT15} built on the $\BWT$-induced
phrases, which satisfy the same property that any distinct string has an 
occurrence overlapping a boundary between phrases. For Contribution 4, we discard
the RLFM-index and use a mechanism similar to the one used in Lempel-Ziv or
grammar indexes \cite{CNfi10,CNspire12,KN13} to find one {\em primary} 
occurrence, that is, one that overlaps phrase boundaries; then the others are 
found with the
mechanism to obtain neighboring occurrences already described. Here we use a
stronger property of primary occurrences that does not hold on those of
Lempel-Ziv or
grammars, and that might have independent interest. Further, to avoid time
quadratic in $m$ to explore all the suffixes of $P$, we use a (deterministic)
mechanism based on Karp-Rabin signatures \cite{belazzougui2010fast,gagie2014lz77}, which we show
how to compute from a variant of the structure we create for extracting text
substrings. The optimal packed time is obtained by enlarging samplings.
In Contribution 5, we use the components used in Contribution 4 to find a pattern occurrence, and then find the $\BWT$ range of the pattern with range searches on the longest common prefix array $\LCP$, supported by compressed suffix tree 
primitives (see next). Contribution 6 needs basically direct access to the 
suffix array $\SA$, inverse suffix array $\ISA$, and array $\LCP$ of $T$, for 
which we show that a recent grammar compression method achieving locally 
consistent parsing \cite{Jez15,I17} interacts with the $\BWT$ runs/phrases to 
produce run-length context-free grammars of size $O(r\log(n/r))$ and height 
$O(\log(n/r))$. The suffix tree operations also need primitives on the
$\LCP$ array that compute range minimum queries and previous/next
smaller values \cite{FMN09}. Finally, for Contribution 7 we show that
a locally-consistent-parsing-based run-length grammar of size $O(b\log(n/b))$
can compress a bidirectional scheme of size $b$, and this yields upper 
bounds on $z$ as well. We also show that the $\BWT$ runs induce a valid 
bidirectional scheme on $T$, so $b \le r$, and use known examples \cite{Pre16} 
where $z=\Omega(r\log n)$. The other relations are derived from known bounds.

\section{Basic Concepts}

A string is a sequence $S[1..\ell] = S[1] S[2] \ldots S[\ell]$, of length
$\ell = |S|$, of symbols (or characters, or letters) chosen from an alphabet 
$[1..\sigma] = \{ 1,2,\ldots,\sigma\}$, that is, $S[i] \in [1..\sigma]$ for all
$1\le i\le\ell$. We use $S[i..j] = S[i] \ldots S[j]$, with $1\le i,j\le\ell$, to
denote a substring of $S$, which is the empty string $\varepsilon$ if $i>j$.
A prefix of $S$ is a substring of the form $S[1..i]$ and a suffix is a substring
of the form $S[i..\ell]$. The yuxtaposition of strings and/or symbols represents
their concatenation.

We will consider indexing a {\em text} $T[1..n]$, which is a string over
alphabet $[1..\sigma]$ terminated by the special symbol $\$ = 1$, that is, the
lexicographically smallest one, which appears only at $T[n]=\$$. This makes any
lexicographic comparison between suffixes well defined.

Our computation model is the transdichotomous RAM, with a word of 
$w=\Omega(\log n)$ bits, where all the standard arithmetic and logic
operations can be carried out in constant time. In this article we generally
measure space in words.

\subsection{Suffix Trees and Arrays}

The {\em suffix tree} \cite{Wei73} of $T[1..n]$ is a compacted trie where all
the $n$ suffixes of $T$ have been inserted. By compacted we mean that chains
of degree-1 nodes are collapsed into a single edge that is labeled with the
concatenation of the individual symbols labeling the collapsed edges. The
suffix tree has $n$ leaves and less than $n$ internal nodes. By representing
edge labels with pointers to $T$, the suffix tree uses $O(n)$ space, and can 
be built in $O(n)$ time \cite{Wei73,McC76,Ukk95,FFM00}.

The {\em suffix array} \cite{MM93} of $T[1..n]$ is an array $\SA[1..n]$ storing
a permutation of $[1..n]$ so that, for all $1 \le i < n$, the suffix
$T[\SA[i]..]$ is lexicographically smaller than the suffix
$T[\SA[i+1]..]$. Thus $\SA[i]$ is the starting position in $T$ of the
$i$th smallest suffix of $T$ in lexicographic order. This can be regarded as
an array collecting the leaves of the suffix tree. The suffix array uses
$n$ words and can be built in $O(n)$ time \cite{KSPP05,KA05,KSB06}. 

All the occurrences of a pattern string $P[1..m]$ in $T$ can be easily spotted 
in the suffix tree or array. In the suffix tree, we descend from the root
matching the successive symbols of $P$ with the strings labeling the edges.
If $P$ is in $T$, the symbols of $P$ will be exhausted at a node $v$ or inside
an edge leading to a node $v$; this node is called the {\em locus} of $P$,
and all the $occ$ leaves descending from $v$ are the suffixes starting with $P$,
that is, the starting positions of the occurrences of $P$ in $T$. By using
perfect hashing to store the first characters of the edge labels descending
from each node of $v$, we reach the locus in optimal time $O(m)$ and the space 
is still $O(n)$. If $P$ comes packed using $w/\log\sigma$ symbols per computer 
word, we can descend in time $O(m \log(\sigma)/w)$ \cite{NN17},
which is optimal in the packed model. In the suffix array, all the suffixes 
starting with $P$ form a
range $\SA[sp..ep]$, which can be binary searched in time $O(m\log n)$, or
$O(m+\log n)$ with additional structures \cite{MM93}.

The inverse permutation of $\SA$, $\ISA[1..n]$, is called the {\em inverse 
suffix array}, so that $\ISA[j]$ is the lexicographical position of the 
suffix $T[j..n]$ among the suffixes of $T$.

Another important concept related to suffix arrays and trees is the longest
common prefix array. Let $lcp(S,S')$ be the length of the longest common prefix
between strings $S$ and $S'$, that is, $S[1..lcp(S,S')]=S'[1..lcp(S,S')]$ but
$S[lcp(S,S')+1]\not=S'[lcp(S,S')+1]$. Then we define the {\em longest
common prefix array} $\LCP[1..n]$ as $\LCP[1]=0$ and $\LCP[i] =
lcp(T[\SA[i-1]..],T[\SA[i]..])$. The $\LCP$ array uses $n$ words and can be
built in $O(n)$ time \cite{KLAAP01}.

\subsection{Self-indexes}

A {\em self-index} is a data structure built on $T[1..n]$ that provides at 
least the following functionality:
\begin{description}
	\item[Count:] Given a pattern $P[1..m]$, count the number of occurrences
of $P$ in $T$.
	\item[Locate:] Given a pattern $P[1..m]$, return the positions where $P$
occurs in $T$.
	\item[Extract:] Given a range $[i..i+\ell-1]$, return $T[i..i+\ell-1]$.
\end{description}

The last operation allows a self-index to act as a replacement of $T$, that is,
it is not necessary to store $T$ since any desired substring can be extracted
from the self-index. This can be trivially obtained by including a copy of $T$
as a part of the self-index, but it is challenging when the self-index uses
less space than a plain representation of $T$.

In principle, suffix trees and arrays can be regarded as self-indexes that can
count in time $O(m)$ or $O(m\log(\sigma)/w)$ (suffix tree, by storing $occ$ in
each node $v$) and $O(m\log n)$ or $O(m+\log n)$ (suffix array, with $occ =
ep-sp+1$), locate each occurrence in $O(1)$ time, and extract
in time $O(1+\ell\log(\sigma)/w)$. However, they use $O(n\log n)$ bits, much more
than the $n\log\sigma$ bits needed to represent $T$ in plain form.
We are interested in {\em compressed self-indexes} \cite{NM07,Nav16}, which use
the space required by a compressed representation 
of $T$ (under some entropy model) plus some redundancy (at worst
$o(n\log\sigma)$ bits). We describe later the FM-index, a particular self-index
of interest to us.

\subsection{Burrows-Wheeler Transform}

The {\em Burrows-Wheeler Transform} of $T[1..n]$, $\BWT[1..n]$ \cite{BW94}, is a
string defined as $\BWT[i] = T[\SA[i]-1]$ if $\SA[i]>1$, and $\BWT[i]=T[n]=\$$ 
if $\SA[i]=1$. That is, $\BWT$ has the same symbols of $T$ in a different order,
and is a reversible transform.

The array $\BWT$ is obtained from $T$ by first building $\SA$, although it can
be built directly, in $O(n)$ time and within $O(n\log\sigma)$ bits of space
\cite{MNN17}. To obtain $T$ from $\BWT$ \cite{BW94}, one considers two arrays,
$L[1..n] = \BWT$ and $F[1..n]$, which contains all the symbols of $L$ (or $T$) 
in ascending order. Alternatively, $F[i]=T[\SA[i]]$, so $F[i]$ follows $L[i]$ 
in $T$. We need a function that maps any $L[i]$ to the position $j$ of that 
same character in $F$. The formula is $\LF(i) = C[c] + \rank[i]$, where
$c=L[i]$, $C[c]$ is the number of occurrences of symbols less than $c$ in $L$,
and $\rank[i]$ is the number of occurrences of symbol $L[i]$ in $L[1..i]$.
A simple $O(n)$-time pass on $L$ suffices to compute arrays $C[i]$ and 
$\rank[i]$ using $O(n\log\sigma)$ bits of space. Once they are computed, we
reconstruct $T[n]=\$$ and $T[n-k] \leftarrow L[\LF^{k-1}(1)]$ for $k=1,\ldots,
n-1$, in $O(n)$ time as well.

\subsection{Compressed Suffix Arrays and FM-indexes}

Compressed suffix arrays \cite{NM07} are a particular case of self-indexes 
that simulate
$\SA$ in compressed form. Therefore, they aim to obtain the suffix array
range $[sp..ep]$ of $P$, which is sufficient to count since $P$ then appears
$occ = ep-sp+1$ times in $T$. For locating, they need to access the content of
cells $\SA[sp],\ldots,\SA[ep]$, without having $\SA$ stored.

The FM-index \cite{FM05,FMMN07} is a compressed suffix array that exploits the
relation between the string $L=\BWT$ and the suffix array $\SA$. It 
stores $L$ in compressed form (as it can be easily compressed to the 
high-order empirical entropy of $T$ \cite{Man01}) and adds sublinear-size data 
structures to compute (i) any desired position $L[i]$, (ii) the generalized
{\em rank function} $\rank_c(L,i)$, which is the number of times symbol $c$
appears in $L[1..i]$. Note that these two operations permit, in particular, 
computing $\rank[i] = \rank_{L[i]}(L,i)$, which is called {\em partial rank}.
Therefore, they compute
$$ \LF(i) ~~=~~ C[i] + \rank_{L[i]}(L,i).$$

For counting, the FM-index resorts to {\em backward search}. This procedure
reads $P$ backwards and at any step knows the range $[sp_i,ep_i]$ of $P[i..m]$
in $T$. Initially, we have the range $[sp_{m+1}..ep_{m+1}]=[1..n]$ for
$P[m+1..m]=\varepsilon$. Given the range $[sp_{i+1}..ep_{i+1}]$, one obtains
the range $[sp_i..ep_i]$ from $c=P[i]$ with the operations 
\begin{eqnarray*}
sp_i &=& C[c]+\rank_c(L,sp_{i+1}-1)+1,\\
ep_i &=& C[c]+\rank_c(L,ep_{i+1}).
\end{eqnarray*}
Thus the range $[sp..ep]=[sp_1..ep_1]$ is obtained with $O(m)$ computations
of $\rank$, which dominates the counting complexity.

For locating, the FM-index (and most compressed suffix arrays) stores sampled
values of $\SA$ at regularly spaced text positions, say multiples of $s$. 
Thus, to retrieve $\SA$, we find the smallest $k$ for which $\SA[\LF^k(i)]$ is 
sampled, and then the answer is $\SA[i] = \SA[\LF^k(i)]+k$. This is because
function $\LF$ virtually traverses the text backwards, that is, it drives us
from $L[i]$, which points to some $\SA[i]$, to its corresponding position
$F[j]$, which is preceded by $L[j]$, that is, $\SA[j]=\SA[i]-1$: 
$$\SA[\LF(i)] ~=~ \SA[i]-1.$$
Since it is guaranteed that $k < s$, each occurrence
is located with $s$ accesses to $L$ and computations of $\LF$, and the extra 
space for the sampling is $O((n\log n)/s)$ bits, or $O(n/s)$ words.

For extracting, a similar sampling is used on $\ISA$, that is, we sample the
positions of $\ISA$ that are multiples of $s$. To extract $T[i..i+\ell-1]$ we
find the smallest multiple of $s$ in $[i+\ell..n]$, $j=s \cdot \lceil
(i+\ell)/s \rceil$, and extract $T[i..j]$. Since $\ISA[j]=p$ is sampled, we
know that $T[j-1] = L[p]$, $T[j-2] = L[\LF(p)]$, and so on. In total we
require at most $\ell+s$ accesses to $L$ and computations of $\LF$ to extract
$T[i..i+\ell-1]$. The extra space is also $O(n/s)$ words.

For example, using a representation \cite{BN14} that accesses $L$ and computes 
partial ranks in constant time (so $\LF$ is computed in $O(1)$ time), and
computes $\rank$ in the optimal $O(\log\log_w \sigma)$ time, an FM-index can
count in time $O(m\log\log_w \sigma)$, locate each occurrence in $O(s)$ time,
and extract $\ell$ symbols of $T$ in time $O(s+\ell)$, by using $O(n/s)$ space
on top of the empirical entropy of $T$ \cite{BN14}. There exist even faster
variants \cite{BN13}, but they do not rely on backward search.

\subsection{Run-Length FM-index}

One of the sources of the compressibility of $\BWT$ is that symbols are
clustered into $r \le n$ {\em runs}, which are maximal substrings formed by the
same symbol. M\"akinen and Navarro \cite{MN05} proved a (relatively weak)
bound on $r$ in terms of the high-order empirical entropy of $T$ and, more
importantly, designed an FM-index variant that uses $O(r)$ words of space,
called {\em Run-Length FM-index} or {\em RLFM-index}. They later experimented
with several variants of the RLFM-index, where the variant RLFM+ 
\cite[Thm.~17]{MNSV09} corresponds to the original one \cite{MN05}.

The structure stores the {\em run heads}, that is, the first positions of the
runs in $\BWT$, in a data structure $E = \{ 1 \} \cup \{ 1<i\le n, 
\BWT[i] \not= \BWT[i-1]\}$ that supports predecessor searches. Each element 
$e \in E$ has associated the value $e.p = |\{ e' \in E, e' \le e\}|$, which 
gives its position in a string $L'[1..r]$ that stores the run symbols.
Another array, $D[0..r]$, stores the cumulative lengths of the runs after 
sorting them lexicographically by their symbols (with $D[0]=0$). Let array
$C'[1..\sigma]$ count the number of {\em runs} of symbols smaller than $c$ in
$L$. One can then simulate
$$\rank_c(L,i) ~=~ D[C'[c]+\rank_c(L',pred(i).p-1)]
	+[\textrm{if}~L'[pred(i).p] = c ~\textrm{then}~ i-pred(i)+1~\textrm{else}~0]$$
at the cost of a predecessor search ($pred$) in $E$ and a $\rank$ on $L'$. By 
using up-to-date data structures, the counting performance of the RLFM-index 
can be stated as follows.

\begin{lemma} \label{lem:rlfm}
The Run-Length FM-index of a text $T[1..n]$ whose $\BWT$ has $r$ runs
can occupy $O(r)$ words and count the number of occurrences of a pattern 
$P[1..m]$ in time $O(m\log\log_w(\sigma + n/r))$. It also computes $\LF$ and
access to any $\BWT[p]$ in time $O(\log\log_w(n/r))$.
\end{lemma}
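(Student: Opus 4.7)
My plan is to plug modern data structures into the classical RLFM-index layout already described in the paragraph preceding the lemma, and to separate the two claimed time bounds according to whether the operation needs a full generalized $\rank$ on $L'$ or only a partial one.

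First I would instantiate the ingredients. The predecessor structure for $E$ holds $r$ integers from the universe $[1..n]$; using a Belazzougui–Navarro predecessor structure (cite BN14), it takes $O(r)$ words and answers a query in $O(\log\log_w(n/r))$ time. For $L'[1..r]$ over alphabet $[1..\sigma]$, I would use the same authors' string representation, which in $O(r)$ words supports $\rank_c(L',\cdot)$ for arbitrary $c$ in $O(\log\log_w \sigma)$ time and access to any $L'[p]$ in $O(1)$. The arrays $C$, $C'$, $D$ take $O(\sigma+r)=O(r)$ space, since each distinct symbol of $T$ induces at least one $\BWT$-run and thus $\sigma\le r$. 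I would also precompute, at each run-head position $s$ in $L$, the value $\rho_s=\rank_{L[s]}(L,s)$; this is an additional $O(r)$ words and supplies \emph{partial} rank in constant time once $s=\mathit{pred}(i)$ is known: $\rank_{L[i]}(L,i)=\rho_s+(i-s)$.

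For access to $\BWT[i]=L[i]$, one predecessor query on $E$ returns $s=\mathit{pred}(i)$ together with its index $p=s.p$ in $L'$, and then $\BWT[i]=L'[p]$, in total $O(\log\log_w(n/r))$ time. For $\LF(i)=C[L[i]]+\rank_{L[i]}(L,i)$, the same predecessor query identifies $s$, yields $L[i]$ via $L'[p]$, and gives the partial rank $\rho_s+(i-s)$ in $O(1)$ extra time; hence $\LF$ also costs $O(\log\log_w(n/r))$. For counting I would run the standard backward search, performing for each of the $m$ symbols of $P$ two updates of the form $\rank_c(L,i)$ with the stated formula: a predecessor on $E$ ($O(\log\log_w(n/r))$), one generalized rank on $L'$ ($O(\log\log_w\sigma)$), plus $O(1)$ table lookups in $C'$ and $D$ and a comparison $L'[\mathit{pred}(i).p]\stackrel{?}{=}c$. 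Summing the two costs gives $O(\log\log_w(n/r)+\log\log_w\sigma)=O(\log\log_w(\sigma+n/r))$ per step, hence $O(m\log\log_w(\sigma+n/r))$ overall.

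There is no real obstacle here beyond verifying that the chosen predecessor and rank structures meet the stated bounds in $O(r)$ words; the slightly delicate point is to notice that for $\LF$ and for $\BWT$ access we never need a generalized rank on $L'$, only a partial rank, which is why their time does not depend on $\sigma$ and is the smaller $O(\log\log_w(n/r))$. Everything else is a direct plug-in of~\cite{BN14} into the RLFM formula recalled above the lemma statement.
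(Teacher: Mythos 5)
Your proposal is correct and follows essentially the same route as the paper: plug the Belazzougui--Navarro predecessor structure for $E$ and their $O(\log\log_w\sigma)$-rank/constant-access representation for $L'$ into the RLFM+ layout, and observe that $\LF$ and $\BWT$-access need only a predecessor search plus a constant-time partial rank (your explicit $\rho_s+(i-s)$ table is the same $O(r)$-word ``record the answers at run heads'' trick the paper uses), while each backward-search step additionally pays one generalized rank on $L'$, giving $O(m\log\log_w(\sigma+n/r))$ for counting.
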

\begin{proof}
We use the RLFM+ \cite[Thm.~17]{MNSV09}, using the structure of Belazzougui 
and Navarro \cite[Thm.~10]{BN14} for the sequence $L'$ (with constant access
time) and the predecessor data structure described by Belazzougui and Navarro 
\cite[Thm.~14]{BN14} to implement $E$ (instead of the bitvector they originally
used). They also implement $D$ with a bitvector, but we use a plain array.
The sum of both operation times is $O(\log\log_w \sigma + \log\log_w(n/r))$, 
which can be written as $O(\log\log_w (\sigma+n/r))$. To access $\BWT[p]=L[p]$ 
we only need a predecessor search on $E$, which takes time $O(\log\log_w(n/r))$.
Finally, we compute $\LF$ faster than a general {\em rank} query, as we only 
need the partial rank query $rank_{L[i]}(L,i)$. This can be supported in 
constant time on $L'$ using $O(r)$ space, by just recording all the answers, and
therefore the time for $\LF$ on $L$ is also dominated by the predecessor 
search on $E$, with $O(\log\log_w(n/r))$ time.
\qed
\end{proof}

We will generally assume that $\sigma$ is the {\em effective} alphabet of $T$,
that is, the $\sigma$ symbols appear in $T$. This implies that $\sigma \le r
\le n$. If this is not the case, we can map $T$ to an effective alphabet
$[1..\sigma']$ before indexing it. A mapping of $\sigma' \le r$ words then
stores the actual symbols when extracting a substring of $T$ is necessary. For
searches, we have to map the $m$ positions of $P$ to the effective alphabet.
By storing a predecessor structure of $O(\sigma')=O(r)$ words, we map each
symbol of $P$ in time $O(\log\log_w (\sigma/\sigma'))$ \cite[Thm.~14]{BN14}.
This is within the bounds given in Lemma~\ref{lem:rlfm}, which therefore holds
for any alphabet size.

To provide locating and extracting functionality, M\"akinen et al.~\cite{MNSV09}
use the sampling mechanism we described for the FM-index. Therefore, although
they can efficiently count within $O(r)$ space, they need a much larger
$O(n/s)$ space to support these operations in time proportional to $O(s)$.
Despite various efforts \cite{MNSV09}, this has been a bottleneck in theory and
in practice since then.

\subsection{Compressed Suffix Trees}

Suffix trees provide a much more complete functionality than self-indexes,
and are used to solve complex problems especially in bioinformatic applications
\cite{Gus97,Ohl13,MBCT15}. A {\em compressed suffix tree} is regarded as an
enhancement of a compressed suffix array (which, in a sense, represents only
the leaves of the suffix tree). Such a compressed representation must be able
to simulate the operations on the classical suffix tree (see 
Table~\ref{tab:streeops} later in the article), while using little
space on top of the compressed suffix array. The first such compressed suffix
tree \cite{Sad07} used $O(n)$ extra bits, and there are variants using $o(n)$
extra bits \cite{FMN09,Fis10,RNO11,GO13,ACN13}.

Instead, there are no compressed suffix trees using $O(r)$ space. An extension
of the RLFM-index \cite{MNSV09} still needs $O(n/s)$ space to carry out
most of the suffix tree operations in time $O(s\log n)$. Some variants that
are designed for repetitive text collections \cite{ACN13,NO16} are heuristic
and do not offer worst-case guarantees. Only recently a compressed suffix tree
was presented \cite{BC17} that uses $O(\overline{e})$ space and carries out
operations in $O(\log n)$ time.


\section{Locating Occurrences} \label{sec:locate}

In this section we show that, if the $\BWT$ of a text $T[1..n]$ has $r$ runs, 
we can have an index using $O(r)$ space that not only efficiently finds the 
interval $\SA[sp..ep]$ of the occurrences of a pattern $P[1..m]$ (as was
already known in the literature, see previous sections) but that can locate
each such occurrence in time $O(\log\log_w(n/r))$ on a RAM machine of $w$ bits.
Further, the time per occurrence may become constant if the space is raised to
$O(r\log\log_w(n/r))$.

We start with Lemma~\ref{lem:find_one}, which shows that the typical backward
search process can be enhanced so that we always know the position of one of
the values in $\SA[sp..ep]$. Our proof simplifies a previous one
\cite{Pre16,PP16}. Lemma~\ref{lem:find_neighbours} then shows how
to efficiently obtain the two neighboring cells of $\SA$ if we know the value
of one. This allows us extending the first known cell in both directions, until
obtaining the whole interval $\SA[sp..ep]$. In Lemma~\ref{lemma: general locate}
we show how this process can be sped up by using more space.
Theorem~\ref{thm:locating} then summarizes the main result of this section.

In Section~\ref{sec:lcp} we extend the idea in order to obtain $\LCP$ values
analogously to how we obtain $\SA$ values. While not of immediate use for locating,
this result is useful later in the article and also has independent interest.

\begin{lemma}
	\label{lem:find_one}
	We can store $O (r)$ words such that, given $P [1..m]$, in time
$O (m \log \log_w (\sigma + n/r))$ we can compute the interval $\SA[sp,ep]$ 
of the occurrences of $P$ in $T$, and also return the position $j$ and contents 
$\SA[j]$ of at least one cell in the interval $[sp,ep]$.
\end{lemma}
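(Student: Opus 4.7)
The plan is to augment the RLFM-index of Lemma~\ref{lem:rlfm} with an $O(r)$-word sampling that lets us maintain, throughout the standard backward search for $P$, a \emph{toehold}: a pair $(j,\SA[j])$ with $j\in[sp_i,ep_i]$ at every step $i=m,m-1,\ldots,1$. Concretely, I would precompute and store $\SA[p]$ for every run head and every run tail of $\BWT$ (still $O(r)$ words), and for each symbol $c\in[1..\sigma]$ a pointer to one run of $c$'s in $\BWT$. Counting is done exactly as in the RLFM-index, so it remains only to describe how the toehold is initialized and updated within the same asymptotic time per step.

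For initialization, after the first backward step with $c=P[m]$, take any stored run head $h$ of $c$: since $\BWT[h]=c$, we have $\LF(h)\in[C[c]+1,C[c+1]]=[sp_m,ep_m]$ and $\SA[\LF(h)]=\SA[h]-1$, so $(\LF(h),\SA[h]-1)$ is a valid toehold. For the inductive step, suppose we have $(j,\SA[j])$ with $j\in[sp_{i+1},ep_{i+1}]$ and process $c=P[i]$. If $\BWT[j]=c$, then $\LF(j)\in[sp_i,ep_i]$ and the new toehold is $(\LF(j),\SA[j]-1)$, obtained in $O(\log\log_w(n/r))$ time using the operations of Lemma~\ref{lem:rlfm}. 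Otherwise, provided the new range is nonempty, we claim some position $k\in[sp_{i+1},ep_{i+1}]$ with $\BWT[k]=c$ is either a run head or a run tail; then $\SA[k]$ is stored and the new toehold is $(\LF(k),\SA[k]-1)\in[sp_i,ep_i]$.

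The key observation for the claim is the following. Let $k^{+}$ be the smallest position $>j$ with $\BWT[k^{+}]=c$, and $k^{-}$ the largest position $<j$ with $\BWT[k^{-}]=c$ (either may not exist). Because $\BWT[j]\neq c$ and no $c$ lies strictly between $j$ and $k^{+}$, we have $\BWT[k^{+}-1]\neq c$, so $k^{+}$ is a run head; symmetrically, $\BWT[k^{-}+1]\neq c$, so $k^{-}$ is a run tail. At least one of $k^{+}\le ep_{i+1}$ or $k^{-}\ge sp_{i+1}$ must hold, for otherwise $\BWT[sp_{i+1}..ep_{i+1}]$ contains no $c$ and $[sp_i,ep_i]$ is empty (so we stop). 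Thus we always obtain a usable $k$. To locate $k^{+}$ and $k^{-}$ efficiently, I would perform a predecessor query on $E$ to identify the run of $j$, then $\rank_c$/$\select_c$ queries on $L'[1..r]$ (using the sequence representation already in Lemma~\ref{lem:rlfm}) to find the nearest run of $c$'s on each side, and a final mapping back to $\BWT$ coordinates through $E$; each such query costs $O(\log\log_w(\sigma+n/r))$, matching the per-step cost of backward search.

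The main subtlety will be verifying that this ``nearest-$c$'' handling cleanly covers the corner cases (in particular that $k^{+}$ and $k^{-}$ are in fact run boundaries in the sense needed to apply our sampling), and that the predecessor/rank/select queries on $E$ and $L'$ can all be orchestrated within the claimed time bound; beyond this, the argument is a straightforward induction along the $m$ backward-search steps, so the total running time is $O(m\log\log_w(\sigma+n/r))$ and the space overhead on the RLFM-index is $O(r)$ words, yielding both the range $[sp,ep]$ and a position $j\in[sp,ep]$ with its value $\SA[j]$, as required.
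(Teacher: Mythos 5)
Your proposal is correct and takes essentially the same approach as the paper: both maintain a toehold $(j,\SA[j])$ across the backward-search steps using $O(r)$ sampled $\SA$ values at run boundaries, together with the key observation that when $\BWT[j]\neq P[i]$ but the new range is nonempty, the range $[sp_{i+1},ep_{i+1}]$ must contain an endpoint of a run of $P[i]$'s, whose sampled value then yields the new toehold via $\LF$. The only difference is mechanical: the paper retrieves that endpoint with a single predecessor query on a per-character structure $R_{P[i]}$ keyed by $ep_{i+1}$, whereas you find the nearest $c$-run on either side of $j$ via rank/select on the run-head sequence; both fit within the claimed $O(\log\log_w(\sigma+n/r))$ per step.
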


\begin{proof}
We store a RLFM-index and predecessor structures $R_c$ storing the position in $\BWT$ of the right and left endpoints of each run of copies of $c$. Each element in $R_c$ is associated to its corresponding text position, that is, we store pairs $\langle i,\SA[i]-1 \rangle$ sorted by their first component (equivalently, we store in the predecessor structures their concatenated binary representation). These structures take a total of $O(r)$
words. 

The interval of characters immediately preceding occurrences of the empty 
string is the entire $\BWT[1..n]$, which clearly includes $P[m]$ as the last 
character in some run (unless $P$ does not occur in $T$). It follows that we 
find an occurrence of $P[m]$ in predecessor time by querying $pred(R_{P[m]},n)$.

Assume we have found the interval $\BWT[sp,ep]$ containing the 
characters immediately preceding all the occurrences of some (possibly empty) 
suffix $P[i + 1..m]$ of $P$, and we know the position and contents of some 
cell $\SA[j]$ in the corresponding interval, $sp \le j \le ep$. Since 
$\SA [\LF (j)] = \SA [j] - 1$, if $\BWT [j] = P [i]$ then, after the next 
application of $\LF$-mapping, we still know the position and value of some cell 
$\SA[j']$ corresponding to the interval $\BWT[sp',ep']$ for $P [i..m]$, namely 
$j' = \LF(j)$ and $\SA[j'] = \SA[j]-1$.

On the other hand, if $\BWT [j] \neq P [i]$ but $P$ still occurs somewhere in 
$T$ (i.e., $sp' \le ep'$), then there is at least one $P[i]$ and one non-$P[i]$
in $\BWT[sp,ep]$, and therefore the interval intersects an extreme of a run 
of copies of $P[i]$. Then, a predecessor query $pred(R_{P[i]},ep)$ gives us
the desired pair $\langle j', \SA[j']-1 \rangle$ with $sp \leq j' \leq ep$ and
$\BWT[j']=P[i]$.

Therefore, by induction, when we have computed the $\BWT$ interval for $P$, we
know the position and contents of at least one cell in the corresponding
interval in $\SA$. 

To obtain the desired time bounds, we concatenate all the universes of the $R_c$
structures into a single one of size $\sigma n$, and use a single structure 
$R$ on that universe: each $\langle x,\SA[x-1] \rangle \in R_c$ becomes $\langle x+(c-1)n, \SA[x]-1\rangle$ in $R$, and a search 
$pred(R_c,y)$ becomes $pred(R,(c-1)n+y)-(c-1)n$. Since $R$ contains $2r$ elements on
a universe of size $\sigma n$, we can have predecessor searches in time
$O(\log\log_w (n\sigma/r))$ and $O(r)$ space \cite[Thm.~14]{BN14}. This is the
same $O(\log\log_w(\sigma + n/r))$ time we obtained in Lemma~\ref{lem:rlfm} to
carry out the normal backward search operations on the RLFM-index.
\qed
\end{proof}

Lemma~\ref{lem:find_one} gives us a toe-hold in the suffix array, and we show
in this section that a toe-hold is all we need.  We first show that, 
given the position and contents of one cell of the suffix array $\SA$ of a
text $T$, we can compute the contents of the neighbouring cells in $O (\log
\log_w(n/r))$ time.  It follows that, once we have counted the occurrences of a
pattern in $T$, we can locate all the occurrences in $O (\log \log_w(n/r))$ time each.

\begin{lemma}
	\label{lem:find_neighbours}
	We can store $O (r)$ words such that, given $p$ and $\SA [p]$, we can
compute $\SA [p - 1]$ and $\SA [p + 1]$ in $O (\log \log_w (n/r))$ time.
\end{lemma}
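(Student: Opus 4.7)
My plan is to reduce the two queries to evaluations of the text-order neighbor functions $\phi(i) = \SA[\ISA[i]-1]$ and $\phi^{-1}(i) = \SA[\ISA[i]+1]$, since $\SA[p-1] = \phi(\SA[p])$ and $\SA[p+1] = \phi^{-1}(\SA[p])$. The central observation I would prove is that $\phi$ behaves as the unit-shift $i \mapsto i+1$ except at $r$ break points induced by the $\BWT$ run boundaries. Concretely, set $p = \ISA[i]$; if $\BWT[p-1] = \BWT[p]$ then by the definition of $\LF$ we have $\LF(p) = \LF(p-1)+1$, so $\ISA[i-1] = \LF(p)$ and
$$\phi(i-1) ~=~ \SA[\LF(p)-1] ~=~ \SA[\LF(p-1)] ~=~ \SA[p-1]-1 ~=~ \phi(i)-1.$$
Hence $\phi(i)=\phi(i-1)+1$ whenever $\ISA[i]$ is not the first position of a $\BWT$ run, so the only potential break points of $\phi$ are the $r$ text positions $i^* = \SA[p^*]$ with $p^*$ the start of a run. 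A symmetric statement holds for $\phi^{-1}$, with break points at the $r$ positions $\SA[q^*]$ for $q^*$ the end of a run.

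Armed with this, I would store two predecessor dictionaries of $r$ elements each over the universe $[1..n]$: $C$ keys the values $\SA[p^*]$ and associates each with $\SA[p^*-1]=\phi(\SA[p^*])$, while $C'$ keys the values $\SA[q^*]$ and associates each with $\SA[q^*+1]=\phi^{-1}(\SA[q^*])$. To answer $\SA[p-1]$ given $(p, \SA[p])$, I would locate the predecessor $i^*$ of $\SA[p]$ in $C$, read the stored $\phi(i^*)$, and return $\phi(i^*) + (\SA[p] - i^*)$; the answer for $\SA[p+1]$ is symmetric, via $C'$ and $\phi^{-1}$. Correctness follows by iterating the one-step identity across the non-critical positions $i^*+1, \ldots, \SA[p]$, none of which is a break point by choice of $i^*$.

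The total space is $O(r)$ words, and a predecessor search on $r$ keys over universe $[1..n]$ costs $O(\log\log_w(n/r))$ by~\cite[Thm.~14]{BN14}, matching the stated bound. I expect the main obstacle to be precisely the structural claim that $\phi$ has only $r$ break points, which fundamentally leverages the $\BWT$-run structure of $T$; once that is in place, the query reduces to a single predecessor lookup plus a constant-time arithmetic correction.
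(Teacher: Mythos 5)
Your proposal is correct and is essentially the paper's own proof: both store, in an $O(r)$-word predecessor structure over the text positions induced by $\BWT$ run boundaries, the text positions of the $\BWT$-neighbors at those boundaries, and answer a query by one predecessor search plus an offset correction, with correctness resting on the same observation that $\LF$ (equivalently, your map $\phi$) commutes with the unit shift inside runs --- your one-step identity is precisely Lemma~\ref{lem:phi}. The only cosmetic difference is that you keep two separate structures (run starts for $\SA[p-1]$ via $\phi$, run ends for $\SA[p+1]$ via $\phi^{-1}$), whereas the paper uses a single structure keyed at both kinds of boundaries and storing the pair $\langle\SA[q-1],\SA[q+1]\rangle$.
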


\begin{proof}
	We parse $T$ into phrases such that $T[i]$ is the first character in
a phrase if and only if $i = 1$ or $q=\SA^{-1}[i+1]$ is the first or last 
position of a run in $\BWT$ (i.e., $\BWT[q]=T[i]$ starts or ends a run).  
We store an $O (r)$-space predecessor data structure with 
$O(\log \log_w (n/r))$ query time \cite[Thm.~14]{BN14} for the starting phrase
positions in $T$ (i.e., the values $i$ just mentioned).
We also store, associated with such values $i$ in the predecessor structure,
the positions in $T$ of the characters immediately preceding and following 
$q$ in $\BWT$, that is, $N[i] = \langle \SA[q-1],\SA[q+1] \rangle$.
	
	Suppose we know $\SA [p] = k + 1$ and want to know $\SA [p - 1]$ and $\SA [p + 1]$.  This is equivalent to knowing the position $\BWT[p]=T[k]$ and wanting to know the positions in $T$ of $\BWT [p - 1]$ and $\BWT [p + 1]$.  To compute these positions, we find with the predecessor data structure the position $i$ in $T$ of the first character of the phrase containing $T [k]$, take the associated positions $N[i]=\langle x,y\rangle$, and return 
$\SA[p-1] = x+k-i$ and $\SA[p+1] =y+k-i$.

	To see why this works, let $\SA[p-1] = j+1$ and $\SA[p+1]=\ell+1$,
that is, $j$ and $\ell$ are the positions in $T$ of $\BWT[p-1]=T[j]$ and 
$\BWT[p+1]=T[\ell]$. Note that, for all $0 \le t < k-i$, $T[k-t]$ is not the 
first nor the last character of a run in $\BWT$. Thus, by definition of 
$\LF$, $\LF^t(p-1)$, $\LF^t(p)$, and $\LF^t(p+1)$, that is, the $\BWT$
positions of $T[j-t]$, $T[k-t]$, and $T[\ell-t]$, are
contiguous and within a single run, thus $T[j - t] = T[k - t] = T[\ell - t]$.
Therefore, for $t=k-i-1$, $T[j-(k-i-1)] = T[i+1] = T[\ell-(k-i+1)]$ are 
contiguous in $\BWT$, and thus a further $\LF$ step yields that
$\BWT[q]=T[i]$ is immediately preceded and followed by 
$\BWT[q-1]=T[j-(k-i)]$ and $\BWT[q+1]=T[\ell-(k-i)]$. 
That is, $N[i]=\langle \SA[q-1],\SA[q+1]\rangle = 
\langle j-(k-i)+1,\ell-(k-i)+1\rangle$
and our answer is correct.
\qed
\end{proof}

The following lemma shows that the above technique  can be generalized. The result is a space-time trade-off allowing us to list each occurrence in constant time at the expense of a slight increase in space usage. 

\begin{lemma}\label{lemma: general locate}
	Let $s>0$. We can store a data structure of $O(rs)$ words such that, given
	$\SA[p]$, we can compute $\SA[p - i]$ and $\SA[p + i]$ for $i=1,\dots, s'$ and any $s'\leq s$, in
	$O( \log\log_w(n/r) + s')$ time.
\end{lemma}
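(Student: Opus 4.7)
The plan is to generalize the phrase parse of Lemma~\ref{lem:find_neighbours} at a finer resolution, so that during the $t$ LF-steps traversed after a single predecessor query, not just $p-1,p,p+1$ but the entire window $p-s,\ldots,p+s$ remains contiguous inside one $\BWT$ run. Once that invariant is in place, each of the $2s'$ requested outputs can be read off from stored information with a single array access, giving $O(1)$ time per output on top of the one initial predecessor lookup.

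First I would redefine phrases: $T[i]$ starts a new phrase iff $i=1$ or $q=\SA^{-1}[i+1]$ lies within distance $s$ of some $\BWT$ run boundary. Each of the $O(r)$ run boundaries contaminates at most $2s+1$ $\BWT$ positions, so this gives $O(rs)$ phrase starts; I store them in a predecessor structure over $T$ with query time $O(\log\log_w(n/(rs))) = O(\log\log_w(n/r))$ via \cite[Thm.~14]{BN14}. To keep the total space $O(rs)$ rather than $O(rs^2)$, I would not store a $\Theta(s)$-sized window independently at every one of the $O(rs)$ phrase starts. Instead, samples are shared across nearby phrases: for each of the $O(r)$ run boundaries $b$ I store a single window of $\SA$-values $\SA[b-2s..b+2s]$, contributing $O(rs)$ words in total, and each phrase start only records a pointer to its anchor boundary $b$ together with the offset $\delta = q-b \in [-s,s]$, costing $O(1)$ words per phrase.

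On a query with $\SA[p]=k+1$ we do one predecessor search to obtain the phrase start $i\le k$, set $t=k-i$, read the anchor $(b,\delta)$, and return
$$\SA[p\pm j] \;=\; \SA[b+\delta\pm j] + t, \qquad j=1,\ldots,s',$$
by direct lookup in the sample at $b$ (the offset $\delta\pm j$ stays in $[-2s,2s]$, inside the stored window). The main obstacle will be establishing the contiguity invariant justifying these formulas: I would prove by induction on $t'=0,\ldots,t-1$ that $\LF^{t'}(p-s),\ldots,\LF^{t'}(p+s)$ all lie in a common $\BWT$ run, so the next LF step maps them to $2s+1$ contiguous positions. The inductive step reduces exactly to showing that $\BWT[\LF^{t'}(p)-s..\LF^{t'}(p)+s]$ contains no run boundary, which follows from the phrase definition: for $t'<t$ the $T$-position $k-t'$ is not a phrase start, so $\LF^{t'}(p)=\SA^{-1}[k-t'+1]$ is at distance strictly greater than $s$ from every run boundary. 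Summing the single predecessor cost with the $O(1)$ per-output lookups yields the claimed $O(\log\log_w(n/r)+s')$ time.
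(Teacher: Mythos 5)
Your proposal is correct and follows essentially the same route as the paper's proof: sample the $\SA$ values at the $O(rs)$ \BWT{} positions within distance $s$ of run boundaries, use a predecessor structure on the corresponding text positions, and argue that $\LF$ maps the radius-$s$ window contiguously (and decrements all $\SA$ values by one per step) until the center first comes within distance $s$ of a boundary, so the stored window plus the text-position difference $t$ yields all answers. The only differences are cosmetic packaging --- you anchor radius-$2s$ windows at each boundary and use a single predecessor structure for both directions, where the paper uses a global array $W$ with two structures $P^+$ and $P^-$, and your explicit induction over $\LF$ steps is the paper's case-(ii) argument made uniform.
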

\begin{proof}
	Consider all $\BWT$ positions $j_1 < \dots < j_t$ that are at distance at most
	$s$ from a run border (we say that characters on run borders are at distance
	$1$), and let $W[1..t]$ be an array such that $W[k]$ is the text position
	corresponding to $j_k$, for $k=1,\dots, t$. Let now $j^+_1 < \dots <
	j^+_{t^+}$ be the $\BWT$ positions having a run border at most $s$ positions
	after them, and $j^-_1 < \dots < j^-_{t^-}$ be the $\BWT$ positions having a
	run border at most $s$ positions before them. We store the text positions
	corresponding to $j^+_1 < \dots < j^+_{t^+}$ and $j^-_1 < \dots < j^-_{t^-}$
	in two predecessor structures $P^+$ and $P^-$, respectively, of size $O(rs)$.
	We store, for each $i \in P^+ \cup P^-$, its position in $W$, that
	is, $W[f(i)]=i$. 
	
	To answer queries given $\SA[p]$, we first compute its $P^+$-predecessor
	$i<\SA[p]$ in $O( \log\log_w(n/r))$ time, and retrieve $f(i)$. Then, it holds
	that $\SA[p + j] = W[f(i)+j] + (\SA[p]-i)$, for $j=0,\dots, s$. Computing $\SA[p - j]$ is symmetric (just use $P^-$ instead of $P^+$).
	
	To see why this procedure is correct, consider the range $\SA[p..p+s]$. 
	We distinguish two cases.
	
	(i) $\BWT[p..p+s]$ contains at least two distinct characters. Then,
	$\SA[p]-1$ is inside $P^+$ (because $p$ is followed by a run break at most $s$
	positions away), and is therefore the immediate predecessor of $\SA[p]$. 
	Moreover, all $\BWT$ positions $[p,p+s]$ are in $j_1, \dots, j_t$ (since
	they are at distance at most $s$ from a run break), and their corresponding text positions
	are therefore contained in a contiguous range of $W$ (i.e., $W[f(\SA[p]-1)..
	f(\SA[p]-1)+s]$). The claim follows.
	
	(ii) $\BWT[p..p+s]$ contains a single character; we say it is unary. 
	Then $\SA[p]-1 \notin P^+$, since there are no run breaks in $\BWT[p..p+s]$. 
	Moreover, by the $\LF$ formula, the $\LF$ mapping applied on the unary range
	$\BWT[p..p+s]$ gives a contiguous range $\BWT[\LF(p)..\LF(p+s)] = \BWT[\LF(p)..\LF(p)+s]$. Note that 
	this corresponds to a parallel backward step on text positions 
	$\SA[p] \rightarrow \SA[p] -1 , \dots,
	\SA[p+s] \rightarrow \SA[p+s]-1$. We iterate the application of $\LF$ until we
	end up in a range $\BWT[\LF^\delta(p)..\LF^\delta(p+s)]$ that is not unary.
	Then, $\SA[\LF^\delta(p)]-1$ is the immediate predecessor of
	$\SA[p]$ in $P^+$, and $\delta$ is their distance (minus one). This means that with a single predecessor query on $P^+$ we
	``skip'' all the unary $\BWT$ ranges $\BWT[\LF^i(p)..\LF^i(p+s)]$ for
	$i=1,\dots,\delta-1$ and, as in case (i), retrieve the contiguous range in $W$
	containing the values $\SA[p]-\delta, \dots, \SA[p+s]-\delta$ and add $\delta$ to obtain the desired $\SA$ values.
	\qed
\end{proof}	

Combining Lemmas~\ref{lem:find_one} and~\ref{lemma: general locate}, we obtain
the main result of this section. The $O(\log\log_w (n/\sigma))$
additional time spent at locating is absorbed by the counting time.

\begin{theorem}
	\label{thm:locating}
	Let $s>0$. We can store a text $T [1..n]$, over alphabet $[1..\sigma]$, in $O (rs)$ words, where $r$
	is the number of runs in the $\BWT$ of $T$, such that later, given a pattern
	$P [1..m]$, we can count the occurrences of $P$ in $T$ in $O (m \log \log_w
	(\sigma + n/r))$ time and (after counting) report their $occ$ locations in
	overall time $O ((1+\log \log_w (n/r)/s) \cdot occ)$.
\end{theorem}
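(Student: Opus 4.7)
The plan is to combine the three ingredients developed in this section. First I would store (a) the RLFM-index of Lemma~\ref{lem:rlfm} together with the predecessor structure $R$ built in Lemma~\ref{lem:find_one}, which together occupy $O(r)$ words and support counting plus recovery of a toe-hold; and (b) the structure of Lemma~\ref{lemma: general locate} with parameter $s$, which occupies $O(rs)$ words. The total space is $O(rs)$, and since $s\ge 1$ this also subsumes the $O(r)$ structures in (a).

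For counting, I would simply run the enhanced backward search of Lemma~\ref{lem:find_one} on $P$. By that lemma, this costs $O(m\log\log_w(\sigma+n/r))$ time and delivers the $\BWT$ interval $[sp..ep]$ (hence $occ=ep-sp+1$) together with one position $j\in[sp..ep]$ and its value $\SA[j]$. Note that the per-step predecessor cost $O(\log\log_w(n/r))$ incurred in $R$ is absorbed by the counting cost since $\log\log_w(n/r)\le \log\log_w(\sigma+n/r)$.

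For locating, I would start from the known pair $(j,\SA[j])$ and sweep outward in both directions using Lemma~\ref{lemma: general locate}. Concretely, maintaining a left frontier initialized at $j$ and a right frontier initialized at $j$, I would repeatedly invoke the lemma with $s'=\min(s,\text{remaining distance to that side of the interval})$ to extend the known $\SA$ values by $s'$ cells, until both frontiers reach $sp$ and $ep$ respectively. Each invocation costs $O(\log\log_w(n/r)+s')$ and outputs $s'$ new $\SA$ entries; summing over all invocations, the $s'$-terms telescope to exactly $occ-1$, while the number of invocations is $O(1+occ/s)$. Hence the total locating time is
\[
O\!\left(\left(1+\tfrac{occ}{s}\right)\log\log_w(n/r)+occ\right)\;=\;O\!\left(\left(1+\tfrac{\log\log_w(n/r)}{s}\right)\cdot occ\right),
\]
as required.

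There is no real obstacle: the only subtlety is checking that the bookkeeping of the outward sweep is correct, i.e.\ that the two frontiers never ``miss'' $sp$ or $ep$ (we cap $s'$ against the remaining distance, which is known from the counting step) and that the predecessor cost is paid only once per batch of up to $s$ reported occurrences. Both are immediate from the statement of Lemma~\ref{lemma: general locate}.
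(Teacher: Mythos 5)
Your proposal is correct and follows essentially the same route as the paper: store the structures of Lemma~\ref{lem:find_one} and Lemma~\ref{lemma: general locate}, obtain the interval together with a toe-hold $(j,\SA[j])$ during backward search, and then extend outward in batches of up to $s$ cells. One small caveat: your final simplification $O((1+occ/s)\log\log_w(n/r)+occ)=O((1+\log\log_w(n/r)/s)\cdot occ)$ silently drops the one-time additive $O(\log\log_w(n/r))$ predecessor cost, which is not dominated by the right-hand side when $0<occ<\min(s,\log\log_w(n/r))$; as the paper observes, this additive term is instead absorbed by the $O(m\log\log_w(\sigma+n/r))$ counting time, so the statement still holds.
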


In particular, we can locate in 
$O(m \log \log_w (\sigma + n/r) + occ\log\log_w(n/r))$ time and $O(r)$ space
or, alternatively, in $O(m \log \log_w (\sigma + n/r)  + occ)$ time and
$O(r\log\log_w(n/r))$ space.

\subsection{Accessing $\LCP$} \label{sec:lcp}

Lemma~\ref{lemma: general locate} can be further extended to entries of the 
$\LCP$ array, which we will use later in the article. That is, given $\SA[p]$,
we compute $\LCP[p]$ and its adjacent entries (note that we do not need to 
know $p$, but just $\SA[p]$). The result is also an 
extension of a representation by Fischer et al.~\cite{FMN09}.
In Section~\ref{sec:dlcp} we use different structures that allow
us access $\LCP[p]$ directly, without knowing $\SA[p]$.

\begin{lemma}\label{lemma: lcp}
	Let $s>0$. We can store a data structure of $O(rs)$ words such that,
	given  $\SA[p]$, we can compute $\LCP[p -
i+1]$ and $\LCP[p + i]$, for $i=1,\dots, s'$ and any $s' \le s$, in time
$O( \log\log_w(n/r) + s')$.
\end{lemma}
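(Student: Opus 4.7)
The plan is to mirror the proof of Lemma~\ref{lemma: general locate} essentially verbatim, augmenting the near-border array $W$ with a parallel array $W_{\LCP}[1..t]$ satisfying $W_{\LCP}[k]=\LCP[j_k]$, where $j_1<\dots<j_t$ are the $\BWT$ positions within distance $s$ of a run border. The predecessor structures $P^+,P^-$ and the map $f$ of that lemma are reused unchanged, so the total space stays at $O(rs)$ words.

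The central observation I need is an LCP analogue of the identity $\SA[\LF(q)]=\SA[q]-1$: whenever $\BWT[q-1]=\BWT[q]=c$, the two adjacent suffixes $T[\SA[q-1]..]$ and $T[\SA[q]..]$ are both preceded by $c$ in $T$, and since $q-1,q$ lie in the same run, $\LF(q)=\LF(q-1)+1$; because prepending the same character to two strings increases their longest common prefix by exactly one, this gives $\LCP[\LF(q)]=\LCP[q]+1$. By induction, if $\BWT[p..p+s]$ is unary and $\delta\ge 0$ is the smallest integer for which $\BWT[\LF^\delta(p)..\LF^\delta(p)+s]$ is non-unary, then $\LCP[p+j]=\LCP[\LF^\delta(p)+j]-\delta$ for every $j=1,\ldots,s$; the symmetric identity holds for the backward range $\BWT[p-s..p]$.

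Given $\SA[p]$, the algorithm is then identical to that of Lemma~\ref{lemma: general locate}: a single query $q=pred(P^+,\SA[p])$ returns $\SA[\LF^\delta(p)]-1$, from which I recover $\delta=\SA[p]-q-1$ and use $f(q)$ to locate $\LF^\delta(p)$ in the $W$-list. Since $\BWT[\LF^\delta(p)..\LF^\delta(p)+s]$ is non-unary, the positions $\LF^\delta(p),\ldots,\LF^\delta(p)+s$ all lie within distance $s$ of a run border and hence appear contiguously in $W$ (and $W_{\LCP}$), so the forward answers $\LCP[p+j]=W_{\LCP}[f(q)+j]-\delta$ for $j=1,\ldots,s'$ are read off directly, and the values $\LCP[p-j+1]$ are recovered symmetrically via $P^-$ and a backward scan of $W_{\LCP}$. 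The predecessor cost is $O(\log\log_w(n/r))$ and the scan over $W_{\LCP}$ takes $O(s')$, matching the claimed time. The only step that requires genuine care is the LCP-under-$\LF$ recurrence inside a run; once that is in place, the correctness of the predecessor lookup (the immediate predecessor of $\SA[p]$ in $P^+$ is indeed $\SA[\LF^\delta(p)]-1$ since each $\LF^k(p)$ with $k<\delta$ fails to be a forward anchor), the contiguity argument in $W$, and the space/time bounds all transfer unchanged from Lemma~\ref{lemma: general locate}.
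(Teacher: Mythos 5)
Your proposal matches the paper's proof essentially step for step: you reuse the sampling, the predecessor structures $P^+,P^-$ and the index map $f$ of Lemma~\ref{lemma: general locate}, store the $\LCP$ values at the sampled $\BWT$ positions (your $W_{\LCP}$ is the paper's $\LCP'$), and justify the unary-range skip via the identity $\LCP[\LF(q)]=\LCP[q]+1$ when $\BWT[q-1]=\BWT[q]$, correctly restricting the forward direction to offsets $j\ge 1$ just as the paper does. The argument and the space/time accounting are correct, so this is the same approach as the paper's.
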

\begin{proof}
	The proof follows closely that of Lemma \ref{lemma: general locate},
except that now we sample $\LCP$ entries corresponding to suffixes following
sampled $\BWT$ positions.
	Let us define $j_1 < \dots < j_t$, $j^+_1 < \dots < j^+_{t^+}$, and
	$j^-_1 < \dots < j^-_{t^-}$, as well as the predecessor structures
	$P^+$ and $P^-$, exactly as in the proof of Lemma~\ref{lemma: general
locate}.
	We store $\LCP'[1..t] = \LCP[j_1], \dots, \LCP[j_t]$.
	We also store, for each $i \in P^+ \cup P^-$, its corresponding position $f(i)$ in $\LCP'$, that
	is, $\LCP'[f(i)] = \LCP[\ISA[i+1]]$. 
	
	To answer queries given $\SA[p]$, we first compute its $P^+$-predecessor
	$i<\SA[p]$ in $O( \log\log_w(n/r))$ time, and retrieve $f(i)$. Then, it holds
	that $\LCP[p + j] = \LCP'[f(i)+j] - (\SA[p]-i-1)$, for $j=1,\dots, s$. Computing $\LCP[p - j]$ for $j=0,\dots,s-1$ is symmetric (just use $P^-$ instead of $P^+$).
	
	To see why this procedure is correct, consider the range $\SA[p..p+s]$. 
	We distinguish two cases.
	
	(i) $\BWT[p..p+s]$ contains at least two distinct characters. Then,
	as in case (i) of Lemma~\ref{lemma: general locate},
	$\SA[p]-1$ is inside $P^+$ 
	and is therefore the immediate predecessor $i=\SA[p]-1$ of $\SA[p]$. 
	Moreover, all $\BWT$ positions $[p,p+s]$ are in $j_1, \dots, j_t$,
	and
	therefore values $\LCP[p..p+s]$ are explicitly stored in a contiguous
range in $\LCP'$ (i.e., $\LCP'[f(i)..f(i)+s]$). Note that $(\SA[p]-i)=1$, so $\LCP'[f(i)+j] - (\SA[p]-i-1) = \LCP'[f(i)+j]$ for $j=0,\dots,s$. The claim follows.
	
	(ii) $\BWT[p..p+s]$ contains a single character; we say it is unary. 
	Then we reason exactly as in case (ii) of Lemma~\ref{lemma: general locate} to 
	define $\delta$ so that
	$i'=\SA[\LF^\delta(p)]-1$ is the immediate predecessor of
	$\SA[p]$ in $P^+$ 
	and, as in case (i)
of this proof, 
	retrieve the contiguous range $\LCP'[f(i')..f(i')+s]$
	containing the values $\LCP[\LF^\delta(p)..\LF^\delta(p+s)]$. Since
the skipped $\BWT$ ranges are unary, it is then not hard to see that
$\LCP[\LF^\delta(p+j)] = \LCP[p+j] + \delta$ for $j=1, \dots, s$ (note that we
do not include $s=0$ since we cannot exclude that, for some $i<\delta$,
$LF^i(p)$ is the first position in its run). From the equality $\delta = \SA[p] - i' - 1 =
\SA[p]-\SA[\LF^\delta(p)]$ (that is, $\delta$ is the distance between $\SA[p]$ and its predecessor minus one or, equivalently, the number of $\LF$ steps virtually performed), we then compute $\LCP[p+j] = \LCP'[f(i')+j]-\delta$ for $j=1, \dots, s$.  \qed
	
\end{proof}


\section{Extracting Substrings and Computing Fingerprints} \label{sec:extract}

In this section we consider the problem of extracting arbitrary substrings of
$T[1..n]$. Though an obvious solution is to store a grammar-compressed version
of $T$ \cite{BLRSRW15}, little is known about the relation between the size $g$
of the smallest grammar that generates $T$ (which nevertheless is NP-hard to
find \cite{CLLPPSS05}) and the number of runs $r$ in its $\BWT$ (but see
Section~\ref{sec:grammar-access}). Another
choice is to use block trees \cite{BGGKOPT15}, which require $O(z\log(n/z))$
space, where $z$ is the size of the Lempel-Ziv parse \cite{LZ76} of $T$. Again,
$z$ can be larger or smaller than $r$ \cite{Pre16}.

Instead, we introduce a novel representation that uses $O(r\log(n/r))$ space 
and can retrieve any substring of length $\ell$ from $T$ in time 
$O(\log(n/r)+\ell\log(\sigma)/w)$. This is similar (though incomparable) with
the $O(\log(n/g) + \ell/\log_\sigma n)$ time that could be obtained with grammar
compression \cite{BLRSRW15,BPT15}, and with the
$O(\log(n/z)+\ell/\log_\sigma n)$ that could be obtained with block
trees. 
In Section~\ref{sec:grammar-access} we obtain a run-length context-free grammar
of asymptotically the same size, $O(r\log(n/r))$, which extracts substrings in 
time $O(\log(n/r)+\ell)$. The bounds we obtain in this section are thus better.
Also, as explained in the Introduction, the $O(\log(n/r))$ additive
penalty is near-optimal in general.

We first prove
an important result in Lemma~\ref{lemma: primary occurrence}: any desired 
substring of $T$ has a {\em primary} occurrence, that is, one overlapping a 
border between phrases. The property is indeed stronger than in alternative 
formulations that hold for Lempel-Ziv parses \cite{KU96} or grammars 
\cite{CNfi10}: If we choose a primary occurrence overlapping at its leftmost
position, then all the other occurrences of the string suffix must be preceded
by the same prefix. This stronger property is crucial to design an optimal
locating procedure in Section~\ref{sec: optimal locate} and an optimal counting procedure in Section~\ref{sec: optimal count}. The weaker property,
instead, is sufficient to design in Theorem~\ref{thm:extract} a data 
structure reminiscent of block trees \cite{BGGKOPT15} for extracting substrings
of $T$, which needs to store only some text around phrase borders. Finally, in
Lemma~\ref{lemma: KR} we show that a Karp-Rabin fingerprint 
\cite{KR87,gagie2014lz77} of any substring of $T$ can be obtained in time 
$O(\log(n/r))$, which will also be used in Section~\ref{sec: optimal locate}.

\begin{definition}\label{def: sampled position}
	We say that a text character $T[i]$ is \emph{sampled} if and only if $T[i]$ is the first or last character in its $\BWT$ run. 
\end{definition}

\begin{definition}
	We say that a text substring $T[i..j]$ is \emph{primary} if and only if it contains at least one sampled character. 
\end{definition}

\begin{lemma}\label{lemma: primary occurrence}
	Every text substring $T[i..j]$  has a primary occurrence $T[i'..j'] = T[i..j]$ such that, for some $i'\leq p \leq j'$, the following hold:
	\begin{enumerate}
		\item $T[p]$ is sampled.
		\item $T[i'],\dots,T[p-1]$ are not sampled.
		\item Every text occurrence of $T[p..j']$ is always preceded by the string $T[i'..p-1]$.
	\end{enumerate} 
\end{lemma}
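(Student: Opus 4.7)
The plan is to track, for each $k \in [i..j+1]$, the size of the SA-range $R_k$ of $T[k..j]$, where by convention $R_{j+1} = [1..n]$ is the SA-range of the empty suffix. Because a shorter suffix has at least as many occurrences as a longer one, $|R_k|$ is non-decreasing in $k$. I would then let $k^* \in [i..j]$ be the \emph{largest} index with $|R_{k^*}| = |R_i|$; intuitively, $T[k^*..j]$ is the longest suffix of $T[i..j]$ all of whose occurrences in $T$ must extend leftward to a full occurrence of $T[i..j]$.

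The first substantive step is to exhibit, somewhere in $R_{k^*+1}$, a BWT position at the extreme of its run and carrying the symbol $T[k^*]$. By the maximality of $k^*$ we have $|R_{k^*+1}| > |R_{k^*}|$; when $k^* = j$ this uses $R_{j+1} = [1..n]$ together with the fact that the sentinel $\$$ forces $T$ to contain at least two distinct characters, so $n > |R_j|$. The occurrences of $T[k^*..j]$ correspond exactly to those positions in $R_{k^*+1}$ whose $\BWT$ value equals $T[k^*]$, so $\BWT$ is not constant on $R_{k^*+1}$. Hence some $q \in R_{k^*+1}$ has $\BWT[q] = T[k^*]$ and lies at the boundary between a $T[k^*]$-run and an adjacent run, making it a run extreme.

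Setting $m = \SA[q] - 1$, we have $\ISA[m+1] = q$, so $T[m]$ is sampled, and $T[m..m+j-k^*] = T[k^*..j]$. Because $|R_{k^*}| = |R_i|$, every occurrence of $T[k^*..j]$ in $T$ is preceded by the fixed string $T[i..k^*-1]$; applying this to position $m$ yields an occurrence $T[i'..j'] := T[m-k^*+i..m+j-k^*]$ of $T[i..j]$ that contains the sampled position $m$ (the left extension cannot run off the beginning of $T$, since otherwise that occurrence of $T[k^*..j]$ would not extend to an occurrence of $T[i..j]$, contradicting $|R_{k^*}|=|R_i|$).

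To finish, I would choose $p$ to be the leftmost sampled position in $[i'..j']$, which exists since $m$ qualifies. Condition (1) holds by choice, and (2) holds because $p$ is leftmost. Writing $p = i' + (\ell - i)$ with $\ell \in [i..k^*]$, the monotonicity of $|R_k|$ sandwiches $|R_i| \le |R_\ell| \le |R_{k^*}| = |R_i|$, so $|R_\ell| = |R_i|$; consequently every occurrence of the string $T[\ell..j] = T[p..j']$ is preceded by $T[i..\ell-1] = T[i'..p-1]$, which is (3). I expect the main obstacle to be pinning down the right definition of $k^*$ and recognising that the mismatch between $R_{k^*}$ and $R_{k^*+1}$ forces a BWT run boundary carrying precisely the symbol $T[k^*]$; once that is in hand, the leftward extension via the forced prefix and the leftmost-sampled choice of $p$ are mechanical.
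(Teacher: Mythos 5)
Your proof is correct, but it takes a genuinely different route from the paper's. The paper argues by induction on $j-i$, mirroring backward search: it assumes an occurrence of $T[i+1..j]$ with a witness $p$ satisfying (1)--(3), and does a three-way case analysis on the $\BWT$ interval of $T[i+1..j]$ (all characters equal to $T[i]$ and interior to their runs: keep $p$; all equal but the interval touches a run extreme: reset $p$ to that extreme; mixed characters: a run of $T[i]$ must begin or end inside, reset $p$ there), so the invariant is carried along exactly as in the toe-hold argument of Lemma~\ref{lem:find_one}. You instead argue globally: you track the occurrence counts $|R_k|$ of the suffixes $T[k..j]$, take the largest $k^*$ with $|R_{k^*}|=|R_i|$ (the count equality makes the shift map a bijection, so every occurrence of $T[k^*..j]$ is forced to extend to a full occurrence of $T[i..j]$), and use the strict jump $|R_{k^*}|<|R_{k^*+1}|$ to force both $T[k^*]$ and a different symbol inside $\BWT$ on the range of $T[k^*+1..j]$, hence a run extreme carrying $T[k^*]$ and thus a sampled occurrence of $T[k^*..j]$; extending left and taking the \emph{leftmost} sampled position $p$, you recover property (3) by the same counting argument at the offset $\ell$ of $p$, since $|R_\ell|$ is sandwiched between $|R_i|$ and $|R_{k^*}|$. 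This "leftmost sampled plus sandwich" step is a nice replacement for the paper's inductive maintenance of properties (2)--(3), and your $k^*$ makes the extremal meaning of $p$ explicit; what the paper's version buys in exchange is that it tracks a concrete occurrence step by step through backward search, which is the form actually exploited algorithmically. One small repair: your formula $m=\SA[q]-1$ breaks exactly when $\BWT[q]=\$$, which can only happen in the corner case $T[k^*]=\$$, i.e.\ $k^*=j=n$; there the text position corresponding to $q$ is $n$ (not $\SA[q]-1$), $T[n]$ is sampled as a singleton run, and the remainder of your argument goes through verbatim with $m=n$.
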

\begin{proof}
	We prove the lemma by induction on $j-i$. If $j-i=0$, then $T[i..j]$ is a single character. Every character has a sampled occurrence $i'$ in the text, therefore the three properties trivially hold for $p=i'$.
	
	Let $j-i > 0$. By the inductive hypothesis, $T[i+1..j]$ has an occurrence $T[i'+1..j']$ satisfying the three properties for some $i'+1\leq p \leq j'$. Let $[sp,ep]$ be the $\BWT$ range of $T[i+1..j]$. We distinguish three cases. 
	
	(i) All characters in $\BWT[sp,ep]$ are equal to $T[i]=T[i']$ and are not the first or last in their run. Then, we leave $p$ unchanged. 
	$T[p]$ is sampled by the inductive hypothesis, so Property $1$ still holds.  
	Also, $T[i'+1], \dots, T[p-1]$ are not sampled by the inductive hypothesis, and $T[i']$ is not sampled by assumption, so Property $2$ still holds. 
	By the inductive hypothesis, every text occurrence of $T[p..j']$ is always preceded by the string $T[i'+1..p-1]$. Since all characters in $\BWT[sp,ep]$ are equal to $T[i]=T[i']$, Property $3$ also holds for $T[i..j]$ and $p$.
	
	(ii) All characters in $\BWT[sp,ep]$ are equal to $T[i]$ and either $\BWT[sp]$ is the first character in its run, or $\BWT[ep]$ is the last character in its run (or both). Then, we set $p$ to the text position corresponding to $sp$ or $ep$, depending on which one is sampled (if both are sampled, choose $sp$). The three properties then hold trivially for $T[i..j]$ and $p$. 
	
	(iii) $\BWT[sp,ep]$ contains at least one character $c\neq T[i]$. Then, there must be a run of $T[i]$'s ending or beginning in $\BWT[sp,ep]$, meaning that there is a $sp \leq q \leq ep$ such that $\BWT[q] = T[i]$ and the text position $i'$ corresponding to $q$ is sampled. We then set $p = i'$. Again, the three properties hold trivially for $T[i..j]$ and $p$. \qed
\end{proof}

Lemma \ref{lemma: primary occurrence} has several important implications. We start by using it to build a data structure supporting efficient text extraction queries. In Section \ref{sec: optimal locate} we will use it to locate pattern occurrences in optimal time. 

\begin{theorem}\label{thm:extract}
Let $T[1..n]$ be a text over alphabet $[1..\sigma]$.	
We can store a data structure of $O(r\log(n/r))$ words supporting the
extraction of any length-$\ell$ substring of $T$ in 
$O(\log(n/r)+\ell\log(\sigma)/w)$ time. 
\end{theorem}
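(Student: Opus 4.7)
The plan is to build a block-tree-like data structure $\mathcal{T}$ in the spirit of Belazzougui et al.~\cite{BGGKOPT15}, but whose base parse is the partition of $T$ induced by the at most $2r$ sampled positions of Definition~\ref{def: sampled position}, rather than a Lempel-Ziv parse. The property that makes the construction go through is Lemma~\ref{lemma: primary occurrence}: every substring of $T$ has a primary occurrence, i.e., one that overlaps a sampled position.

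I would cut $T$ at the top level into $\Theta(r)$ blocks of length $\Theta(n/r)$ and then, for a constant branching factor $\tau\ge 2$, recursively subdivide each block into $\tau$ equal-sized children, stopping when the block length reaches $b^{\ast}=\Theta(w/\log\sigma)$; this yields $L=\Theta(\log(n/r))$ levels. A block is called \emph{explicit} if it contains at least one sampled position and a \emph{pointer block} otherwise. Only explicit blocks are subdivided. Each pointer block $B$ instead stores a reference to a primary occurrence of the string spelled by $B$; such an occurrence exists by Lemma~\ref{lemma: primary occurrence}, has length $|B|$, and meets a sampled position, so it fits inside at most two consecutive same-level blocks, at least one of which is explicit. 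Along with that reference, $B$ stores the offset of its content inside the target pair. Each explicit leaf stores its $b^{\ast}$ characters packed into $O(1)$ machine words.

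For the space bound, because each sampled position lies in exactly one block per level, each level has at most $2r$ explicit blocks, and since only explicit blocks spawn children each level contains at most $\tau\cdot 2r = O(r)$ blocks in total. Over $L=\Theta(\log(n/r))$ levels, the tree has $O(r\log(n/r))$ nodes, each of $O(1)$ words, matching the claim. To extract $T[i..i+\ell-1]$, I descend from the root to the bottom-level block covering position $i$: at an explicit block I move to the child containing the current position, and at a pointer block I follow its reference, mapping the query into the explicit target block that covers the corresponding portion of the primary occurrence, and continue the descent from there. This costs $O(1)$ per level, hence $O(\log(n/r))$ in total.

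The main obstacle is keeping the subsequent linear read additive rather than multiplicative in $\log(n/r)$. I would handle it as in~\cite{BGGKOPT15}: after the descent, I scan $\lceil \ell/b^{\ast}\rceil = O(\ell\log(\sigma)/w)$ consecutive bottom-level blocks. To serve a pointer leaf encountered in this scan in constant time, I store inside it, in addition to the offset into its two targets, a small precomputed packed buffer of the characters straddling the boundary between those targets, so that one indirection produces the leaf's $b^{\ast}$ characters in $O(1)$ time. Combined with the initial descent, this yields the advertised $O(\log(n/r)+\ell\log(\sigma)/w)$ extraction time within $O(r\log(n/r))$ words.
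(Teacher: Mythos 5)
There are two genuine gaps in your construction, and both are exactly the points where the paper's design deviates from a standard block tree.

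First, your claimed extraction time does not follow from your structure. You stop subdividing at leaves of length $b^{\ast}=\Theta(w/\log\sigma)$ and only explicit blocks are subdivided, so the text is \emph{not} covered by bottom-level blocks: a position lying under a pointer block at some intermediate level has no bottom-level ancestor-chain below that block, and that pointer block may cover a huge stretch of text. Hence ``scanning $\lceil \ell/b^{\ast}\rceil$ consecutive bottom-level blocks'' is not an available operation; to read through such a region you must follow its pointer and descend again, paying $O(\log(n/r))$ per chunk of $\Theta(w/\log\sigma)$ characters. That gives the usual block-tree bound $O\bigl((1+\ell\log(\sigma)/w)\log(n/r)\bigr)$, with a multiplicative rather than additive logarithm, and your $O(1)$-word buffer at pointer \emph{leaves} does not help because the problematic pointer blocks are not leaves of length $b^{\ast}$. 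The paper gets the additive bound by a different mechanism: the recursion bottoms out at blocks of length $\Theta(\alpha)$ with $\alpha=w\log(n/r)/\log\sigma$, stored explicitly in packed form (still $O(r\log(n/r))$ words since there are $O(r)$ of them), so a single $O(\log(n/r))$-time descent delivers $\alpha$ characters, and an $\ell$-character query costs $O((\ell/\alpha+1)\log(n/r))=O(\log(n/r)+\ell\log(\sigma)/w)$.

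Second, your $O(1)$-per-level descent through a pointer block is not justified. The primary occurrence guaranteed by Lemma~\ref{lemma: primary occurrence} overlaps a sampled position, so of the (at most) two same-level blocks it spans you can only conclude that \emph{one} is explicit; the offset you need may well fall in the other, non-explicit block, which is itself a pointer block, forcing another same-level hop, and nothing bounds the length of this chain (in the block tree of \cite{BGGKOPT15} this is prevented by pointing to the \emph{leftmost} occurrence and marking the blocks that contain it, an argument you do not have once ``explicit'' means ``contains a sampled position''). The paper sidesteps this by not refining a partition at all: level-$(i{+}1)$ blocks are windows of length $s_{i+1}$ placed on both sides of every sampled position, together with three overlapping half-blocks, so the primary occurrence of a half-block is always entirely contained in next-level blocks that are guaranteed to exist, and the overlaps ensure that any query chunk of length at most $s_{i+1}/4$ lies inside a \emph{single} next-level half-block — one pointer per level, deterministically. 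To repair your proof you would need both of these ingredients (sample-centered, overlapping blocks at each level, and leaves of size $\Theta(w\log(n/r)/\log\sigma)$), at which point it becomes the paper's construction.
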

\begin{proof}

We describe a data structure supporting the extraction of $\alpha = \frac{w\log(n/r)}{\log\sigma}$ packed characters in $O(\log(n/r))$ time. To extract a text substring of length $\ell$ we divide it into $\lceil\ell/\alpha\rceil$ blocks and extract each block with the proposed data structure. Overall, this will take $O((\ell/\alpha+1)\log(n/r)) = O(\log(n/r) + \ell\log(\sigma)/w)$ time. 

Our data structure is stored in $O(\log(n/r))$ levels. For simplicity, we assume that $r$ divides $n$ and that $n/r$ is a power of two. 
The top level (level 0) is special: we divide the text into $r$ blocks $T[1..n/r]\,T[n/r+1..2n/r]\dots T[n-n/r+1..n]$ of size $n/r$. For levels $i>0$, we let $s_i = n/(r\cdot 2^{i-1})$ and, for every sampled position $j$ (Definition~\ref{def: sampled position}), we consider the two non-overlapping blocks of length $s_i$: $X_{i,j}^1 = T[j-s_i..j-1]$ and  $X_{i,j}^2 = T[j..j+s_i-1]$. 
Each such block $X^k_{i,j}$, for $k=1,2$, is composed of two half-blocks, $X^k_{i,j} = X^k_{i,j}[1..s_i/2]\,X^k_{i,j}[s_i/2+1..s_i]$. 
We moreover consider three additional consecutive and non-overlapping half-blocks, starting in the middle of the first, $X^1_{i,j}[1..s_i/2]$, and ending in the middle of the last, $X^2_{i,j}[s_i/2+1..s_i]$, of the 4 half-blocks just described: $T[j-s_i+s_i/4..j-s_i/4-1],\ T[j-s_i/4..j+s_i/4-1]$, and $T[j+s_i/4..j+s_i-s_i/4-1]$.

From Lemma \ref{lemma: primary occurrence}, blocks at level $0$ and each half-block at level $i> 0$ have a primary occurrence at level $i+1$. Such an occurrence can be fully identified by the coordinate $\langle \mathit{off}, j'\rangle$, for $0< \mathit{off} \leq s_{i+1}$ and $j'$ sampled position, indicating that the occurrence starts at position $j'-s_{i+1}+ \mathit{off}$.

Let $i^*$ be the smallest number such that $s_{i^*} < 4\alpha = 
\frac{4w\log(n/r)}{\log\sigma}$. Then $i^*$ is the last level of our structure.
At this level, we explicitly store a packed string with the characters of the blocks. This uses in total $O(r \cdot s_{i^*}\log(\sigma)/w) = O(r\log(n/r))$ words of space. All the blocks at level 0 and half-block at levels $0<i<i^*$ store instead the coordinates $\langle \mathit{off},j'\rangle$ of their primary occurrence in the next level. At level $i^*-1$, these coordinates point inside the strings of explicitly stored characters.

Let $S= T[i..i+\alpha-1]$ be the text substring to be extracted. Note that we can assume $n/r \geq \alpha$; otherwise all the text can be stored in plain packed form using $n\log(\sigma)/w < \alpha r\log(\sigma)/w \in O(r\log(n/r))$ words and we do not need any data structure. It follows that $S$ either spans two blocks at level 0, or it is contained in a single block. The former case can be solved with two queries of the latter, so we assume, without losing generality, that $S$ is fully contained inside a block at level $0$. To retrieve $S$, we map it down to the next levels (using the stored coordinates of primary occurrences of half-blocks) as a contiguous text substring as long as this is possible, that is, as long as it fits inside a single half-block. Note that, thanks to the way half-blocks overlap, this is always possible as long as $\alpha \leq s_i/4$. By definition, then, we arrive in this way precisely to level $i^*$, where characters are stored explicitly and we can return the packed text substring. 
\qed

\end{proof}

Using a similar idea, we can compute the Karp-Rabin fingerprint of any text substring in just $O(\log(n/r))$ time. This will be used in Section \ref{sec: optimal locate} to obtain our optimal-time locate solution.

\begin{lemma}\label{lemma: KR}
	We can store a data structure of $O(r\log(n/r))$ words supporting computation of the Karp-Rabin fingerprint of any text substring in $O(\log(n/r))$ time. 
\end{lemma}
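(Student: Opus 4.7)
The plan is to augment the hierarchical block-tree data structure of Theorem~\ref{thm:extract} with Karp-Rabin fingerprints at every stored block and half-block, together with a few global auxiliary arrays. Specifically, for each level-0 block and each half-block at levels $i>0$, I will store the fingerprint $\kappa$ of its content modulo a suitable prime $p$. At the top level I will additionally store the prefix fingerprints $\kappa(T[1..k(n/r)])$ at the level-0 block boundaries ($k=0,\dots,r$), the powers $c^{k(n/r)}\bmod p$ for $k=0,\dots,r$, and the powers $c^{2^j}\bmod p$ for $j$ up to $\lceil\log(n/r)\rceil$. With this material, the power $c^\ell\bmod p$ for any $0\le\ell\le n$ can be assembled in $O(\log(n/r))$ time, and the total extra space stays within $O(r\log(n/r))$ words.

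For a query $\kappa(T[i..j])$, I will reduce it to two prefix-fingerprint queries via the identity $\kappa(T[i..j]) = \kappa(T[1..j]) - \kappa(T[1..i-1])\cdot c^{j-i+1}$. To compute $\kappa(T[1..k])$, first write $k=m(n/r)+r'$ with $0\le r'<n/r$ and read the stored prefix fingerprint up to position $m(n/r)$ in $O(1)$; what remains is the fingerprint of a prefix of length $r'$ of the $(m+1)$-th level-0 block. This residual is handled by a recursive descent through the block tree: at a node representing a block of size $s$ consisting of two halves of size $s/2$, if the requested prefix length $\ell\le s/2$ I recurse into the first half with the same length; otherwise I combine the stored fingerprint of the first half with the fingerprint of a prefix of length $\ell-s/2$ of the second half, obtained recursively. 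Each half is itself a half-block of the structure, and following its primary-occurrence pointer turns it into a block at the next level, halving $s$ and preserving the invariant. After $O(\log(n/r))$ such steps we reach the bottom level $i^*$, whose blocks are stored explicitly as packed strings and whose prefix fingerprint of length at most $4\alpha$ can be computed in $O(\alpha\log(\sigma)/w) = O(\log(n/r))$ time by a direct scan. Each level contributes $O(1)$ work beyond the pointer hop and one invocation of the power routine, giving a total of $O(\log(n/r))$ time per prefix query and hence per arbitrary-substring query.

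The main technical obstacle will be the step from level $i$ to level $i+1$: the primary occurrence of a half-block at level $i$ is a substring of length $s_{i+1}$ whose starting offset within the level-$(i+1)$ window need not coincide with any canonical half-block boundary. However, the seven half-blocks available at each window (four original and three shifted by $s_i/4$) were introduced in Theorem~\ref{thm:extract} precisely to provide enough alignment slack, and I expect that a suitable pair of adjacent half-blocks at level $i+1$ can always be selected so that their concatenation exactly coincides with the primary occurrence; the identifiers of this pair can be stored as part of the primary-occurrence record at level $i$ with $O(1)$ extra space per node. With that choice, each descent step costs $O(1)$ and the overall $O(\log(n/r))$ bound follows.
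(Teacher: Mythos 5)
Your reduction hinges on a claim that is not true and that you yourself flag as the ``main technical obstacle'': that the primary occurrence of a half-block at level $i$ (a string of length $s_{i+1}$) can always be chosen, or re-expressed, as the exact concatenation of two adjacent half-blocks at level $i+1$. The primary occurrence is simply wherever the string happens to occur straddling a sampled position (Lemma~\ref{lemma: primary occurrence}); its starting offset inside the level-$(i+1)$ window around the sampled position $j'$ is arbitrary, whereas the seven half-blocks of Theorem~\ref{thm:extract} start only at offsets that are multiples of $s_{i+1}/4$. The shifted half-blocks were introduced for a different purpose: they guarantee that a \emph{short} query substring (length at most $s_i/4$), after being mapped into the primary occurrence, can be re-contained in a \emph{single} half-block of the next level; they cannot force a full-length occurrence of size $s_{i+1}$ to align with half-block boundaries. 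Consequently, ``following the primary-occurrence pointer turns the half into a block at the next level'' fails: a prefix of a half-block maps to a prefix of a substring that starts at an arbitrary offset of the next-level window, and your invariant (``fingerprint of a prefix of a block'') is not preserved, so the single-path descent breaks at the very first level transition.

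The paper's proof circumvents exactly this misalignment, with non-overlapping blocks only. Writing the occurrence of a half-block as $X^1_{i+1,j'}[L..s_{i+1}]\,X^2_{i+1,j'}[1..R]$, it stores with each half-block not only the pointer $\langle j',L\rangle$ but also the fingerprints of the two pieces $\kappa(X^1_{i+1,j'}[L..s_{i+1}])$ and $\kappa(X^2_{i+1,j'}[1..R])$. A prefix query of length $R'$ on the half-block then either covers the whole first piece (if $R'\ge s_{i+1}-L+1$), in which case it reduces to a prefix query on the block $X^2_{i+1,j'}$, or it lies inside the first piece, in which case one computes it from the stored $\kappa(X^1_{i+1,j'}[L..s_{i+1}])$ by ``dividing out'' $\kappa(X^1_{i+1,j'}[L+R'..s_{i+1}])$, which is a \emph{suffix} of a block and is obtained recursively. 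This prefix/suffix-of-a-block invariant is what makes the descent a single path of length $O(\log(n/r))$ despite the arbitrary offsets; it also forces storing inverse exponents $\sigma^{-\ell}\bmod q$, which your power tables do not provide. Your top-level reduction to prefix fingerprints of $T$ and your power bookkeeping are fine, but without either a correct alignment argument (which does not exist) or the two-piece fingerprint mechanism, the core recursion of your proof does not go through.
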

\begin{proof}
	
We store a data structure with $O(\log(n/r))$ levels, similar to the one of
Theorem~\ref{thm:extract} but with two non-overlapping children blocks. Assume
again that $r$ divides $n$ and that $n/r$ is a power of two. The top level
0 divides the text into $r$ blocks $T[1..n/r]\,T[n/r+1..2n/r]\dots
T[n-n/r+1..n]$ of size $n/r$. For levels $i>0$, we let $s_i = n/(r\cdot
2^{i-1})$ and, for every sampled position $j$, we consider the two
non-overlapping blocks of length $s_i$: $X_{i,j}^1 = T[j-s_i..j-1]$ and
$X_{i,j}^2 = T[j..j+s_i-1]$. Each such block $X^k_{i,j}$ is composed of two
half-blocks, $X^k_{i,j} = X^k_{i,j}[1..s_i/2]\,X^k_{i,j}[s_i/2+1..s_i]$. As in
Theorem~\ref{thm:extract}, blocks at level $0$ and each half-block at level $i> 0$ have a primary occurrence at level $i+1$, meaning that such an occurrence can be written as  $X_{i+1,j'}^1[L..s_{i+1}]\,X_{i+1,j'}^2[1..R]$ for some $1\leq L,R\leq s_{i+1}$, and some sampled position $j'$ (the special case where the half-block is equal to $X_{i+1,j'}^2$ is expressed as $L=s_{i+1}+1$ and
$R=s_{i+1}$). 

We associate with every block at level 0 and every half-block at level $i>0$ the following information: its Karp-Rabin fingerprint $\kappa$, the coordinates $\langle j',L\rangle$ of its primary occurrence in the next level, and the Karp-Rabin fingerprints  $\kappa(X_{i+1,j'}^1[L..s_{i+1}])$ and $\kappa(X_{i+1,j'}^2[1..R])$ of (the two pieces of) its occurrence. At level $0$, we also store the Karp-Rabin fingerprint of every text prefix ending at block boundaries, $\kappa(T[1..jr])$ for $j=1,\ldots,n/r$. At the last level, where blocks are of length 1, we only store their Karp-Rabin fingerprint (or we may compute them on the fly).
	
	To answer queries $\kappa(T[i..j])$ quickly, the key point is to show that computing the Karp-Rabin fingerprint of a prefix or a suffix of a block translates into the same problem (prefix/suffix of a block) in the next level, and therefore leads to a single-path descent in the block structure. To prove this, consider computing the fingerprint of the prefix $\kappa(X_{i,j}^{k}[1..R'])$ of some block (computing suffixes is symmetric). Note that we explicitly store $\kappa(X_{i,j}^{k}[1..s_i/2])$, so we can consider only the problem of computing the fingerprint of a prefix of a half-block, that is, we assume $R'\leq s_i/2=s_{i+1}$ (the proof is the same for the right half of $X_{i,j}^{k}$). Let $X_{i+1,j'}^1[L..s_{i+1}]\,X_{i+1,j'}^2[1..R]$ be the occurrence of the half-block in the next level. We have two cases. (i) $R'\geq s_{i+1}-L+1$. Then, $X_{i,j}^{k}[1..R'] = X_{i+1,j'}^1[L..s_{i+1}]\,X_{i+1,j'}^2[1..R' -  (s_{i+1}-L+1)]$. Since we explicitly store the fingerprint $\kappa(X_{i+1,j'}^1[L..s_{i+1}])$, the problem reduces to computing the fingerprint of the block prefix $X_{i+1,j'}^2[1..R' -  (s_{i+1}-L+1)]$. (ii)  $R' < s_{i+1}-L+1$. Then, $X_{i,j}^{k}[1..R'] = X_{i+1,j'}^1[L..L+R'-1]$. Even though this is not a prefix nor a suffix of a block, note that $X_{i+1,j'}^1[L..s_{i+1}] = X_{i+1,j'}^1[L..L+R'-1]\,X_{i+1,j'}^1[L+R'..s_{i+1}]$. We explicitly store the fingerprint of the left-hand side of this equation, so the problem reduces to finding the fingerprint of $X_{i+1,j'}^{k}[L+R'..s_{i+1}]$, which is a suffix of a block. From both fingerprints we can compute $\kappa(X_{i+1,j'}^1[L..L+R'-1])$.
	
Note that, in order to combine fingerprints, we also need the corresponding exponents and their inverses (i.e., $\sigma^{\pm \ell}\!\! \mod q$, where $\ell$ is the string length and $q$ is the prime used in $\kappa$). We store the exponents associated with the lengths of the explicitly stored fingerprints at all levels. The remaining exponents needed for the calculations can be retrieved by combining exponents from the next level (with a plain modular multiplication) in the same way we retrieve fingerprints by combining partial results from next levels.

	To find the fingerprint of any text substring $T[i..j]$, we proceed as follows. If $T[i..j]$ spans at least two blocks at level 0, then $T[i..j]$ can be factored into $(a)$ a suffix of a block, $(b)$ a central part (possibly empty) of full blocks, and $(c)$ a prefix of a block. Since at level $0$ we store the Karp-Rabin fingerprint of every text prefix ending at block boundaries, the fingerprint of $(b)$ can be found in constant time. Computing the fingerprints of $(a)$ and $(c)$, as proved above, requires only a single-path descent in the block structure, taking $O(\log(n/r))$ time each. If $T[i..j]$ is fully contained in a block at level 0, then we map it down to the next levels until it spans two blocks. From this point, the problem translates into a prefix/suffix problem, which can be solved in $O(\log(n/r))$ time. \qed		
\end{proof}


\section{Locating in Optimal Time}\label{sec: optimal locate}

In this section we show how to obtain optimal locating time in the unpacked
--- $O(m+occ)$ --- and packed --- $O(m\log(\sigma)/w+occ)$ --- scenarios, by
using $O(r\log (n/r))$ and $O(r w \log_\sigma(n/r))$ space, respectively.
To improve upon the times of Theorem \ref{thm:locating} we have to abandon the
idea of using the RLFM-index to find the toe-hold suffix array entry, as
counting on the RLFM-index takes $\omega(m)$ time. We will use a different 
machinery that, albeit conceptually based on the $\BWT$ properties, does not
use it at all.
We exploit the idea that some pattern occurrence must cross a run boundary to
build a structure that only finds pattern suffixes starting at a run boundary. By
sampling more text suffixes around those run boundaries, we manage to find one
pattern occurrence in time $O(m+\log(n/r))$, Lemma~\ref{lem:one pattern}. We then
show how the $\LCP$ information we obtained in Section~\ref{sec:lcp} can be
used to extract all the occurrences in time $O(m+occ+\log(n/r))$, in
Lemma~\ref{lem:additive log}. Finally, by adding a structure that finds
faster the patterns shorter than $\log(n/r)$, we obtain the unpacked result in
Theorem~\ref{thm:optimal time}. We use the same techniques, but with larger 
structures, in the packed setting, Theorem~\ref{thm:optimal time packed}. 

We make use of Lemma \ref{lemma: primary occurrence}:  if the pattern
$P[1..m]$ occurs in the text then there must exist an integer $1\leq p \leq m$
such that (1) $P[p..m]$ prefixes a text suffix $T[i+p-1..]$, where $T[i+p-1]$
is sampled, (2) none of the characters $T[i], \dots, T[i+p-2]$ are sampled,
and (3) $P[p..m]$ is always preceded by $P[1..p-1]$ in the text. It follows
that $T[i..i+m-1] = P$. This implies that we can locate a pattern occurrence
by finding the longest pattern suffix prefixing some text suffix that starts
with a sampled character. 
Indeed, those properties are preserved if we enlarge the sampling.

\begin{lemma}\label{lem: extended sampling}
Lemma \ref{lemma: primary occurrence} still holds if we add arbitrary 
sampled positions to the original sampling.
\end{lemma}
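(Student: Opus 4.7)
My plan is to reuse the very occurrence $T[i'..j']$ (together with its distinguished index, which I will call $p_0$) that Lemma~\ref{lemma: primary occurrence} already produces for the original sampling, and only slide the distinguished index leftward to absorb the newly added samples. Writing $\mathcal{S}'\supseteq \mathcal{S}$ for the extended sampling, I would define
\[
p^*\;=\;\max\bigl\{\,k\in[i',p_0] \,:\, T[k]\in\mathcal{S}'\,\bigr\}.
\]
This $p^*$ is well defined because $p_0$ itself lies in $\mathcal{S}\subseteq \mathcal{S}'$, so the set being maximized is non-empty; in particular $i'\le p^*\le p_0$. The claim to prove is then that the same occurrence $T[i'..j']$, paired with $p^*$ in place of $p_0$, witnesses the three conditions of Lemma~\ref{lemma: primary occurrence} for the extended sampling $\mathcal{S}'$.

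Properties~1 and 2 should fall directly out of the construction. Property~1 holds because $T[p^*]\in\mathcal{S}'$ by the very definition of $p^*$. Property~2 holds because the maximality of $p^*$ in $[i',p_0]$ forces every index in $[i',p^*-1]$ to lie outside $\mathcal{S}'$; the sub-case $p^*=i'$ just makes this range empty, hence vacuous.

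The one step that genuinely needs an argument, and the only point where I expect real work, is re-establishing Property~3 for the shortened constrained prefix $T[i'..p^*-1]$. Here the plan is to observe that any text occurrence of $T[p^*..j']$ starting at some position $q$ contains, at internal offset $p_0-p^*$, an occurrence of the original $T[p_0..j']$ starting at $q+(p_0-p^*)$. The original Property~3 applied to this inner occurrence pins the $p_0-i'$ characters immediately before $q+(p_0-p^*)$ to be exactly $T[i'..p_0-1]$. Reading only the leftmost $p^*-i'$ of those characters and re-indexing through the shift $p_0-p^*$ then yields that the $p^*-i'$ characters immediately preceding $q$ spell out $T[i'..p^*-1]$, which is Property~3 with respect to $\mathcal{S}'$. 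The main obstacle here is really just keeping index bookkeeping straight; structurally, shortening the constrained prefix can only weaken Property~3, so no further change to the occurrence itself is needed and adding samples is absorbed by this single leftward slide of the distinguished index.
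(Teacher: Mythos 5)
Your overall plan---keep the very occurrence $T[i'..j']$ produced by Lemma~\ref{lemma: primary occurrence}, move the distinguished index to a sampled position of the extended sampling, and transfer Property~3 by a shift argument---is exactly the paper's argument. But your formal definition of $p^*$ breaks it: you set $p^*=\max\bigl\{k\in[i',p_0] : T[k]\in\mathcal{S}'\bigr\}$, and since $p_0$ itself belongs to $\mathcal{S}\subseteq\mathcal{S}'$, this maximum is always $p_0$. So the promised ``leftward slide'' never happens, and Property~2 fails for $\mathcal{S}'$ whenever one of the newly added samples lies strictly inside $[i',p_0-1]$. Your justification of Property~2 is correspondingly backwards: maximality of $p^*$ in $[i',p_0]$ only guarantees that no position in $(p^*,p_0]$ is sampled; it says nothing about $[i',p^*-1]$, which is what Property~2 constrains.

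The repair is to take the minimum instead: $p^*=\min\bigl\{k\in[i',p_0] : T[k]\in\mathcal{S}'\bigr\}$, i.e., the leftmost position of the occurrence that is sampled under $\mathcal{S}'$ (it is at most $p_0$ because $p_0$ is sampled). Then Property~1 holds because $p^*$ is sampled, Property~2 holds by minimality, and your Property~3 argument---an occurrence of $T[p^*..j']$ at position $q$ contains an occurrence of $T[p_0..j']$ at $q+(p_0-p^*)$, whose preceding $p_0-i'$ characters are pinned to $T[i'..p_0-1]$ by the original Property~3, so in particular the $p^*-i'$ characters before $q$ spell $T[i'..p^*-1]$---goes through verbatim. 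This is essentially the paper's own proof, which phrases it on the pattern: with $p\le p'$, every occurrence of $P[p..m]$ is preceded by $P[1..p-1]$ because it contains an occurrence of $P[p'..m]$. With this one-word change (min for max) your proof is correct and coincides with the paper's.
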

\begin{proof}
If the leftmost sampled position $T[i+p-1]$ in the pattern occurrence belongs 
to the original sampling, then the properties hold by Lemma 
\ref{lemma: primary occurrence}. If, on the other hand, $T[i+p-1]$ is one of 
the extra samples we added, then let $i'+p'-1$ be the position of the original 
sampling satisfying the three properties, with $p'\geq p$. Properties (1) and 
(2) hold for the sampled position $i+p-1$ by definition. By property (3) 
applied to $i'+p'-1$, we have that $P[p'..m]$ is always preceded by 
$P[1..p'-1]$ in the text. Since $p'\geq p$, it follows that also $P[p..m]$ is 
always preceded by $P[1..p-1]$ in the text, that is, property (3) holds for
position $i+p-1$ as well. 
\qed
\end{proof}

We therefore add to the sampling the $r$ equally-spaced extra text positions 
$i\cdot (n/r)+1$, for $i=0, \dots, r-1$; we now have at most $3r$ sampled positions.
The task of finding a pattern occurrence satisfying properties (1)--(3) on the extended sampling can be efficiently solved by inserting all the
text suffixes starting with a sampled character 
in a data structure supporting fast prefix search operations and taking $O(r)$
words (e.g., a z-fast trie~\cite{belazzougui2010fast}). We make use of the following lemma.

\begin{lemma}[\cite{gagie2014lz77,belazzougui2010fast}] \label{lemma: z-fast}
	Let $\mathcal S$ be a set of strings and assume we have some data structure supporting extraction of any length-$\ell$ substring of strings in $\mathcal S$ in time $f_e(\ell)$ and computation of the Karp-Rabin fingerprint of any substring of strings in $\mathcal S$ in time $f_h$.
	We can build a data structure of $O(|\mathcal S|)$ words such that, later, we can solve the following problem in $O(m\log(\sigma)/w + t( f_h +\log m ) + f_e(m))$ time: given a pattern $P[1..m]$ and $t>0$ suffixes $Q_1,\dots,Q_t$ of $P$, discover the ranges of strings in (the lexicographically-sorted) $\mathcal S$ prefixed by $Q_1,\dots,Q_t$.
\end{lemma}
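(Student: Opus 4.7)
The plan is to reduce the problem to $t$ weak prefix searches on a z-fast trie~\cite{belazzougui2010fast} built over $\mathcal{S}$, followed by a lightweight verification phase that upgrades them to strong prefix searches. A z-fast trie over $|\mathcal{S}|$ strings occupies $O(|\mathcal{S}|)$ words and resolves one weak prefix search with $O(\log |Q|)$ hash-table probes, provided that the Karp-Rabin fingerprint of any prefix of the query $Q$ is available in $O(1)$ time. The crux is to organize the work so that the common-suffix structure of $Q_1,\ldots,Q_t$ amortizes both the preprocessing on the pattern and the extraction cost on the $\mathcal{S}$-side.

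First, I would preprocess $P$ so that $\kappa(P[i..j])$ is obtainable in $O(1)$ for all $1\le i\le j\le m$. Scanning $P$ in packed form, updating a rolling Karp-Rabin hash word-by-word, and tabulating the required powers of the hash base modulo the fingerprint prime, costs $O(m\log(\sigma)/w)$ time and $O(m)$ space (the latter being charged to the working memory of the query). Because each $Q_k$ is a suffix of $P$, any fingerprint of a prefix of any $Q_k$ reduces to an $O(1)$-time query on this oracle.

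Second, I would run the z-fast trie weak prefix search for each $Q_k$; thanks to the constant-time fingerprint oracle, every weak search performs $O(\log |Q_k|) = O(\log m)$ hash-table probes and returns one candidate lexicographic range of $\mathcal{S}$. I would then verify each candidate by comparing $\kappa(Q_k)$ (free from the preprocessing) against $\kappa(S[1..|Q_k|])$ for an arbitrary representative string $S$ in the candidate range; the latter fingerprint is the only access to $\mathcal{S}$ in this phase and costs $f_h$ per query, so the weak searches plus hash-based verifications contribute $O(t(f_h + \log m))$.

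Finally, to make this verification deterministic, I would perform a single full extraction of a length-$m$ substring from $\mathcal{S}$, costing $f_e(m)$, and compare it symbol-wise against $P$ in time $O(m\log(\sigma)/w)$. This extracted string acts as a common deterministic witness: because every $Q_k$ is a suffix of $P$, a collision among the $t$ fingerprint checks would force a collision on a substring of the extracted string, which has been verified symbol-by-symbol, so a single extraction suffices for all $t$ queries simultaneously. Summing all contributions yields the claimed bound $O(m\log(\sigma)/w + t(f_h+\log m) + f_e(m))$, with space dominated by the z-fast trie at $O(|\mathcal{S}|)$ words. The main obstacle is precisely this last step: avoiding one extraction per query and pushing the $f_e(m)$ term outside the $t$-fold summation, which is where the assumption that the $Q_k$ share a common super-pattern $P$ is essential.
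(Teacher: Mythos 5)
Your high-level plan (z-fast trie weak prefix searches, $O(1)$-time fingerprints of pattern suffixes, per-candidate fingerprint checks against $\mathcal S$, and a single $f_e(m)$ extraction to avoid $t$ extractions) is the same as the paper's, and your time accounting matches the claimed bound. The gap is in the verification phase, which is exactly the delicate part. First, you never arrange for $\kappa$ to be collision-free among equal-length substrings of the indexed strings; the paper fixes such a function at construction time (in $O(n\log n)$ expected time) and extends it to arbitrary lengths via the paired power-of-two fingerprint $\kappa'$. Your argument that ``a collision among the $t$ fingerprint checks would force a collision on a substring of the extracted string, which has been verified symbol-by-symbol'' is a non sequitur without that property: verifying one witness character by character says nothing about whether two \emph{distinct} strings (one of them a substring of the witness) share a fingerprint. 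Without a collision-free $\kappa$ on text substrings the structure is only Monte Carlo, whereas the lemma is used precisely to obtain a deterministic, cannot-fail index.

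Second, even granting a collision-free $\kappa$ on text substrings, your single-witness certification breaks when the longest candidate is a false positive. Your per-candidate check compares $\kappa(Q_k)$ with $\kappa(S_k[1..|Q_k|])$, i.e., a pattern-versus-text comparison; collision-freeness only applies if $Q_k$ itself is known to occur in the text. You derive this from $Q_k$ being a suffix of $P$ and $P$ matching the extracted witness --- but if the symbol-wise comparison of the witness fails partway (the longest candidate does not actually match, e.g.\ because that suffix of $P$ prefixes nothing in $\mathcal S$), then the shorter suffixes $Q_k$ lie precisely in the \emph{unverified tail}, are not certified to be text substrings, and their checks can silently pass by collision, so you may report wrong ranges. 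The paper's proof closes this hole differently: it verifies deterministically, by text-versus-text $\kappa'$ comparisons (where collision-freeness does apply), that the surviving candidate strings $t_1,\dots,t_j$ form a chain under the suffix relation, and then performs the single extraction on the longest survivor $t_f$, computes the longest common \emph{suffix} $s$ of $Q_f$ and $t_f$, and reports exactly the non-discarded candidates with $|Q_i|\le s$. That chain-plus-common-suffix logic is what makes one extraction suffice in all cases, including partial failures; your proposal needs an analogous mechanism (or the full argument of Gagie et al.) to be correct.
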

\begin{proof}
	Z-fast tries \cite[App.~H.3]{belazzougui2010fast} already solve the \emph{weak} part of the lemma in $O(m\log(\sigma)/w + t\log m)$ time. By \emph{weak} we mean that the returned answer for suffix $Q_i$ is not guaranteed to be correct if $Q_i$ does not prefix any string in $\mathcal S$: we could therefore have false positives among the answers, but false negatives cannot occur. A procedure for deterministically discarding false positives has already been proposed \cite{gagie2014lz77} and requires extracting substrings and their fingerprints from $\mathcal S$. We describe this strategy  in detail in order to analyze its time complexity in our scenario.
	
	First, we require the Karp-Rabin function $\kappa$ to be collision-free between equal-length text substrings whose length is a power of two. We can find such a function at index-construction time in $O(n\log n)$ expected time and $O(n)$ space \cite{bille2014time}. 
	We extend the collision-free property to pairs of equal-letter strings of general length switching to the hash function $\kappa'$ defined as $\kappa'(T[i..i+\ell-1]) = \langle \kappa(T[i..i+2^{\lfloor \log_2 \ell \rfloor}-1]), \kappa(T[i+\ell-2^{\lfloor \log_2 \ell \rfloor}..i+\ell-1]) \rangle$. 
	Let $Q_1,\dots, Q_j$ be the pattern suffixes for which the prefix search found a 
	candidate node. Order the pattern suffixes so that $|Q_1| < \dots < |Q_j|$, that is, $Q_i$ is a suffix of $Q_{i'}$ whenever $i<i'$. Let moreover $v_1, \dots, v_j$ be the candidate nodes (explicit or implicit) of the z-fast trie such all substrings below them are prefixed by $Q_1, \dots, Q_j$ (modulo false positives), respectively, and let $t_i = string(v_i)$ be the substring read from the root of the trie to $v_i$. Our goal is to discard all nodes $v_k$ such that $t_k \neq Q_k$.
	
	We compute the $\kappa'$-signatures of all candidate pattern suffixes $Q_1, \dots, Q_t$ in $O(m\log(\sigma)/w + t)$ time. 
	We proceed in rounds. At the beginning, let $a = 1$ and $b = 2$. At each round, we perform the following checks:
	\begin{enumerate}
		\item If $\kappa'(Q_a) \neq \kappa'(t_a)$: discard $v_a$ and set $a\leftarrow a+1$ and $b\leftarrow b+1$.
		\item If $\kappa'(Q_a) = \kappa'(t_a)$: let $R$ be the length-$|t_a|$ suffix of $t_b$, i.e. $R=t_b[|t_b|-|t_a|+1..|t_b|]$. We have two sub-cases:
		\begin{enumerate}
			\item $\kappa'(Q_a) = \kappa'(R)$. Then, we set $b\leftarrow b+1$ and $a$ to the next integer $a'$ such that $v_{a'}$ has not been discarded.
			\item $\kappa'(Q_a) \neq \kappa'(R)$. Then, discard $v_b$ and set $b\leftarrow b+1$.
		\end{enumerate}
		\item If $b=j+1$: let $v_f$ be the last node that was not discarded. Note that $Q_f$ is the longest pattern suffix that was not discarded; other non-discarded pattern suffixes are suffixes of $Q_f$. We extract $t_f$. Let $s$ be the length of the longest common suffix between $Q_f$ and $t_f$. We report as a true match all nodes $v_i$ that were not discarded in the above procedure and such that $|Q_i|\leq s$. 
	\end{enumerate}
	
	Intuitively, the above procedure is correct because we deterministically check that text substrings read from the root to the candidate nodes form a monotonically increasing sequence according to the suffix relation: $t_i \subseteq_{suf} t_{i'}$ for $i<i'$ (if the relation fails at some step, we discard the failing node). Comparisons to the pattern are delegated to the last step, where we explicitly compare the longest matching pattern suffix with $t_f$. For a full formal proof, see Gagie et al.~\cite{gagie2014lz77}.
	
	For every candidate node we compute a $\kappa'$-signature from the set of strings ($O(f_h)$ time). For the last candidate, we extract a substring of length at most $m$ ($O(f_e(m))$ time) and compare it with the longest candidate pattern suffix ($O(m\log(\sigma)/w)$ time). There are at most $t$ candidates, so the verification process takes $O(m\log(\sigma)/w + t\cdot f_h + f_e(m))$. Added to the time spent to find the candidates in the z-fast trie, we obtain the claimed bounds.
\qed
\end{proof}

In our case, we use the results stated in Theorem~\ref{thm:extract}
and Lemma~\ref{lemma: KR} to extract text substrings and their fingerprints,
so we get $f_e(m) = O(\log(n/r)+m\log(\sigma)/w)$ and $f_h = O(\log(n/r))$. 
Moreover note that, by the way we added the $r$ equally-spaced extra text samples, if $m\geq n/r$ then the position $p$ satisfying Lemma \ref{lem: extended sampling} must occur in the prefix of length $n/r$ of the pattern.
It follows
that, for long patterns, it is sufficient to search the prefix data structure for only the $t=n/r$ longest pattern suffixes. We can therefore solve the problem stated in Lemma \ref{lemma: z-fast} in time 
$O(m\log(\sigma)/w + \min(m, n/r) (\log(n/r)+\log m))$. 
Note that, while the fingerprints are obtained with
a randomized method, the resulting data structure offers deterministic
worst-case query times and cannot fail.

To further speed up operations, for every sampled character $T[i]$ we insert in
$\mathcal S$ the text suffixes $T[i-j..]$ for $j=0..\tau-1$, for some parameter $\tau$ that we determine later. This increases the size of the prefix-search structure to $O(r\,\tau)$ (excluding the components for extracting substrings and fingerprints), but in exchange it is sufficient to search only for
aligned pattern suffixes of the form $P[\ell\cdot\tau+1..m]$, for $\ell=0\ldots
\lceil \min(m,n/r)/\tau\rceil -1$, to find any primary occurrence: to find the longest
suffix of $P$ that prefixes a string in $\mathcal S$, we  keep an array $B[1..|\mathcal S|]$ storing the shift relative to each element in $\mathcal S$; for every sampled $T[i]$ and $j=0\ldots\tau-1$, if $k$ is the rank of $T[i-j..]$ among all suffixes in $\mathcal S$, then $B[k]=j$. We build a constant-time range minimum data structure on $B$, which requires only $O(|\mathcal S|) = O(r\,\tau)$ bits \cite{FH11}. 
Let $[L,R]$ be the lexicographical range of suffixes in $\mathcal S$ prefixed by the longest aligned suffix of $P$ that has occurrences in $\mathcal S$. With a range minimum query on $B$ in the interval $[L,R]$ we find a text suffix with minimum shift, thereby matching the longest suffix of $P$. 
By Lemma \ref{lem: extended sampling}, if $P$ occurs in $T$ then the remaining 
prefix of $P$ appears to the left of the longest suffix found. However, if $P$ does not occur in $T$ this is not the case. We therefore verify the candidate occurrence of $P$ using Theorem~\ref{thm:extract} in time $f_e(m)=
O(\log(n/r)+m\log(\sigma)/w)$.

Overall, we find one pattern occurrence in $O(m\log(\sigma)/w + (\min(m, n/r)/\tau+1)(\log m+\log(n/r)))$ time. By setting $\tau=\log(n/r)$, we obtain the following result.

\begin{lemma} \label{lem:one pattern}
We can find one pattern occurrence in time $O(m+\log(n/r))$ with a structure
using $O(r\log (n/r))$ space.
\end{lemma}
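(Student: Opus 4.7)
The plan is to instantiate the construction developed in the paragraphs immediately above with $\tau = \log(n/r)$ and verify that the claimed bounds drop out. The index would consist of three components, each of size $O(r\log(n/r))$: the text-extraction structure of Theorem~\ref{thm:extract}; the Karp--Rabin fingerprint structure of Lemma~\ref{lemma: KR}; and a z-fast trie (Lemma~\ref{lemma: z-fast}) built over the string set $\mathcal S$ consisting of the suffixes $T[i-j..]$ for every sampled position $T[i]$ (with the sampling of Lemma~\ref{lemma: primary occurrence} augmented by the $r$ equally spaced positions $i\cdot(n/r)+1$) and every shift $0 \le j < \tau$, decorated with the constant-time range-minimum structure over the shift array $B$ described above.

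To locate one occurrence of $P[1..m]$, I would query the z-fast trie once per aligned pattern suffix $P[\ell\tau+1..m]$, for $\ell = 0,\dots, \lceil \min(m,n/r)/\tau \rceil -1$, take the longest such suffix whose lexicographic range in $\mathcal S$ is non-empty, run a range-minimum query on $B$ over that range to pick a candidate with minimum shift, and finally verify the candidate by reading the corresponding text region with Theorem~\ref{thm:extract}. Lemma~\ref{lem: extended sampling} guarantees that if $P$ occurs in $T$ at all, then some aligned suffix is long enough for the verification to succeed, and by construction the verifying extraction reaches back at least $\tau-1$ characters before the matched sampled position, so the minimum-shift candidate captures a true occurrence of $P$.

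Plugging $t = O(\min(m,n/r)/\tau)$, $f_h = O(\log(n/r))$, and $f_e(m)=O(\log(n/r)+m\log(\sigma)/w)$ into Lemma~\ref{lemma: z-fast} and adding the verification call to Theorem~\ref{thm:extract}, the cost is
\[
O\!\left(\frac{m\log\sigma}{w} + \left(\frac{\min(m,n/r)}{\tau}+1\right)\bigl(\log m + \log(n/r)\bigr)\right).
\]
With $\tau=\log(n/r)$ and $w = \Omega(\log\sigma)$, the packed term $m\log(\sigma)/w$ is $O(m)$, and the $\log(n/r)$ summand inside the parenthesis contributes $\min(m,n/r)+\log(n/r) = O(m+\log(n/r))$. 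The step I expect to require real care, and which is the main obstacle, is bounding the contribution $(\min(m,n/r)/\log(n/r))\cdot \log m$: for this I would use that $(\log x)/x$ is decreasing for $x \ge e$, so $(n/r)\log m/\log(n/r) \le m$ whenever $m \ge n/r \ge e$, while $\log m \le \log(n/r)$ in the opposite regime. Together these give the claimed time $O(m+\log(n/r))$, and summing the three components yields the space bound $O(r\log(n/r))$.
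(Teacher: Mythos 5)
Your proposal is correct and follows essentially the same route as the paper: the lemma's actual proof is just the instantiation $\tau=\log(n/r)$ in the bound $O(m\log(\sigma)/w + (\min(m,n/r)/\tau+1)(\log m+\log(n/r)))$ derived from Lemma~\ref{lemma: z-fast} with $f_h=O(\log(n/r))$ and $f_e(m)=O(\log(n/r)+m\log(\sigma)/w)$, plus exactly your case split --- $\log m\le\log(n/r)$ when $m<n/r$, and the monotonicity of $x/\log x$ (equivalently, your decreasing $(\log x)/x$) when $m\ge n/r$. Your phrasing of why the minimum-shift candidate is a true occurrence is a little loose (the paper's argument is property~(3) of Lemma~\ref{lem: extended sampling} applied to the longest pattern suffix starting at a sampled position, with the final verification catching the non-occurring case), but that machinery is set up before the lemma and does not affect the time/space analysis, which you carry out as in the paper.
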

\begin{proof}
	If $m<n/r$, then it is easy to verify that $O(m\log(\sigma)/w+(\min(m, n/r)/\tau+1)(\log m+\log(n/r))) = O(m+\log(n/r))$. If $m\geq n/r$, the running time is $O(m\log(\sigma)/w + ((n/r)/\log(n/r)+1)\log m)$. The claim follows by noticing that $(n/r)/\log(n/r) = O(m/\log m)$, as $x/\log x = \omega(1)$. 
\qed
\end{proof}

If we choose, instead, $\tau =  w\log_\sigma (n/r)$, we can approach optimal-time locate in the packed setting.

\begin{lemma} \label{lem:one pattern packed}
	We can find one pattern occurrence in time $O(m\log(\sigma)/w+\log (n/r))$ with a structure
	using $O(rw\log_\sigma (n/r))$ space.
\end{lemma}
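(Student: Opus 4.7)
The plan is to repeat the construction of Lemma~\ref{lem:one pattern} essentially verbatim, but with the parameter $\tau = w\log_\sigma(n/r)$ in place of $\tau = \log(n/r)$. For each of the $O(r)$ sampled text positions $T[i]$, the set $\mathcal S$ then contains the $\tau$ suffixes $T[i-j..]$ for $j = 0,\dots,\tau-1$, giving $|\mathcal S| = O(r\tau) = O(rw\log_\sigma(n/r))$, which matches the claimed space. The z-fast trie on $\mathcal S$, the shift array $B$ with its constant-time range-minimum index, the extractor of Theorem~\ref{thm:extract}, and the fingerprint oracle of Lemma~\ref{lemma: KR} are all unchanged.

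Since only the aligned pattern suffixes $P[\ell\tau+1..m]$ need to be probed, the general time bound $O(m\log(\sigma)/w + (\min(m,n/r)/\tau + 1)(\log m + \log(n/r)))$ derived before Lemma~\ref{lem:one pattern} still applies, and the calculation splits into two cases. If $m < n/r$, then $\log m + \log(n/r) = O(\log(n/r))$ and, since $\tau = w\log(n/r)/\log\sigma$, we have $(m/\tau)\log(n/r) = m\log\sigma/w$; together with the $1\cdot\log(n/r)$ contribution this gives $O(m\log\sigma/w + \log(n/r))$. If $m \ge n/r$, the dominant term $((n/r)/\tau)\log m$ equals $(n/r)\log\sigma\log m/(w\log(n/r))$, and the same inequality used in Lemma~\ref{lem:one pattern} --- namely $(n/r)\log m/\log(n/r)\le m$, from the monotonicity of $x/\log x$ together with $m\ge n/r$ --- reduces this to $O(m\log\sigma/w)$; the residual $\log m$ coming from the ``$+1$'' is absorbed into $O(m\log\sigma/w + \log(n/r))$ in the same way as the residual $\log m$ was absorbed into $O(m+\log(n/r))$ in Lemma~\ref{lem:one pattern}.

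The only conceptual change with respect to Lemma~\ref{lem:one pattern} is the choice of $\tau$: enlarging the sampling by a factor $w/\log\sigma$ precisely compensates for the $\log\sigma/w$ gap between the unpacked time $O(m)$ and the packed time $O(m\log\sigma/w)$, so no new machinery is required. The only step worth double-checking is the algebraic case analysis above, and in particular that the slack in the monotonicity argument for $m \ge n/r$ still suffices once one extra $\log\sigma/w$ factor is stripped from both sides; this is routine, since the inequality is independent of the packing factor.
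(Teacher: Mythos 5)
There is a genuine gap in your handling of the case $m \ge n/r$, and it is exactly the point where this lemma differs from Lemma~\ref{lem:one pattern}. The monotonicity argument correctly bounds the term $((n/r)/\tau)\log m$ by $O(m\log(\sigma)/w)$, but the residual additive $\log m$ coming from the ``$+1$'' is \emph{not} absorbed the way it was in the unpacked case: there one only needed $\log m = O(m)$, which is trivial, whereas here you need $\log m = O(m\log(\sigma)/w + \log(n/r))$, which can fail. For instance, take $\sigma=2$, $r=\Theta(n)$ (so $\log(n/r)=O(1)$ and the condition $m\ge n/r$ is easily met) and $m=\Theta(w)$: then $m\log(\sigma)/w=O(1)$ and $\log(n/r)=O(1)$, yet $\log m=\Theta(\log w)$, so the claimed bound $O(m\log(\sigma)/w+\log(n/r))$ is violated. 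Your closing remark that the inequality ``is independent of the packing factor'' is precisely where the argument breaks: stripping a $\log(\sigma)/w$ factor from the target bound does matter for this lone $\log m$ term.

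The paper closes this gap with one extra idea that your proposal is missing. When $m\ge n/r$ the running time is $O(m\log(\sigma)/w+\log m)$, and the only problematic regime is $m\log(\sigma)/w = o(\log m)$. In that regime $m/\log m = o(w/\log\sigma)$, and since $x/\log x$ is increasing and $m \ge n/r$, also $(n/r)/\log(n/r) = o(w/\log\sigma)$, i.e., $n = o(rw\log_\sigma(n/r))$. Hence the space budget $O(rw\log_\sigma(n/r))$ is $\omega(n)$ words, and within it one can simply add a classical (uncompressed) packed-pattern-matching structure of $O(n)$ words that reports one occurrence in $O(m\log(\sigma)/w)$ time, which is then used in this regime. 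Up to that point your construction (setting $\tau=w\log_\sigma(n/r)$, reusing the z-fast trie, the shift array with range minima, Theorem~\ref{thm:extract} and Lemma~\ref{lemma: KR}, and the case $m<n/r$) matches the paper; you need to add this fallback structure, or some equivalent argument, to finish the case $m\ge n/r$.
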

\begin{proof}
We follow the proof of Lemma \ref{lem:one pattern}. The main difference is
that, when $m \ge n/r$, we end up with time $O(m\log(\sigma)/w + \log m)$, which
is $O(m)$ but not necessarily $O(m\log(\sigma)/w)$. However, if $m\log(\sigma)/w =
o(\log m)$, then $m/\log m = o(w/\log\sigma)$ and thus $(n/r)/\log(n/r) =
o(w/\log\sigma)$. The space we use, $O(rw\log_\sigma(n/r))$, is therefore
$\omega(n)$, within which we can include a classical structure that finds one
occurrence in $O(m\log(\sigma)/ w)$ time (see, e.g., Belazzougui et 
al.~\cite[Sec.~7.1]{belazzougui2010fast}).
\qed
\end{proof}

Let us now consider how to find the other occurrences.
Note that, differently from Section~\ref{sec:locate}, at this point we know the
position of one pattern occurrence but we do not know its relative position in
the suffix array nor the $\BWT$ range of the pattern. In other words, we can
extract adjacent suffix array entries using Lemma \ref{lemma: general locate},
but we do not know where we are in the suffix array. More critically, we
do not know when to stop extracting adjacent suffix array entries.
We can solve this problem using $\LCP$ information extracted with Lemma~\ref{lemma: lcp}: it is sufficient to continue extraction of candidate occurrences and corresponding $\LCP$ values (in both directions) as long as the $\LCP$ is greater than or equal to $m$. It follows that, after finding the first occurrence of $P$, we can locate the remaining ones in $O(occ+\log\log_w(n/r))$ time using Lemmas \ref{lemma: general locate} and \ref{lemma: lcp} (with $s=\log\log_w(n/r)$).  
This yields two first results with a logarithmic additive term over the optimal time.

\begin{lemma} \label{lem:additive log}
We can find all the $occ$ pattern occurrences in time $O(m + occ + \log(n/r))$
with a structure using $O(r\log (n/r))$ space.
\end{lemma}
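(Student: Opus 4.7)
The plan is to reduce to the two main building blocks already established in this section: finding a single occurrence, and batched access to adjacent $\SA$ and $\LCP$ cells. First I would invoke Lemma~\ref{lem:one pattern} to obtain the value $\SA[p]$ for some cell $p$ in the range $[sp..ep]$ of occurrences of $P$, in time $O(m+\log(n/r))$ and space $O(r\log(n/r))$. The remaining task is to enumerate the other $occ-1$ values in $\SA[sp..ep]$ within $O(occ)$ additional time (up to a $\log(n/r)$ additive term).

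For the enumeration, I would instantiate the structures of Lemmas~\ref{lemma: general locate} and~\ref{lemma: lcp} with parameter $s=\log\log_w(n/r)$, so that each batch of $s'\le s$ adjacent $\SA$ and $\LCP$ entries can be retrieved in $O(\log\log_w(n/r)+s')$ time. Both structures occupy $O(rs)=O(r\log\log_w(n/r))$ words, which is dominated by $O(r\log(n/r))$. Starting from $\SA[p]$, I would extend in both directions, batch by batch, using the key observation that $q\in[sp..ep]$ if and only if $\LCP[j]\ge m$ for every $j$ strictly between $p$ and $q$ (on the appropriate side), since $P$ has length $m$ and the suffix at $\SA[p]$ starts with $P$. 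Each direction is halted the first time $\LCP$ drops below $m$.

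The accounting is then straightforward: reading the $occ$ relevant entries across at most $O(\lceil occ/s\rceil+1)$ batches gives $O(occ+\log\log_w(n/r))$ total time for the enumeration (the initial predecessor cost of each batch is absorbed by its $s$ returned entries, except for one partial batch per direction). Added to the first phase, this yields $O(m+\log(n/r))+O(occ+\log\log_w(n/r))=O(m+occ+\log(n/r))$, matching the claim; the space remains $O(r\log(n/r))$.

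The main obstacle I anticipate is a bookkeeping one rather than a conceptual one: Lemmas~\ref{lemma: general locate} and~\ref{lemma: lcp} take as input the \emph{value} $\SA[p]$ (not the index $p$), so after each batch I must carefully feed the last returned $\SA$ value into the next call and keep the $\SA$ and $\LCP$ extensions synchronised. Once that alignment is in place, correctness of the stopping rule — namely that $\SA[sp..ep]$ is exactly the maximal interval around $p$ on which all internal $\LCP$ values are $\ge m$, cut off by $\LCP[sp]<m$ on the left and $\LCP[ep+1]<m$ on the right (with boundary conventions for $sp=1$ and $ep=n$) — follows immediately from the definition of $\LCP$ and the length of $P$.
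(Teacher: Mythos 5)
Your proposal is correct and follows essentially the same route as the paper: find one occurrence via Lemma~\ref{lem:one pattern}, then extend in both directions with Lemmas~\ref{lemma: general locate} and~\ref{lemma: lcp} instantiated with $s=\log\log_w(n/r)$, stopping as soon as an $\LCP$ value drops below $m$, which gives $O(occ+\log\log_w(n/r))$ extra time within $O(r\log(n/r))$ space. The synchronisation/stopping details you flag are exactly the (minor) bookkeeping the paper also relies on, so nothing further is needed.
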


\begin{lemma} \label{lem:additive log packed}
	We can find all the $occ$ pattern occurrences in time $O(m\log(\sigma)/w + occ + \log (n/r))$
	with a structure using $O(rw\log_\sigma (n/r))$ space.
\end{lemma}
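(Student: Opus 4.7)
The plan is to mimic the proof of Lemma \ref{lem:additive log}, swapping in the packed variant for the single-occurrence step. First I would invoke Lemma \ref{lem:one pattern packed} to locate one primary occurrence of $P$ in time $O(m\log(\sigma)/w + \log(n/r))$ using $O(rw\log_\sigma(n/r))$ space. This already contributes the $\log(n/r)$ additive term in the target bound, and the space budget here dominates the rest.

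Next, once the text position of one occurrence is known, I would apply Lemmas \ref{lemma: general locate} and \ref{lemma: lcp} with parameter $s=\log\log_w(n/r)$. These give me a structure of $O(r\log\log_w(n/r))$ words with which I can, starting from the known $\SA$ entry, stream consecutive $\SA$ and $\LCP$ values to the left and to the right in amortized constant time per entry after an initial $O(\log\log_w(n/r))$ predecessor query. I would walk outward in both directions, stopping as soon as the scanned $\LCP$ value drops below $m$. Because the suffix array interval of occurrences of $P$ is exactly the maximal contiguous range of $\SA$ entries whose consecutive $\LCP$s are at least $m$, the walk reports precisely the $occ$ occurrences and terminates after $O(occ)$ extractions plus the $O(\log\log_w(n/r))$ start-up cost.

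Adding the two stages yields total time
\[
O(m\log(\sigma)/w + \log(n/r)) + O(occ + \log\log_w(n/r)) \;=\; O(m\log(\sigma)/w + occ + \log(n/r)),
\]
and total space $O(rw\log_\sigma(n/r)) + O(r\log\log_w(n/r)) = O(rw\log_\sigma(n/r))$, as claimed. The only step that requires any care is justifying the termination rule: I must argue that the $\LCP$-based stopping criterion is correct even though we never explicitly identify the $\BWT$ interval $[sp..ep]$; this follows from the standard fact that a suffix $T[\SA[j]..]$ is prefixed by $P$ iff both $\LCP$ values at the interval boundaries of the run of suffixes through $j$ sharing a length-$m$ prefix are $\geq m$, together with the fact that one such $j$ is given to us by Lemma \ref{lem:one pattern packed}.
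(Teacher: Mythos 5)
Your proposal matches the paper's own argument: Lemma~\ref{lem:one pattern packed} supplies one occurrence in $O(m\log(\sigma)/w+\log(n/r))$ time within $O(rw\log_\sigma(n/r))$ space, and then Lemmas~\ref{lemma: general locate} and~\ref{lemma: lcp} with $s=\log\log_w(n/r)$ are used to stream neighboring $\SA$ and $\LCP$ entries in both directions, stopping when the $\LCP$ falls below $m$, for $O(occ+\log\log_w(n/r))$ extra time and negligible extra space. This is correct and essentially identical to the paper's proof.
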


To achieve the optimal running time, we must speed up the search for patterns
that are shorter than $\log(n/r)$ (Lemma \ref{lem:additive log}) and $w\log_\sigma n$ (Lemma \ref{lem:additive log packed}). We index all the possible short patterns by
exploiting the following property.

\begin{lemma}\label{lemma:distinct kmers}
There are at most $2rk$ distinct $k$-mers in the text, for any $1\leq k \leq n$.
\end{lemma}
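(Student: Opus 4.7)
The plan is to prove this as a direct corollary of Lemma~\ref{lemma: primary occurrence}, using a simple injective encoding of distinct $k$-mers via sampled positions.

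First, I would count the sampled positions (Definition~\ref{def: sampled position}). Since each of the $r$ runs contributes its first and last character, there are at most $2r$ sampled positions in $T$.

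Next, I would use Lemma~\ref{lemma: primary occurrence} to assign, to each distinct $k$-mer $w$ that occurs in $T$, a \emph{primary occurrence} $T[i..i+k-1]=w$ containing some sampled position $T[p]$ with $i\le p\le i+k-1$. Define the map
\[
\varphi(w) \;=\; (p,\,\ell), \qquad \ell \;=\; p-i \;\in\; \{0,1,\ldots,k-1\}.
\]
The range of $\varphi$ lies in a set of size at most $2r\cdot k$.

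The key step is to verify injectivity. If $\varphi(w)=\varphi(w')=(p,\ell)$ for two $k$-mers $w,w'$, then their chosen occurrences both start at position $p-\ell$, so $w=T[p-\ell..p-\ell+k-1]=w'$, contradicting distinctness. Hence the number of distinct $k$-mers is at most $2r\cdot k$.

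I do not foresee a real obstacle here, since the heavy lifting was already done in Lemma~\ref{lemma: primary occurrence}; the only thing to watch is that the bound is counted generously (in particular, sampled positions at the two endpoints of a length-$1$ run may coincide, but this only reduces the count, and short suffixes at the end of $T$ with fewer than $k$ remaining characters need not be considered since they cannot start a $k$-mer).
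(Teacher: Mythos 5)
Your proof is correct and follows the same route as the paper: both invoke Lemma~\ref{lemma: primary occurrence} to give every distinct $k$-mer an occurrence containing one of the at most $2r$ sampled positions, and then charge at most $k$ starting offsets to each sampled position; your injective map $\varphi(w)=(p,\ell)$ is just a more explicit phrasing of the paper's count of the $k$ possible $k$-mer starts within the window around each sampled character.
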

\begin{proof}
From Lemma \ref{lemma: primary occurrence}, every distinct $k$-mer
appearing in the text has a primary occurrence. It follows that, in order to
count the number of distinct $k$-mers, we can restrict our attention to the
regions of size $2k-1$ overlapping the at most $2r$ sampled positions 
(Definition~\ref{def: sampled position}). The claim easily follows. \qed
\end{proof}

Note that, without Lemma~\ref{lemma:distinct kmers}, we would only be able to
bound the number of distinct $k$-mers by $\sigma^k$. 
We first consider achieving optimal locate time in the unpacked setting.

\begin{theorem} \label{thm:optimal time}
	We can store a text $T [1..n]$ in $O(r\log (n/r))$
	words, where $r$ is the number of runs in the
	$\BWT$ of $T$, such that later, given a pattern $P [1..m]$, we can report the $occ$ occurrences of $P$ in optimal $O(m + occ)$ time.
\end{theorem}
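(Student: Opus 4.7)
The plan is to split on pattern length: let $\tau = \log(n/r)$. When $m \ge \tau$, Lemma~\ref{lem:additive log} already yields time $O(m+occ+\log(n/r)) = O(m+occ)$, so no extra work is needed for long patterns. The remaining task is to build an auxiliary index that handles \emph{short} patterns ($m < \tau$) in $O(m+occ)$ time using $O(r\log(n/r))$ additional space; the entire content of the theorem is in this short-pattern case.

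For the short-pattern index, I would exploit Lemma~\ref{lemma:distinct kmers}: the text contains only $O(r\tau) = O(r\log(n/r))$ distinct $\tau$-mers, because every distinct $\tau$-mer has a primary occurrence crossing one of the $O(r)$ sampled positions. On the set of distinct $\tau$-mers I would build a compact trie of $O(r\tau)$ nodes, equipped at each node $v$ with (a) a static perfect hash mapping the first character of each outgoing edge to the corresponding child, for $O(1)$-time descent; (b) the $\BWT$ range $[sp_v,ep_v]$ of the string spelling the path to $v$; and (c) a toe-hold $\SA[sp_v]$. The total auxiliary space is $O(r\log(n/r))$ words.

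Given $P$ of length $m<\tau$, I descend the trie in $O(m)$ perfect-hash steps to a locus (possibly strictly inside an edge). The main obstacle is the verification step: because the Patricia-style descent only inspected the first character of each traversed edge, I must still check that the skipped characters of the root-to-locus path agree with $P$, and I need this in $O(m)$ time. Theorem~\ref{thm:extract} is too slow here, costing an additive $O(\log(n/r))$. To sidestep this I would store, with each trie node, a bit-packed copy of a length-$\tau$ representative window of $T$ covering the relevant edge labels; by sharing windows across nodes whose representatives coincide (using the fact that descendants extend ancestors, so a single $\tau$-character window can certify an entire root-to-leaf chain), the aggregate storage stays within $O(r\log(n/r))$ words, and word-aligned random access lets the $m$-character verification complete in $O(m\log(\sigma)/w) = O(m)$ time.

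Once the match is confirmed, the stored range $[sp_v,ep_v]$ yields $occ = ep_v - sp_v + 1$, and Lemma~\ref{lemma: general locate} with $s = O(1)$, seeded by the toe-hold $\SA[sp_v]$, streams out all $occ$ positions in $O(occ)$ additional time. Combined with the long-pattern branch handled by Lemma~\ref{lem:additive log}, this delivers overall $O(m+occ)$ query time within $O(r\log(n/r))$ words of space, as claimed. The hard part of the proof is the controlled sharing and packed layout of representative windows so that $O(m)$-time verification is possible without blowing the $O(r\log(n/r))$ space budget; everything else is a direct marriage of the short-pattern trie with the machinery already developed in Section~\ref{sec:locate}.
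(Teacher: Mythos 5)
Your overall architecture --- long patterns handled by Lemma~\ref{lem:additive log}, short patterns by a trie over the $O(r\log(n/r))$ distinct $\log(n/r)$-mers of Lemma~\ref{lemma:distinct kmers} with perfect-hash descent --- is the same as the paper's. The genuine gap is in how you report the occurrences in the short-pattern branch. You invoke Lemma~\ref{lemma: general locate} with $s=O(1)$ and claim it ``streams out all $occ$ positions in $O(occ)$ time'' from the toe-hold $\SA[sp_v]$. It does not: each application of that lemma performs a predecessor search costing $O(\log\log_w(n/r))$ and yields at most $s$ new cells, so with constant $s$ reporting costs $\Theta(occ\log\log_w(n/r))$ --- exactly the non-optimal trade-off already available from Theorem~\ref{thm:locating}, not $O(occ)$. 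Moreover, even if you raise $s$ to $\log\log_w(n/r)$, the very first call still pays an additive $O(\log\log_w(n/r))$ predecessor search, which is not $O(m+occ)$ when $m$ and $occ$ are both constant; storing the toe-hold value $\SA[sp_v]$ alone does not remove that search, because the lemma needs the position inside its array $W$, not the $\SA$ value. The paper's key ingredient is precisely to precompute that predecessor information: it sets $s=\log\log_w(n/r)$ and stores at each trie node a triple $\langle occ,i,\delta\rangle$ with $\SA[sp_v+j]=W[i+j]+\delta$; then for $occ\le s$ the answers are read directly out of $W$ with no predecessor query at all, and for $occ>s$ the single $O(\log\log_w(n/r))$ search is absorbed by $occ$. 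Without this (or an equivalent) device, your short-pattern branch does not reach the claimed optimal bound.

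A secondary weakness is the space accounting for your verification windows: one length-$\tau$ window per root-to-leaf chain means up to $O(r\log(n/r))$ windows of $\log(n/r)$ characters each, i.e.\ $O(r\log^2(n/r))$ characters, which for large alphabets (up to $\sigma\le r$) can exceed $O(r\log(n/r))$ words even bit-packed. The paper sidesteps this by storing only $r$ plain windows of length $2\log(n/r)$ around the run boundaries --- every distinct short substring has a primary occurrence inside one of them by Lemma~\ref{lemma: primary occurrence} --- and representing all edge labels as offsets into those windows; your ``sharing'' argument should be phrased this way (sharing across all $\tau$-mers crossing the same sampled position, not merely along a root-to-leaf chain) to stay within the budget.
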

\begin{proof}
	We store in a
	path-compressed trie $\mathcal T$ all the strings of length $\log(n/r)$
	occurring in the text. By Lemma~\ref{lemma:distinct kmers}, $\mathcal T$ has
	$O(r\log(n/r))$ leaves, and since it is path-compressed, it has
	$O(r\log(n/r))$ nodes. The texts labeling the edges are represented with
	offsets pointing inside $r$ strings of length $2\log(n/r)$ extracted around
	each run boundary and stored in plain form (taking care of possible overlaps). Child 
	operations on the trie are
	implemented with perfect hashing to support constant-time traversal. 
	
	In addition, we use the sampling structure of Lemma~\ref{lemma: general locate}
	with $s=\log\log_w(n/r)$. Recall from Lemma~\ref{lemma: general locate} that
	we store an array $W$ such that, given any range $\SA[sp..sp+s-1]$, there
	exists a range $W[i..i+s-1]$ and an integer $\delta$ such that $\SA[sp+j] = W[i+j]+\delta$, for $j=0,\dots, s-1$. We store this information on $\mathcal T$ nodes: for each node $v \in \mathcal T$, whose 
	string prefixes the range of suffixes $\SA[sp..ep]$, we store in $v$ the triple $\langle ep-sp+1, i, \delta\rangle$ such that $\SA[sp+j] = W[i+j]+\delta$, for $j=0,\dots, s-1$. 
	
	Our complete locate strategy is as follows. If $m > \log(n/r)$, then we use
	the structures of Lemma~\ref{lem:additive log}, which already gives us $O(m+occ)$ time. Otherwise, we search for the pattern in $\mathcal T$. If $P$ does not occur in $\mathcal T$, then its number of occurrences must be zero, and we stop. If it occurs and the locus node of $P$ is $v$, let $\langle occ, i, \delta\rangle$ be the triple associated with $v$.
	If $occ \le s = \log\log_w(n/r)$, then we obtain the whole interval 
	$\SA[sp..ep]$ in time $O(occ)$ by accessing $W[i,i+occ-1]$ and adding $\delta$ to the results.
	Otherwise, if $occ > s$, a plain application of 
	Lemma~\ref{lemma: general locate} starting from the pattern occurrence $W[i]+\delta$ yields time $O(\log\log_w(n/r)+occ)=O(occ)$. Thus
	we obtain $O(m+occ)$ time and the trie uses $O(r\log(n/r))$ space. Considering
	all the structures, we obtain the main result. \qed
\end{proof}

With more space, we can achieve optimal locate time in the packed setting. 

\begin{theorem} \label{thm:optimal time packed}
	We can store a text $T [1..n]$ over alphabet $[1..\sigma]$
	in $O(rw\log_\sigma(n/r))$
	words, where $r$ is the number of runs in the $\BWT$ of $T$, 
   	such that later, given a packed pattern $P [1..m]$, we can report the $occ$ occurrences of $P$ in optimal $O(m\log(\sigma)/w + occ)$ time.
\end{theorem}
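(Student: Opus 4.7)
The plan is to mirror Theorem \ref{thm:optimal time} in the packed setting, splitting cases by pattern length and combining Lemma \ref{lem:additive log packed} with a packed trie for short patterns. Let $\mu = w\log_\sigma(n/r)$. When $m \geq \mu$, the structure of Lemma \ref{lem:additive log packed} already reports all occurrences in $O(m\log(\sigma)/w+occ+\log(n/r))$ time within the claimed space, and since $\log(n/r) = O(\mu\log(\sigma)/w) = O(m\log(\sigma)/w)$ in this regime, the additive term is absorbed and we obtain the optimal $O(m\log(\sigma)/w+occ)$.

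For $m < \mu$, I would build a packed path-compressed trie $\mathcal{T}$ containing every distinct substring of $T$ of length $\mu$. By Lemma \ref{lemma:distinct kmers}, there are at most $2r\mu = O(rw\log_\sigma(n/r))$ such substrings, so $\mathcal{T}$ has $O(rw\log_\sigma(n/r))$ nodes. Edge labels are stored implicitly as offsets into $2r$ packed windows of length $2\mu$ extracted around the sampled positions, whose plain storage occupies $O(r\mu\log(\sigma)/w) = O(r\log(n/r))$ words. As in Theorem \ref{thm:optimal time}, every node $v$ of $\mathcal{T}$ records a triple $\langle occ_v, i_v, \delta_v \rangle$ obtained from the sampling of Lemma \ref{lemma: general locate} with $s = \log\log_w(n/r)$; once the locus of $P$ is found (explicit or mid-edge, in which case we take the triple of the node immediately below the locus), the $occ$ occurrences are read directly from $W[i_v..i_v+occ_v-1]$ shifted by $\delta_v$ when $occ_v \leq s$, and otherwise enumerated starting from $W[i_v]+\delta_v$ via Lemma \ref{lemma: general locate} in $O(\log\log_w(n/r)+occ) = O(occ)$ time.

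The main obstacle is descending in $\mathcal{T}$ in $O(m\log(\sigma)/w)$ time, since the character-by-character perfect hashing used in Theorem \ref{thm:optimal time} would only give $O(m)$. Standard packed-trie techniques suffice: at each node, a perfect hash table maps the first machine word of each outgoing edge (padded with a sentinel when the edge is shorter than $w/\log\sigma$ symbols) to its child, and the remainder of each edge label is compared against $P$ using word-aligned packed comparisons on the stored windows. Since $P$ arrives packed and the descent consumes at most $\lceil m\log(\sigma)/w\rceil$ full words of $P$ with $O(1)$ hash and comparison work per word, the descent costs $O(m\log(\sigma)/w)$ in total, and any partial-word tail is handled by one final lookup on a padded key. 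Adding the $O(occ)$ reporting phase yields the optimal $O(m\log(\sigma)/w+occ)$ bound, and since both $\mathcal{T}$ and the long-pattern structure of Lemma \ref{lem:additive log packed} fit in $O(rw\log_\sigma(n/r))$ space, the theorem follows.
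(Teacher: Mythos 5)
Your long-pattern case ($m \ge \mu = w\log_\sigma(n/r)$) is correct and matches the paper: Lemma~\ref{lem:additive log packed} plus the observation that $\log(n/r) = O(m\log(\sigma)/w)$ in that regime. The gap is in the short-pattern case, and it is precisely the part that makes this theorem nontrivial. Your proposed descent --- a perfect hash table at each node keyed by the first machine word of each outgoing edge, padded with a sentinel when the edge is shorter than $w/\log\sigma$ symbols --- does not work. If an edge is shorter than one word, the pattern's next packed word contains genuine pattern symbols beyond the end of that edge, so it will never equal the sentinel-padded key; the lookup fails even though $P$ occurs, i.e.\ you get false negatives, and branching that happens below word granularity cannot be resolved by a single whole-word lookup at all. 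The same problem breaks your ``one final lookup on a padded key'' for the last partial word of $P$: the keys stored are (padded) text words, not all padded prefixes of them, so the locus of a pattern ending mid-word is not found. Patching this by storing, per node, all distinct partial-word extensions multiplies the space by up to $w/\log\sigma$, exceeding $O(rw\log_\sigma(n/r))$; patching it by a predecessor/binary search among children adds a term like $O(\log\log n)$ or $O(\log m)$, which is not $O(m\log(\sigma)/w)$ for very short patterns with $occ=0$.

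The paper avoids exactly this trap: it notes that even the $O(\log m)$ additive term of the z-fast trie of Lemma~\ref{lemma: z-fast} is too large here, and instead uses the weak prefix search structure of Belazzougui et al.\ [Sec.~H.2], which answers prefix searches in $O(m\log(\sigma)/w)$ time using $O(n'\ell^{1/c}(\log\ell+\log\log n))$ bits over the $n'=O(r\ell)$ distinct $\ell$-mers (this is $O(n')$ words because $\ell=O(w^2)$), and then removes the possible false positive by verifying the candidate against a packed string $V$ of length-$\ell$ contexts around sampled positions, again in $O(m\log(\sigma)/w)$ time; the $\langle occ,i,\delta\rangle$ triples and the reporting phase are then as in Theorem~\ref{thm:optimal time}, just as you describe. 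So your overall architecture (case split, $k$-mer bound of Lemma~\ref{lemma:distinct kmers}, triples from Lemma~\ref{lemma: general locate}) is the right one, but the claim that ``standard packed-trie techniques suffice'' for an exact, deterministic $O(m\log(\sigma)/w)$-time descent is unsupported and, as stated, incorrect; you need a weak-prefix-search-plus-verification mechanism (or an equally careful word-granularity structure) to close it.
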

\begin{proof}
	As in the proof of Theorem \ref{thm:optimal time} we need to index all the short patterns, in this case of length at most $\ell = w\log_\sigma(n/r)$. We insert all the short text substrings in a z-fast trie to achieve optimal-time navigation. As opposed to the z-fast trie used in Lemma \ref{lemma: z-fast}, now we need to perform the trie navigation (i.e., a prefix search) in only $O(m\log(\sigma)/w)$ time, that is, we need to avoid the additive term $O(\log m)$ that was instead allowed in Lemma \ref{lemma: z-fast}, as it could be larger than $m\log(\sigma)/w$ for very short patterns. We exploit a result by Belazzougui et al.~\cite[Sec.~H.2]{belazzougui2010fast}: letting $n'$ be the number of indexed strings of average length $\ell$, we can support weak prefix search in optimal $O(m\log(\sigma)/w)$ time with a data structure of size $O(n'\ell^{1/c}(\log\ell+\log\log n))$ bits, for any constant $c$. Note that, since $\ell = O(w^2)$, this is $O(n')$ space for any $c>2$. We insert in this structure all $n'$ text $\ell$-mers. For Lemma \ref{lemma:distinct kmers}, $n' = O(r\ell)$. It follows that the prefix-search data structure takes space $O(r \ell) = O(rw\log_\sigma(n/r))$. This space is asymptotically the same of Lemma~\ref{lem:additive log packed}, which we use to find long patterns. We store in a packed string $V$ the contexts of length $\ell$ surrounding sampled text positions ($O(rw\log_\sigma(n/r))$ space); z-fast trie nodes point to their corresponding substrings in $V$. After finding the candidate node on the z-fast trie, we verify it in $O(m\log(\sigma)/w)$ time by extracting a substring from $V$.
	We augment each trie node as done in Theorem \ref{thm:optimal time} with triples $\langle occ, i, \delta \rangle$. The locate procedure works as for  Theorem \ref{thm:optimal time}, except that now we use the z-fast trie mechanism to navigate the trie of all short patterns.\qed
\end{proof}

\section{Accessing the Suffix Array and Related Structures} \label{sec:sa}

In this section we show how we can provide direct access to the suffix array
$\SA$, its inverse $\ISA$, and the longest common prefix array $\LCP$. Those
enable functionalities that go beyond the basic counting, locating, and
extracting that are required for self-indexes, which we covered in 
Sections~\ref{sec:locate} to \ref{sec: optimal locate}, and will be used to 
enable a full-fledged compressed suffix tree in Section~\ref{sec:stree}.

We exploit the fact that the runs that appear in $\BWT$ induce equal substrings
in the {\em differential} suffix array, its inverse, and longest common
prefix arrays, $\DSA$, $\DISA$, and $\DLCP$, where we store the difference
between each cell and the previous one. Those equal substrings are exploited
to grammar-compress the differential arrays. We choose a particular class of 
grammar compressor that we will prove to produce a grammar of size
$O(r\log(n/r))$, on which we can access cells in time $O(\log(n/r))$.
The grammar is of an extended type called {\em run-length context-free grammar
(RLCFG)} \cite{NIIBT16}, which allows rules of the form $X \rightarrow Y^t$ 
that count as size 1. The grammar is based on applying several rounds of 
a technique called {\em locally consistent parsing}, which parses the string 
into small blocks (which then become nonterminals) in a way that equal 
substrings are parsed in the same form, therefore not increasing the grammar
size. There are various techniques to achieve such a parsing \cite{BES06}, from
which we choose a recent one \cite{Jez15,I17} that gives us the best results.

\subsection{Compressing with Locally Consistent Parsing}

We plan to grammar-compress sequences $W = \DSA$, $\ISA$, or $\LCP$ by taking
advantage of the runs in $\BWT$. We will apply a locally consistent
parsing \cite{Jez15,I17} to obtain a first partition of $W$ into nonterminals, 
and then recurse on the sequence of nonterminals. We will show that the 
interaction between the runs and the parsing can be used to bound the size of 
the resulting grammar, because new nonterminals can be defined only around the 
borders between runs. We use the following result.

\begin{definition}
A {\em repetitive area} in a string is a maximal run of the same symbol, of
length 2 or more.
\end{definition}

\begin{lemma}[\cite{Jez15}] \label{lem:lcp}
We can partition a string $S$ into at most $(3/4)|S|$ blocks so that, for 
every pair of identical substrings $S[i..j] = S[i'..j']$, if neither 
$S[i+1..j-1]$ or $S[i'+1..j'-1]$ overlap a repetitive area, then the sequence
of blocks covering $S[i+1..j-1]$ and $S[i'+1..j'-1]$ are identical.
\end{lemma}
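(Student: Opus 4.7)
The plan is to build the partition by the standard two-stage deterministic locally consistent parsing (Cole--Vishkin style): first isolate the repetitive areas, then run a constant-round alphabet reduction on the remainder and cut at local maxima of the resulting coloring.

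First I would declare every maximal repetitive area of $S$ to be a single block by itself. This disposes of all positions where two consecutive characters coincide, leaving a collection of maximal ``alternating'' substrings in which consecutive characters differ. On each such alternating substring I would apply the classical deterministic $3$-coloring: at each round, replace the color of position $k$ by the pair consisting of the lowest bit in which its current color differs from the color of position $k-1$ together with that bit itself. After $c = O(\log^* \sigma) = O(1)$ rounds the colors lie in $\{0,1,2\}$, neighbors still receive different colors, and the color of position $k$ is a deterministic function of the window $S[k-c..k+c]$. I would then start a new block at every position whose color is strictly larger than the colors of both of its neighbors.

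For the $(3/4)|S|$ bound I would argue that, since the coloring uses only three values and adjacent colors always differ, between two consecutive local maxima there are at most three positions; hence at least $|S|/4$ positions get absorbed into the preceding block, giving at most $(3/4)|S|$ blocks in the alternating parts. The repetitive-area blocks only improve the bound, since each of them compresses $\ge 2$ consecutive characters into one block.

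For the consistency property, take $S[i..j]=S[i'..j']$ with neither inner substring touching a repetitive area. Inside $[i+1,j-1]$ only the alternating rule applies, and the local-maximum test at position $k$ inspects the colors at $k-1,k,k+1$, each of which depends on an $S$-window of radius $c$; thus the block-boundary decision at any $k \in [i+c+2,\, j-c-2]$ depends only on characters inside $S[i..j]$, and therefore agrees with the corresponding decision in $S[i'..j']$. The $O(c)$ positions near each endpoint of $[i+1,j-1]$ are handled by reading ``the sequence of blocks covering $S[i+1..j-1]$'' as including the two partial blocks that straddle the endpoints, whose $O(1)$-radius determining windows are themselves identical on both sides by the equality $S[i..j]=S[i'..j']$. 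I expect the delicate part to be precisely this endpoint bookkeeping, namely verifying that the interaction between the repetitive-area rule and the local-maxima rule never injects a phantom boundary inside $[i+1,j-1]$: this is exactly what the hypothesis that the inner substrings do not overlap a repetitive area is there to rule out, since it guarantees that only the alternating-part rule is ever triggered inside $[i+1,j-1]$.
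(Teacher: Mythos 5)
Your overall architecture (repetitive areas become blocks, then a deterministic locally consistent rule on the alternating remainder) matches the spirit of the construction, but the specific rule you chose does not prove the lemma as stated, and the gap is exactly at the point you flag as ``endpoint bookkeeping.'' The lemma promises that the blocks covering the \emph{entire} inner substring $S[i+1..j-1]$ coincide with those covering $S[i'+1..j'-1]$ given only \emph{one} character of matching context on each side (namely $S[i]$ and $S[j]$). With Cole--Vishkin recoloring after $c$ rounds, the color of position $k$ depends on a window of radius $c$, so the boundary decision at $k$ (a local-maximum test on the colors at $k-1,k,k+1$) depends on $S[k-c-1..k+c+1]$. For $k$ within distance about $c$ of $i+1$ or of $j-1$, this window sticks out of $S[i..j]$, where the two occurrences may differ; hence the blocks near the ends of the inner substring can genuinely differ between the two occurrences, and the ``determining windows'' you invoke for the straddling blocks are not guaranteed to be equal. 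Your construction therefore yields consistency only after discarding $O(\log^*\sigma)$ positions at each end, which is strictly weaker than the statement. This weakening matters: the paper applies the lemma (in Lemma~\ref{lem:pass}) with exactly one character of context, matching $W[j'-1..j+1]$ against $W[\pi(j')-1..\pi(j)+1]$ and concluding that the whole block $W[j'..j]$ reappears; moreover it applies it to sequences such as $\DSA$ whose alphabet can be as large as $n$, so $c=O(\log^*\sigma)$ is not even a constant there.

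The parsing the paper uses (following Jez) avoids this by making every decision a radius-one function: outside repetitive areas the alphabet is split into left-symbols and right-symbols, and a block is formed exactly when a left-symbol is followed by a right-symbol. Whether a position starts or ends a block then depends only on that symbol and its immediate neighbor, so the parsings of $S[i..j]$ and $S[i'..j']$ can differ only at their first and last positions --- precisely the claimed guarantee; Jez's contribution is choosing the left/right split so that at most $(3/4)|S|$ blocks arise. Separately, your counting argument needs repair: the fact that consecutive local maxima are at distance at most $4$ bounds block lengths from above (hence the number of blocks from \emph{below}); what you actually need is that no two adjacent positions are both local maxima (so interior blocks have length at least $2$), together with an amortization for the first block of each alternating segment and for length-one segments squeezed between repetitive areas. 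Those issues are fixable, but the context-radius problem is inherent to the multi-round coloring approach and cannot be patched within your scheme.
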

\begin{proof}
The parsing is obtained 
by, first, creating blocks for the repetitive areas, which can be of any length
$\ge 2$. For the remaining characters, the alphabet is partitioned into two 
subsets, left-symbols and right-symbols. Then, every left-symbol followed by a 
right-symbol are paired in a block. It is then clear that, if $S[i+1..j-1]$ and 
$S[i'+1..j'-1]$ do not overlap repetitive areas, then the parsing of $S[i..j]$
and $S[i'..j']$ may differ only in their first position (if it is part of a 
repetitive area ending there, or if it is a right-symbol that becomes paired
with the preceding one) and in their last position (if it is part of a 
repetitive area starting there, or if it is a left-symbol that becomes paired
with the following one). Jez \cite{Jez15} shows how to choose the pairs so that
$S$ contains at most $(3/4)|S|$ blocks.
\qed
\end{proof}

The lemma ensures a locally consistent parsing into blocks as long as 
the substrings do not overlap repetitive areas, though the substrings may 
fully contain repetitive areas. 

Our key result is that a round of parsing does not create too many distinct 
nonterminals if there are few runs in the $\BWT$. We state the result in the 
general form we will need.

\begin{lemma} \label{lem:pass}
Let $W[1..n]$ be a sequence with $2r$ positions 
$p_1^- \le p_1^+ < p_2^-\le p_2^+ < \ldots < p_r^- \le p_r^+$
and let there be a permutation $\pi$ of one cycle on $[1..n]$ so that, whenever
$k$ is not in any range $[p_i^-..p_i^+]$, it holds (a) 
$\pi(k-1)=\pi(k)-1$ and (b) $W[\pi(k)] = W[k]$. Then, an application of the 
parsing scheme of Lemma~\ref{lem:lcp} on $W$ creates only the distinct blocks 
that it defines to cover all the areas $W[p_i^--1..p_i^++1]$.
\end{lemma}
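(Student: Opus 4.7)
The plan is to show that every block produced by the parser of Lemma~\ref{lem:lcp} either already overlaps one of the windows $W[p_i^--1..p_i^++1]$ or is a translate (identical as a sequence of symbols) of some block that does. Since the parser assigns one nonterminal per distinct block content, this bounds the set of distinct new nonterminals by the blocks defined inside those special windows. A block whose padded window $[\ell-1..r+1]$ is not fully contained in a single safe interval (i.e., a maximal run of positions outside $\bigcup_i[p_i^-..p_i^+]$) trivially overlaps some $[p_i^--1..p_i^++1]$ and is captured directly, so I focus on blocks whose padded window sits inside a single safe interval $[a..b]$.

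The first step will exploit conditions (a) and (b) to establish a translation property inside each maximal safe interval. Iterating (a) over $k=a,\dots,b$ yields $\pi(k) = k+\tau$ for all $k\in[a-1..b]$, with $\tau := \pi(a)-a$. Condition (b) then gives $W[k]=W[k+\tau]$ for every $k\in[a..b]$. Consequently, whenever a block $B=W[\ell..r]$ satisfies $[\ell-1..r+1]\subseteq[a..b]$, the substrings $W[\ell-1..r+1]$ and $W[\ell+\tau-1..r+\tau+1]$ agree character by character.

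The second step will promote this character-level equality to block-level equality and then iterate. If $B$ is a singleton or a left--right pair, its characters avoid every repetitive area, so applying Lemma~\ref{lem:lcp} to the two equal windows forces the interiors $W[\ell..r]$ and $W[\ell+\tau..r+\tau]$ to receive identical block sequences; the translate is therefore a single block of the same content. If $B$ is a repetitive-area block, the same character equality forces $W[\ell+\tau..r+\tau]$ to be a maximal run of the same symbol (using the mismatches $W[\ell-1]\neq W[\ell]$ and $W[r+1]\neq W[r]$, which carry over because both boundary values equal their translates), so the parser's first phase also makes it a single block of matching content. Now iterate: set $\ell_0=\ell$, $r_0=r$, and while $[\ell_t-1..r_t+1]$ lies in some maximal safe interval, let $\ell_{t+1}=\pi(\ell_t)$ and $r_{t+1}=r_t+(\ell_{t+1}-\ell_t)$. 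By the first step $\ell_{t+1}=\pi(\ell_t)$, and by the second step the block content is preserved across every step.

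To finish, since $\pi$ is a single cycle on $[1..n]$, the orbit $\ell_0,\pi(\ell_0),\pi^2(\ell_0),\dots$ visits every position of $[1..n]$, including those in $\bigcup_i[p_i^-..p_i^+]$. Hence the iteration terminates at a smallest $t^*$ for which some position of $[\ell_{t^*}-1..r_{t^*}+1]$ is unsafe; by unsafeness that position lies in the special window $[p_i^--1..p_i^++1]$ of the corresponding $i$, and so the block $W[\ell_{t^*}..r_{t^*}]$ overlaps that window. The chain of equal contents along the orbit then identifies $B$'s content with a block produced in a special window. The hard part will be the repetitive-area case of the second step: Lemma~\ref{lem:lcp} transports parsings only under a ``no repetitive area in the interior'' hypothesis, so I will handle runs by opening up the first phase of the parser and using the matched boundary mismatches to conclude that the translate is itself a maximal run of the same length and symbol.
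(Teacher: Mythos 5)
Your proposal is correct and follows essentially the same route as the paper's own proof: conditions (a)--(b) turn $\pi$ into a content-preserving shift on any block whose padded window avoids all $[p_i^-..p_i^+]$, Lemma~\ref{lem:lcp} (or run-maximality, for repetitive-area blocks) shows the shifted window is again a single identical block, and chaining along the single cycle of $\pi$ terminates at a block overlapping some $W[p_i^--1..p_i^++1]$. Your explicit treatment of the repetitive-area case is in fact a bit more careful than the paper's (which simply asserts neither copy overlaps a repetitive area); just note also in the pair/singleton case that the translated interior inherits the boundary mismatches and hence avoids repetitive areas, so Lemma~\ref{lem:lcp} indeed applies.
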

\begin{proof}
Assume we carry out the parsing of Lemma~\ref{lem:lcp} on $W$, which cuts $W$ 
into at most $(3/4)n$ blocks.
Consider the smallest $k\ge 1$ along the cycle of $\pi$ such that (1) 
$j=\pi^k(1)$ is the end of a block $[j'..j]$ and (2) $[j'-1..j+1]$ is disjoint 
from any $[p_i^-..p_i^+]$. Then, by conditions (a) and (b) it follows that 
$[\pi(j'-1)..\pi(j+1)] = [\pi(j')-1..\pi(j)+1] = [\pi(j)-(j-j')-1..\pi(j)+1]$ 
and $W[\pi(j')-1..\pi(j)+1] = W[j'-1..j+1]$. 

By the way blocks $W[j'..j]$ are formed, it must be $W[j'-1] \not= W[j']$ and
$W[j+1]\not=W[j]$, and thus $W[\pi(j')-1] \not= W[\pi(j')]$ and
$W[\pi(j)+1] \not= W[\pi(j)]$. Therefore, neither $W[j'..j]$ nor 
$W[\pi(j')..\pi(j)]$ may overlap a repetitive area. It thus holds by 
Lemma~\ref{lem:lcp} that $W[\pi(j')..\pi(j)]$ must also be a single block, 
identical to $W[j'..j]$.
%

Hence position $\pi(j)=\pi^{k+1}(1)$ satisfies condition (1) as well. If it also
satisfies (2), then another block identical to $W[j'..j]$ 
ends at $\pi^{k+2}(1)$, and so on, until we reach some $\pi^{k'}(1)$ ending a
block that intersects some $[p_i^--1,p_i^++1]$. At this point, we
find the next $k''>k'$ such that $\pi^{k''}$ satisfies (1) and (2), 
and continue the process from $k=k''$.

Along the cycle we visit all the positions in $W$. The positions that do not 
satisfy (2) are those in the blocks that intersect some $W[p_i^--1,p_i^++1]$.
From those positions, 
the ones that satisfy (1) are the endpoints of the distinct blocks we visit. 
Therefore, the distinct blocks are precisely those that the parsing
creates to cover the areas $W[p_i^--1,p_i^++1]$.
\qed
\end{proof}

Now we show that a RLCFG built by successively parsing a string into blocks
and replacing them by nonterminals has a size that can be bounded in terms 
of  $r$.

\begin{lemma} \label{lem:passes}
Let $W[1..n]$ be a sequence with $2r$ positions $p_i^-$ and $p_i^+$ satisfying
the conditions of Lemma~\ref{lem:pass}. Then, a RLCFG built by applying
Lemma~\ref{lem:pass} in several rounds and replacing blocks by nonterminals,
compresses $W$ to size $O(\lambda r \log(n/r))$, where 
$\lambda=\max_{1 \le i \le r} (p_i^+-p_i^-+1)$.
\end{lemma}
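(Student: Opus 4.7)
The plan is to apply Lemma~\ref{lem:pass} inductively, producing one level of the RLCFG per round. Start with $W_0 = W$, with the given permutation $\pi_0$, $r$ dirty intervals, and maximum length $\lambda_0 = \lambda$. At each round $k$, I would parse $W_k$ into blocks using the scheme of Lemma~\ref{lem:lcp} and replace each distinct block by a fresh nonterminal to obtain $W_{k+1}$. Repetitive blocks of length $t \ge 2$ become run-length rules $X \to Y^t$ of size $1$, while length-$2$ blocks become ordinary rules $X \to YZ$ of size $2$. By Lemma~\ref{lem:pass}, the only new distinct blocks are those intersecting the $r$ dirty areas $W_k[p_{k,i}^- - 1..p_{k,i}^+ + 1]$, each of length at most $\lambda_k + 2$. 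Since every block has length $\ge 2$, each such area is covered by $O(\lambda_k)$ blocks, so round $k$ introduces only $O(\lambda_k r)$ new rules, of total size $O(\lambda_k r)$.

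For the induction to continue, I would then show that $W_{k+1}$ satisfies the hypotheses of Lemma~\ref{lem:pass}. The new dirty intervals $[p_{k+1,i}^-..p_{k+1,i}^+]$ are the index ranges in $W_{k+1}$ of the blocks intersecting the old dirty areas; their maximum length satisfies $\lambda_{k+1} \le \lceil(\lambda_k+2)/2\rceil + 2$, which unrolls to $\lambda_k \le \lambda + O(1)$ for all $k$. The new single-cycle permutation $\pi_{k+1}$ on $[1..n_{k+1}]$ would be induced from $\pi_k$ by following its cycle and recording block-start positions: outside the new dirty intervals, shift-consistency and value-equality follow from the corresponding properties of $\pi_k$ together with the local consistency of the parse established in the proof of Lemma~\ref{lem:pass}; the single-cycle property is inherited because blocks partition $[1..n_k]$.

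Iterating, $|W_{k+1}| \le (3/4)|W_k|$ by Lemma~\ref{lem:lcp}, so after $K = O(\log(n/r))$ rounds the sequence has length $O(r)$. At that point I would stop and store $W_K$ directly as the right-hand side of the start symbol, contributing $O(r)$ to the grammar size. Summing over all rounds yields total size $\sum_{k=0}^{K-1} O(\lambda_k r) + O(r) = O(\lambda r \log(n/r))$, as claimed.

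The main obstacle is the inductive step that keeps $W_{k+1}$ eligible for another application of Lemma~\ref{lem:pass}: constructing $\pi_{k+1}$ as a single cycle on $[1..n_{k+1}]$ and verifying shift-consistency and value-equality outside the new dirty intervals. The local consistency already embedded in the proof of Lemma~\ref{lem:pass} is precisely what makes this bookkeeping go through, so no genuinely new idea is required; once this step is in place, the geometric length decrease and the $O(\lambda r)$ per-round rule count immediately yield the claimed bound.
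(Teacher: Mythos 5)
Your overall plan (iterate Lemma~\ref{lem:pass}, map the permutation and the dirty intervals to the new sequence of nonterminals, use the $(3/4)$-factor shrinkage of Lemma~\ref{lem:lcp} to stop after $O(\log(n/r))$ rounds, and store the final $O(r)$-length sequence explicitly) is the same as the paper's. But your quantitative accounting has a genuine gap: you assert that ``every block has length $\ge 2$'' and deduce $\lambda_{k+1} \le \lceil(\lambda_k+2)/2\rceil + 2$, i.e.\ that each dirty interval roughly halves and so stays of length $\lambda+O(1)$ at every level. This is false for the parsing of Lemma~\ref{lem:lcp}: only repetitive areas and left-symbol/right-symbol pairs are grouped, so a symbol that is not part of a repetitive area and not paired remains a length-$1$ block. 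A dirty area can therefore be parsed entirely into length-$1$ blocks, produce no nonterminals at all, and retain its full length; on top of that it is extended by one position on each side when passing to the next level ($(p_i^-)'=\mu(p_i^--1)$, $(p_i^+)'=\mu(p_i^++1)$). In the worst case $\lambda_k$ grows like $\lambda+2k$, and then your per-round charge of $O(\lambda_k r)$ sums to $O(\lambda r\log(n/r) + r\log^2(n/r))$, which overshoots the claimed $O(\lambda r\log(n/r))$ whenever $\lambda = o(\log(n/r))$ (e.g.\ the applications in the paper have $\lambda\le 3$).

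The missing idea is an amortized argument rather than a per-level length bound: whenever the parsing creates $d$ nonterminals inside a dirty area, that area shrinks by at least $d$ positions, while each round the area grows by only $2$ (the one-position extension on each side). Hence over $k$ rounds a single area can be charged at most $\lambda+2+2k$ nonterminals, giving $r(\lambda+2+2k)$ rules in total; adding the residual sequence of length $(3/4)^k n$ and choosing $k=\log_{4/3}(n/r)$ yields $O(\lambda r\log(n/r))$. Your inductive maintenance of the hypotheses of Lemma~\ref{lem:pass} (the new single-cycle permutation and conditions (a) and (b) outside the new dirty intervals) is fine and matches the paper; it is only the bound on the number of nonterminals per area that needs to be replaced by this creation-versus-shrinkage accounting.
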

\begin{proof}
We apply a first round of locally
consistent parsing on $W$ and create a distinct nonterminal per distinct 
block of length $\ge 2$ produced along the parsing. In order to represent the 
blocks that are repetitive areas, our grammar is a RLCFG.

Once we replace the parsed blocks by nonterminals, the new sequence $W'$ is
of length at most $(3/4)n$. Let $\mu$ map from a position in $W$ to the position
of its block in $W'$. To apply Lemma~\ref{lem:pass} again over $W'$, we
define a new permutation $\pi'$, by skipping the positions in $\pi$ 
that do not fall on block ends of $W$, and mapping the block ends to the blocks
(now symbols) in $W'$. The positions $p_i^+$ and $p_i^-$ are mapped to their 
corresponding block positions $\mu(p_i^+)$ and $\mu(p_i^-)$. To see that (a) 
and (b) hold in $W'$, let $[p_i^++1..p_{i+1}^--1]$ be the zone between two 
areas $[p^-,p^+]$ of Lemma~\ref{lem:pass}, where it can be applied,
and consider the corresponding range $W[\pi(p_i^++1)..\pi(p_{i+1}^--1)] =
W[\pi(p_i^+)+1..\pi(p_{i+1}^-)-1] = W[p_i^++1..p_{i+1}^--1]$. 
Then, the locally consistent parsing represented as a RLCFG guarantees that 
\begin{eqnarray*}
& & W'[\pi'(\mu(p_i^++1)+1)..\pi'(\mu(p_{i+1}^--1)-1)]\\
&=& W'[\pi'(\mu(p_i^++1))+1..\pi'(\mu(p_{i+1}^--1))-1]\\
&=& W'[\mu(p_i^++1)+1..\mu(p_{i+1}^--1)-1].
\end{eqnarray*}
Therefore, we can choose the new values $(p_i^+)'=
\mu(p_i^++1)$ and $(p_i^-)'=\mu(p_i^--1)$ and
apply Lemma~\ref{lem:pass} once again.

Let us now bound the number of nonterminals created by each round. 
Considering repetitive areas in the extremes, the area $W[p_i^--1,p_i^++1]$,
of length $\ell_i = p_i^+-p_i^-+3 \le \lambda+2$, may produce up to 
$2+\lfloor\ell_i/2\rfloor$
nonterminals, and be reduced to this new length after the first round. It could
also produce no nonterminals at all and retain its original length. In general,
for each nonterminal produced, the area shrinks by at least 1: if we
create $d$ nonterminals, then $(p_i^+)'-(p_i^-)' \le p_i^+-p_i^--d$.

On the other hand, the new extended area $W'[(p_i^-)'-1..(p_i^+)'+1]$ adds two
new symbols to that length. Therefore, the area starts with length $\ell_i$ and
grows by 2 in each new round. Whenever it creates a new nonterminal, it 
decreases at least by 1. It follows that, after $k$ parsing rounds, each area 
can create at most $\ell_i+2k \le \lambda+2+2k$ nonterminals, thus the grammar 
is of size 
$r(\lambda+2+2k)$. On the other hand, the text is of length $(3/4)^k n$, for a 
total size of $(3/4)^kn + r(\lambda+2+2k)$. Choosing $k = \log_{4/3}(n/r)$, the
total space becomes $O(\lambda r \log(n/r))$.
\qed
\end{proof}

\subsection{Accessing $\SA$} \label{sec:dsa}

Let us define the differential suffix array $\DSA[k] = \SA[k]-\SA[k-1]$ for 
all $k>1$, and $\DSA[1]=\SA[1]$. 
The next lemma shows that the structure of $\DSA$ is suitable to apply
Lemmas~\ref{lem:pass} and \ref{lem:passes}.

\begin{lemma} \label{lem:dsa}
Let $[x-1,x]$ be within a run of $\BWT$.
Then $\LF(x-1)=\LF(x)-1$ and $\DSA[\LF(x)]=\DSA[x]$.
\end{lemma}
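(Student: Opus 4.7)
The plan is to prove the two claims directly from the LF-mapping formula and the fundamental identity $\SA[\LF(i)] = \SA[i]-1$, both of which are recalled earlier in the paper.

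First, I would establish $\LF(x-1) = \LF(x)-1$. Since $[x-1,x]$ lies within a $\BWT$ run, we have $\BWT[x-1]=\BWT[x]=c$ for some symbol $c$. Using the LF-mapping formula $\LF(i) = C[c] + \rank_c(L,i)$ with $L=\BWT$ and $c=L[i]$, I get
\[
\LF(x) \;=\; C[c] + \rank_c(L,x) \;=\; C[c] + \rank_c(L,x-1) + 1 \;=\; \LF(x-1) + 1,
\]
where the middle equality uses $L[x]=c$, so that the $\rank_c$ value increments by exactly one when moving from $x-1$ to $x$.

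Next, I would derive the equation on $\DSA$ by applying the identity $\SA[\LF(i)] = \SA[i]-1$ to both $i=x$ and $i=x-1$, combined with $\LF(x)-1 = \LF(x-1)$ from the first step. By definition of the differential suffix array,
\[
\DSA[\LF(x)] \;=\; \SA[\LF(x)] - \SA[\LF(x)-1] \;=\; \SA[\LF(x)] - \SA[\LF(x-1)]
\;=\; (\SA[x]-1)-(\SA[x-1]-1) \;=\; \DSA[x].
\]

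There is no real obstacle here; the only subtlety is noticing that we need $\LF(x)-1 = \LF(x-1)$ (not just some arbitrary predecessor) in order to pass from the definition of $\DSA$ at position $\LF(x)$ to the two SA-values whose difference telescopes. That is precisely what the run-hypothesis guarantees via the first step.
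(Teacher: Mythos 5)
Your proof is correct and follows essentially the same route as the paper: the run hypothesis gives $\BWT[x-1]=\BWT[x]$, hence $\LF(x-1)=\LF(x)-1$ by the $\LF$ formula, and then applying $\SA[\LF(i)]=\SA[i]-1$ at $i=x$ and $i=x-1$ makes the difference telescope to $\DSA[\LF(x)]=\DSA[x]$. The only difference is that you spell out the rank computation that the paper leaves implicit, which is fine.
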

\begin{proof}
Since $x$ is not the first position in a run of $\BWT$, it holds that
$\BWT[x-1] = \BWT[x]$, and thus $\LF(x-1)=\LF(x)-1$ follows from the 
formula of $\LF$. Therefore, if $y=\LF(x)$, we have $\SA[y]=\SA[x]-1$ and 
$\SA[y-1]=\SA[\LF(x-1)]= \SA[x-1]-1$; therefore $\DSA[y]=\DSA[x]$. See the 
bottom of Figure~\ref{fig:phi} (other parts are used in the next sections).
\qed
\end{proof}

\begin{figure}[t]
\centerline{\includegraphics[width=8cm]{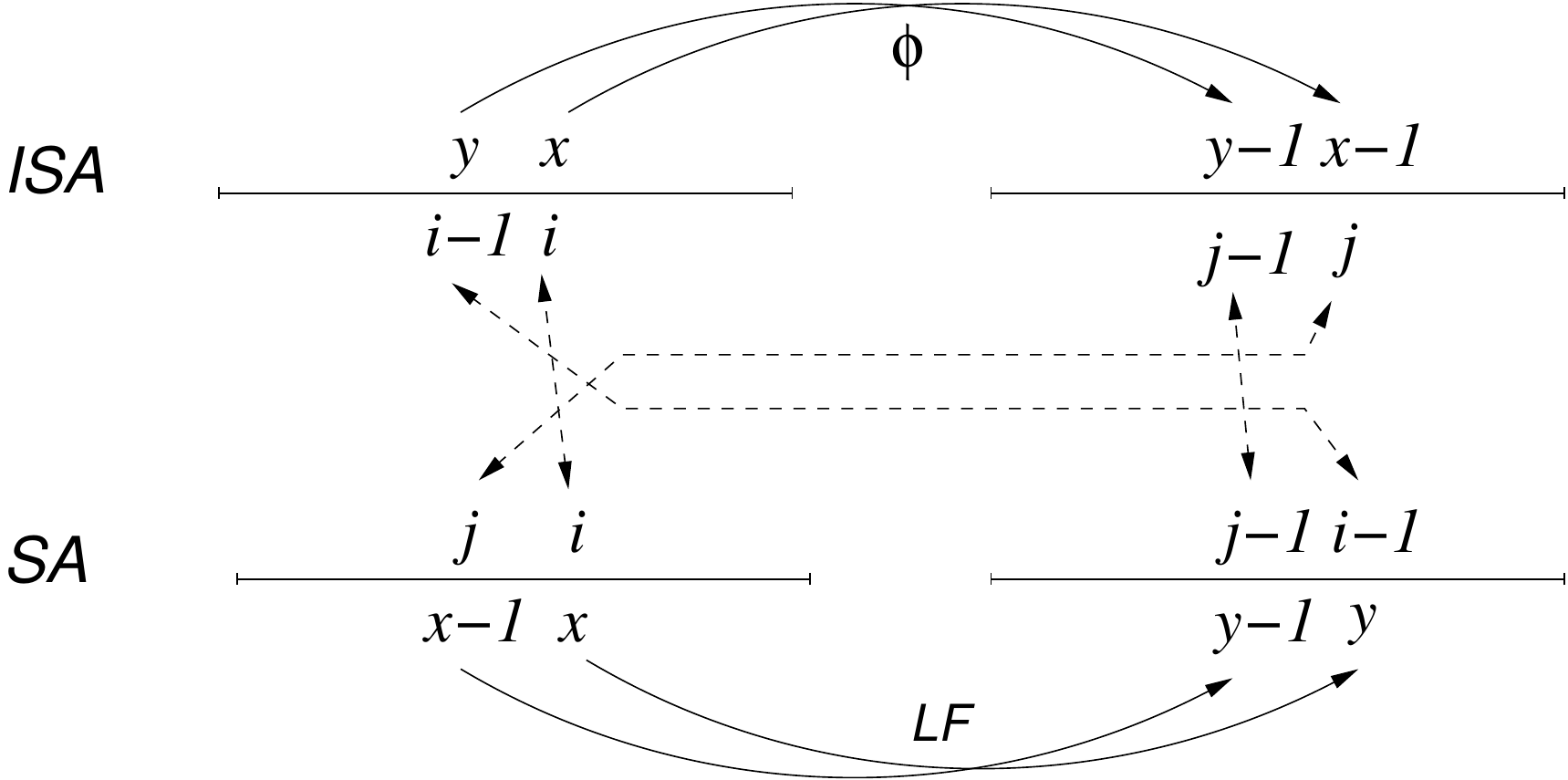}}
\caption{Illustration of Lemmas~\ref{lem:phi} and \ref{lem:disa}. Segments
represent phrases in $\ISA$ and runs in $\SA$.}
\label{fig:phi}
\end{figure}

Therefore, Lemmas~\ref{lem:pass} and \ref{lem:passes}
apply on $W=\DSA$, $p_i^- = p_i^+ = p_i$ being the $r$ positions where runs 
start in $\BWT$, and $\pi=\LF$. In this case, $\lambda = 1$.

Note that the height of the grammar is $k=O(\log(n/r))$, since the 
nonterminals of a round use only nonterminals from the previous rounds. Our
construction ends with a sequence of $O(r)$ symbols, not with a single initial 
symbol. While we could add $O(r)$ nonterminals to have a single initial
symbol and a grammar of height $O(\log n)$, we maintain the grammar in the
current form to enable extraction in time $O(\log(n/r))$.

To efficiently extract any value $\SA[i]$, we associate with each nonterminal 
$X$ the sum of the $\DSA$ values, $d(X)$, and the number of positions, $l(X)$, 
it expands to. We also sample one out of $n/r$ positions of $\SA$, storing the 
value $\SA[i\cdot(n/r)]$, the pointer to the corresponding nonterminal in the
compressed $\DSA$ sequence.
For each position in the compressed $\DSA$ sequence, we also store its
starting position, $pos$, in the original sequence, and the corresponding left
absolute value, $abs=\SA[pos-1]$. Since there cannot be more than $n/r$ 
nonterminals between two sampled positions, a binary search on the fields
$pos$ finds, in time $O(\log(n/r))$, the nonterminal of the compressed $\DSA$ 
sequence we must expand to find any $\SA[i]$, and the desired offset inside it,
$o \leftarrow i-pos$, as well as the sum of the $\DSA$ values to the left, 
$a \leftarrow abs$. We then descend from the node of this nonterminal in the 
grammar tree. At each node $X \rightarrow YZ$, we descend to the left child if
$l(Y) \ge o$ and to the rigth child otherwise. If we descend to the right 
child, we update $o \leftarrow i-l(Y)$ and $a \leftarrow a+d(Y)$. For 
rules of the form $X \rightarrow Y^t$, we determine which 
copy of $Y$ we descend to, $j=\lceil o/l(Y)\rceil$, and update 
$o \leftarrow o-(j-1) \cdot l(Y)$ and $a \leftarrow a+(j-1)\cdot d(Y)$.
When we finally reach the leaf corresponding to $\DSA[i]$ (with $o=0$),
we return $\SA[i] = a + \DSA[i]$. The total time is $O(\log(n/r)+k)$.
Note that, once a cell is found, each subsequent cell can be extracted in 
constant time.%
\footnote{To save space, we may store $pos$ only for one 
out of $k$ symbols in the compressed $\DSA$ sequence, and complete the binary 
search with an $O(k)$-time sequential scan. We may also store one top-level 
sample out of $(n/r)^2$. This retains the time complexity and reduces the total
sampling space to $o(r)+O(1)$ words.}

\begin{theorem} \label{thm:dsa}
Let the $\BWT$ of a text $T[1..n]$ contain $r$ runs. Then
there exists a data structure using $O(r \log(n/r))$ words that
can retrieve any $\ell$ consecutive values of its suffix array in time 
$O(\log(n/r)+\ell)$.
\end{theorem}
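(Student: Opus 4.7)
The plan is to grammar-compress $\DSA$ using the machinery of Lemmas~\ref{lem:pass} and \ref{lem:passes}, and then augment the resulting run-length context-free grammar so that random access translates into a single root-to-leaf descent. First I would verify the hypotheses of Lemma~\ref{lem:pass} on $W = \DSA$. Take $\pi = \LF$, which is a single cycle on $[1..n]$, and let $p_1 < p_2 < \cdots < p_r$ be the $r$ starting positions of the runs of $\BWT$, with $p_i^- = p_i^+ = p_i$. For any $k \notin \{p_1,\ldots,p_r\}$, positions $k-1$ and $k$ lie in the same $\BWT$ run, so Lemma~\ref{lem:dsa} gives both $\LF(k-1) = \LF(k)-1$ and $\DSA[\LF(k)] = \DSA[k]$, which are exactly conditions (a) and (b). Hence Lemma~\ref{lem:passes} applies with $\lambda = 1$ and produces an RLCFG compressing $\DSA$ to size $O(r\log(n/r))$ with height $O(\log(n/r))$.

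Next I would augment the grammar. For each nonterminal $X$ I store $l(X)$, the length of the string it derives, and $d(X)$, the sum of the $\DSA$-values in that string; both can be computed bottom-up in time linear in the grammar. The compressed top-level sequence has $O(r)$ symbols; for each such symbol I store its starting position $pos$ in $\DSA$ and the prefix sum $abs = \SA[pos-1]$. To support queries I additionally sample $\SA[i \cdot (n/r)]$ together with a pointer to the top-level symbol covering that position, using $O(r)$ extra words.

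To answer $\SA[i]$, I would binary-search among the $pos$ fields of the top-level symbols lying between the two sampled positions surrounding $i$; since there are at most $n/r$ such symbols, this takes $O(\log(n/r))$ time and yields the symbol $X$ covering $i$, the offset $o \leftarrow i - pos$ inside it, and the running sum $a \leftarrow abs$. Then I descend through the grammar: at $X \to YZ$ I go left if $l(Y) \ge o$, otherwise right and update $o \leftarrow o - l(Y)$, $a \leftarrow a + d(Y)$; at a run-length rule $X \to Y^t$ I jump directly to copy $j = \lceil o/l(Y)\rceil$ and update $o$ and $a$ accordingly. When I reach the leaf I return $a + \DSA[i]$. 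The descent costs $O(\log(n/r))$, matching the grammar height. For the remaining $\ell-1$ consecutive entries, I traverse the derivation tree in order by following the standard leaf-to-next-leaf procedure (go up until a right sibling appears, descend leftmost to the next leaf), maintaining $a$ incrementally; this is $O(1)$ amortized per leaf because each edge of the grammar tree is traversed a constant number of times across the whole scan, yielding the total bound $O(\log(n/r) + \ell)$.

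The delicate point is really the grammar-size argument, but that is already packaged in Lemma~\ref{lem:passes}; the only obstacle specific to this theorem is arguing that amortized $O(1)$-time leaf-to-leaf traversal is correct, which follows from the fact that each internal node of the derivation DAG is entered at most twice per scan (once going down, once going up) and that the fields $l(Y)$, $d(Y)$ stored at nodes make the updates constant-time even through run-length productions.
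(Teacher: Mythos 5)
Your proposal is correct and follows essentially the same route as the paper: it verifies the hypotheses of Lemma~\ref{lem:pass} via Lemma~\ref{lem:dsa} with $\pi=\LF$ and $\lambda=1$, invokes Lemma~\ref{lem:passes} to get the $O(r\log(n/r))$-size RLCFG of height $O(\log(n/r))$, and augments it with exactly the same fields ($l(X)$, $d(X)$, per-top-symbol $pos$ and $abs$, plus $n/r$-spaced $\SA$ samples) to support the binary search, the root-to-leaf descent, and constant-amortized extraction of subsequent cells. The only difference is that you spell out the leaf-to-leaf traversal argument that the paper leaves as a remark, which is a welcome addition but not a different proof.
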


\subsection{Accessing $\ISA$} \label{sec:isa}

A similar method can be used to access inverse suffix array cells, $\ISA[i]$.
Let us define $\DISA[i] = \ISA[i]-\ISA[i-1]$ for all $i>1$, and 
$\DISA[1]=\ISA[1]$. The role of the runs in $\SA$ will now be played by the
phrases in $\ISA$, which will be defined analogously as in the proof of
Lemma~\ref{lem:find_neighbours}: Phrases in $\ISA$ start at the positions
$\SA[p]$ such that a new run starts in $\BWT[p]$. Instead
of $\LF$, we define the cycle $\phi(i)=\SA[\ISA[i]-1]$ if 
$\ISA[i]>1$ and $\phi(i)=\SA[n]$ otherwise. We then have the following lemmas.

\begin{lemma} \label{lem:phi}
Let $[i-1..i]$ be within a phrase of $\ISA$. Then it holds $\phi(i-1)=\phi(i)-1$.
\end{lemma}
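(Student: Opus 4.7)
My plan is to translate the hypothesis on $\ISA$ into a statement about $\BWT$ runs, and then apply the standard $\LF$-mapping identity used already in Lemma~\ref{lem:dsa}. Let $p = \ISA[i]$, so that $\SA[p]=i$ and, by definition of $\phi$, $\phi(i)=\SA[p-1]$ (the case $p=1$ forces $i=\SA[1]=n$, and then $n$ itself is a phrase start, so the hypothesis is vacuous). The assumption that $[i-1..i]$ lies inside a phrase of $\ISA$ means, by the definition of phrases (they begin exactly at positions $\SA[p']$ for which $\BWT[p']$ opens a new run), that $p=\ISA[i]$ is \emph{not} the start of a new run in $\BWT$; equivalently, $\BWT[p-1]=\BWT[p]$.

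Next I would locate $i-1$ in $\SA$ through $\LF$. Since $\BWT[p]=T[\SA[p]-1]=T[i-1]$, the identity $\SA[\LF(p)] = \SA[p]-1$ gives $\LF(p)=\ISA[i-1]$; call this $q$. Because $\BWT[p-1]=\BWT[p]$, the $\LF$ formula $\LF(j)=C[\BWT[j]]+\rank_{\BWT[j]}(\BWT,j)$ yields $\LF(p-1)=\LF(p)-1=q-1$, which is exactly the step already invoked in the proof of Lemma~\ref{lem:dsa} for positions inside the same $\BWT$ run.

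Putting these together,
\[
\SA[q-1] \;=\; \SA[\LF(p-1)] \;=\; \SA[p-1]-1,
\]
and therefore
\[
\phi(i-1) \;=\; \SA[\ISA[i-1]-1] \;=\; \SA[q-1] \;=\; \SA[p-1]-1 \;=\; \phi(i)-1,
\]
as claimed. The only delicate point — and the one I would be careful about — is checking that all subscripts are well defined: we must have $p>1$ (handled by the vacuous edge case above) and $\ISA[i-1]>1$ so that $\phi(i-1)$ is given by the $\SA[\ISA[i-1]-1]$ branch rather than the wrap-around; but $\ISA[i-1]=q=\LF(p)\ge 2$ whenever $p\ge 2$ and $\BWT[p]\neq\$$, and the only suffix whose preceding character is $\$$ is the one starting at position $1$, which is easy to dispose of. Apart from this bookkeeping, the argument is the $\ISA$-side mirror image of Lemma~\ref{lem:dsa}, essentially a direct reading of the bottom of Figure~\ref{fig:phi}.
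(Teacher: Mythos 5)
Your proof is correct and is essentially the paper's own argument: identify $p=\ISA[i]$, note that the phrase hypothesis means $p$ is not a run start so $\BWT[p-1]=\BWT[p]$, use $\LF(p-1)=\LF(p)-1$ together with $\SA[\LF(\cdot)]=\SA[\cdot]-1$, and unwind the definitions of $\phi$. The extra bookkeeping on the edge cases ($p=1$, $\ISA[i-1]=1$) is fine and only makes explicit what the paper leaves implicit.
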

\begin{proof}
Consider the pair of positions $T[i-1..i]$ within a phrase. Let them be pointed 
from $\SA[x]=i$ and $\SA[y]=i-1$, therefore $\ISA[i] = x$, $\ISA[i-1]=y$,
and $\LF(x)=y$ (see Figure~\ref{fig:phi}). Now, since $i$ is not a phrase 
beginning, $x$ is not the first position in a $\BWT$ run. Therefore,
$\BWT[x-1]=\BWT[x]$, from which
it follows that $\LF(x-1)=\LF(x)-1=y-1$. Now let $\SA[x-1]=j$, that is, 
$j=\phi(i)$. Then $\phi(i-1)=\SA[\ISA[i-1]-1]=\SA[y-1]=\SA[\LF(x-1)]=\SA[x-1]-1
=j-1=\phi(i)-1$.
\qed
\end{proof}

\begin{lemma} \label{lem:disa}
Let $[i-1..i]$ be within a phrase of $\ISA$. Then it holds $\DISA[i]=\DISA[\phi(i)]$.
\end{lemma}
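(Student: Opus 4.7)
The plan is to reduce the claim to a short algebraic identity combining Lemma~\ref{lem:phi} with the defining property of $\phi$. By definition, $\DISA[i]=\ISA[i]-\ISA[i-1]$ and $\DISA[\phi(i)]=\ISA[\phi(i)]-\ISA[\phi(i)-1]$, so it suffices to show that these two differences agree.

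I would first observe that applying $\ISA$ to both sides of the defining equation $\phi(k)=\SA[\ISA[k]-1]$ yields the general identity $\ISA[\phi(k)]=\ISA[k]-1$ (whenever $\ISA[k]>1$). Specializing at $k=i$ gives $\ISA[\phi(i)]=\ISA[i]-1$. Second, since $[i-1..i]$ lies within a single phrase of $\ISA$, Lemma~\ref{lem:phi} gives $\phi(i-1)=\phi(i)-1$, whence $\ISA[\phi(i)-1]=\ISA[\phi(i-1)]=\ISA[i-1]-1$.

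Combining these two equalities,
\[
\DISA[\phi(i)] \;=\; \ISA[\phi(i)]-\ISA[\phi(i)-1] \;=\; (\ISA[i]-1)-(\ISA[i-1]-1) \;=\; \ISA[i]-\ISA[i-1] \;=\; \DISA[i],
\]
which is the desired identity. Geometrically, this mirrors the picture at the bottom of Figure~\ref{fig:phi} used for $\DSA$: Lemma~\ref{lem:phi} says that $\phi$ translates contiguous pairs inside an $\ISA$-phrase to contiguous pairs, and subtracting the shared offset of $1$ from both components kills it in the difference.

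The main bookkeeping point, rather than an obstacle, is the wrap-around boundary case $\ISA[i]=1$ (equivalently $i=n$), where $\phi$ takes its alternate branch $\phi(i)=\SA[n]$; I would exclude this, consistent with the convention used in Lemma~\ref{lem:dsa}, by noting that the hypothesis that $[i-1..i]$ is interior to a phrase of $\ISA$ rules out the case where $i$ is a phrase start. Once Lemma~\ref{lem:phi} is available, the proof is essentially a one-line unfolding of definitions, so no technical difficulty is anticipated.
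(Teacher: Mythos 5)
Your proof is correct and follows essentially the same route as the paper: the paper's one-line argument reads off $\ISA[\phi(i)]=\ISA[i]-1$ and $\ISA[\phi(i)-1]=\ISA[i-1]-1$ from the variables in the proof of Lemma~\ref{lem:phi}, which is exactly your computation, and your handling of the boundary case is fine since $n=\SA[1]$ is always a phrase start. Nothing further is needed.
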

\begin{proof}
From the proof of Lemma~\ref{lem:phi}, it follows that $\DISA[i]=x-y=\DISA[j]=
\DISA[\phi(i)]$. 
\qed
\end{proof}

As a result, Lemmas~\ref{lem:pass} and \ref{lem:passes} apply with 
$W=\DISA$, $p_i^+ = p_i+1$ and $p_i^-=p_i$, where $p_i$ are the $r$ positions 
where phrases start in $\ISA$, and $\pi = \phi$. Now $\lambda=2$.
We use a structure analogous to that of Section~\ref{sec:dsa} to obtain the 
following result.

\begin{theorem} \label{thm:disa}
Let the $\BWT$ of a text $T[1..n]$ contain $r$ runs. Then there exists a data 
structure using $O(r \log(n/r))$ words that can retrieve any $\ell$ 
consecutive values of its inverse suffix array in time 
$O(\log(n/r)+\ell)$.
\end{theorem}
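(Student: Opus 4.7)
The plan is to mirror the approach used for Theorem~\ref{thm:dsa}, substituting the runs of $\BWT$ (with the $\LF$ cycle and $\DSA$) by the phrases of $\ISA$ (with the $\phi$ cycle and $\DISA$) established in Lemmas~\ref{lem:phi} and~\ref{lem:disa}. First I would verify that $W=\DISA$ meets the hypotheses of Lemma~\ref{lem:pass}: taking $\pi=\phi$ and, for each of the $r$ phrase-start positions $p_i$ of $\ISA$, setting $p_i^-=p_i$ and $p_i^+=p_i+1$, Lemma~\ref{lem:phi} gives $\phi(k-1)=\phi(k)-1$ whenever $k$ is not on a phrase boundary (i.e., whenever $[k-1,k]$ lies inside a phrase), and Lemma~\ref{lem:disa} gives $\DISA[\phi(k)]=\DISA[k]$ under the same condition. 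That $\phi$ is a single $n$-cycle on $[1..n]$ follows from the fact that $\phi$ inversely tracks consecutive text positions. Hence Lemma~\ref{lem:passes} applies with $\lambda=\max_i(p_i^+-p_i^-+1)=2$, producing a RLCFG for $\DISA$ of size $O(\lambda r\log(n/r))=O(r\log(n/r))$ and height $O(\log(n/r))$, terminating in a top sequence of $O(r)$ symbols rather than a single initial symbol.

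Second, I would augment each nonterminal $X$ with two integer fields: the length $l(X)$ of the string it expands to and the sum $d(X)$ of the $\DISA$ values in that expansion (with the natural rules $l(X)=l(Y)+l(Z)$, $d(X)=d(Y)+d(Z)$ for $X\to YZ$, and $l(X)=t\cdot l(Y)$, $d(X)=t\cdot d(Y)$ for $X\to Y^t$). On top of the $O(r)$ symbols of the compressed sequence I would store, for each such symbol, the text position $pos$ where its expansion begins in $\DISA$ together with the absolute value $abs=\ISA[pos-1]$, and sample one full value $\ISA[i\cdot(n/r)]$ every $n/r$ positions with a pointer to the symbol of the compressed sequence covering that position. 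Since at most $n/r$ compressed symbols lie between two consecutive samples, a binary search on the $pos$ fields within the corresponding block locates, in $O(\log(n/r))$ time, the symbol whose expansion contains any queried index~$i$, together with the local offset $o=i-pos$ and a running prefix sum $a=abs$.

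Third, starting from that symbol I would descend the grammar tree exactly as in Section~\ref{sec:dsa}: at a binary rule $X\to YZ$ I go left if $l(Y)\ge o$ and otherwise go right after updating $o\gets o-l(Y)$ and $a\gets a+d(Y)$; at a run-length rule $X\to Y^t$ I compute $j=\lceil o/l(Y)\rceil$, descend into the $j$th copy of $Y$, and update $o\gets o-(j-1)l(Y)$ and $a\gets a+(j-1)d(Y)$. The descent costs $O(\log(n/r))$ since the grammar height is $O(\log(n/r))$, and upon reaching the leaf corresponding to $\DISA[i]$ we return $\ISA[i]=a+\DISA[i]$. To obtain the remaining $\ell-1$ consecutive entries I would continue the in-order traversal of the grammar tree, so that each additional leaf is reached in amortized $O(1)$ time (the usual compressed-tree leaf-to-leaf walk), while maintaining the running prefix sum $a$; the $j$th returned value is then $\ISA[i+j-1]=a$ at that leaf. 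Summing the cost of locating the first entry and streaming the remaining ones yields the claimed $O(\log(n/r)+\ell)$ time within $O(r\log(n/r))$ space.

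The only subtle step is the verification that Lemma~\ref{lem:pass} truly applies to $(\DISA,\phi)$ with the chosen intervals $[p_i^-,p_i^+]$: the $\phi$ cycle behaves like an ``inverse'' analogue of $\LF$, so one must check carefully that the exceptional positions where $\phi(k-1)\neq\phi(k)-1$ or $\DISA[\phi(k)]\neq\DISA[k]$ are exactly (or contained in) the positions $k\in\{p_i,p_i+1\}$, which is precisely the content of Lemmas~\ref{lem:phi} and~\ref{lem:disa}. Once this is in place the rest is a direct translation of the $\DSA$ construction, so I expect no further obstacle.
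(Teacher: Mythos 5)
Your proposal is correct and follows essentially the same route as the paper: it applies Lemmas~\ref{lem:pass} and~\ref{lem:passes} to $W=\DISA$ with $\pi=\phi$, $p_i^-=p_i$, $p_i^+=p_i+1$ (so $\lambda=2$), justified by Lemmas~\ref{lem:phi} and~\ref{lem:disa}, and then reuses the access structure of Section~\ref{sec:dsa} (fields $l(X)$, $d(X)$, samples, binary search on $pos$, grammar descent, and constant-time streaming of subsequent cells). The paper states this only as ``a structure analogous to that of Section~\ref{sec:dsa}''; your write-up just spells out those analogous details explicitly.
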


\subsection{Accessing $\LCP$, Revisited} \label{sec:dlcp}

In Section~\ref{sec:lcp} we showed how to access array $\LCP$ efficiently
if we can access $\SA$.
However, for the full suffix tree functionality we will develop in
Section~\ref{sec:stree}, we will need operations more sophisticated than
just accessing cells, and these will be carried out on a grammar-compressed
representation. In this section we show that the differential array
$\DLCP[1..n]$, where $\DLCP[i]=\LCP[i]-LCP[i-1]$ if $i>1$ and
$\DLCP[1]=\LCP[1]$, can be represented by a grammar of size $O(r\log(n/r))$.

\begin{lemma} \label{lem:dlcp}
Let $[x-2,x]$ be within a run of $\BWT$. Then $\DLCP[\LF(x)]=\DLCP[x]$.
\end{lemma}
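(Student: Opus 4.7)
The plan is to unfold the definitions and use the fact that, inside a $\BWT$ run, the $\LF$ mapping on three consecutive positions gives three consecutive positions, so the two adjacent pairs of suffixes whose LCPs we need are obtained by prepending the same character to the two adjacent pairs of suffixes before $\LF$ was applied.

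More concretely, the hypothesis that $[x-2,x]$ lies within a run of $\BWT$ means $\BWT[x-2]=\BWT[x-1]=\BWT[x]=c$ for some character $c$. The $\LF$ formula $\LF(i)=C[c]+\rank_c(\BWT,i)$ then yields $\LF(x-1)=\LF(x)-1$ and $\LF(x-2)=\LF(x)-2$, exactly as in the proof of Lemma~\ref{lem:dsa} but one step further. In particular, the three positions $\LF(x-2),\LF(x-1),\LF(x)$ form a contiguous block inside the SA range of suffixes starting with $c$, and their $\SA$ values are $\SA[x-2]-1$, $\SA[x-1]-1$, $\SA[x]-1$.

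Now I would compute $\LCP[\LF(x)]$ and $\LCP[\LF(x)-1]$ separately. By definition, $\LCP[\LF(x)]$ is the length of the longest common prefix of the suffixes at $\SA$-positions $\LF(x)-1$ and $\LF(x)$, i.e.\ of $T[\SA[x-1]-1..]$ and $T[\SA[x]-1..]$. Both begin with $c$, and stripping that leading $c$ gives the suffixes $T[\SA[x-1]..]$ and $T[\SA[x]..]$ whose LCP is by definition $\LCP[x]$; hence $\LCP[\LF(x)]=\LCP[x]+1$. The same argument applied to the pair $(x-2,x-1)$ and its image under $\LF$ gives $\LCP[\LF(x)-1]=\LCP[\LF(x-1)]=\LCP[x-1]+1$. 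Subtracting,
\[
\DLCP[\LF(x)] \;=\; \LCP[\LF(x)]-\LCP[\LF(x)-1] \;=\; (\LCP[x]+1)-(\LCP[x-1]+1) \;=\; \DLCP[x].
\]

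The only step that needs a bit of care is the identity $\LCP[\LF(x)]=\LCP[x]+1$: one must justify not only that prepending $c$ to the two suffixes adds one matching character but also that no other suffix in lexicographic order sneaks in between $T[\SA[x-1]-1..]$ and $T[\SA[x]-1..]$, which could otherwise decrease the LCP. This is precisely what the contiguity $\LF(x-1)=\LF(x)-1$, derived from $\BWT[x-1]=\BWT[x]$ via the $\LF$ formula, guarantees; the same observation handles the pair $(x-2,x-1)$ using $\BWT[x-2]=\BWT[x-1]$. Thus requiring all three of $\BWT[x-2],\BWT[x-1],\BWT[x]$ to sit in the same run is essential for the argument, exactly matching the hypothesis of the lemma.
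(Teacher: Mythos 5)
Your proof is correct and follows essentially the same route as the paper's: it uses the $\LF$ contiguity $\LF(x-1)=\LF(x)-1$, $\LF(x-2)=\LF(x)-2$ inside the run (the content of Lemma~\ref{lem:dsa}), the fact that $\SA[\LF(i)]=\SA[i]-1$, and the observation that prepending the common run character adds exactly one to each of the two relevant LCP values, so the difference is preserved. Your closing remark about why contiguity is needed (so that $\LCP[\LF(x)]$ and $\LCP[\LF(x)-1]$ really compare the intended pairs of suffixes) is precisely the role the paper assigns to Lemma~\ref{lem:dsa}.
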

\begin{proof}
Let $i=\SA[x]$, $j=\SA[x-1]$, and $k=\SA[x-2]$.
Then $\LCP[x]=lcp(T[i..n],T[j..n])$ and $\LCP[x-1]=lcp(T[j..n],T[k..n])$.
We know from Lemma~\ref{lem:dsa} that, if $y=\LF(x)$, then 
$\LF(x-1)=y-1$ and $\LF(x-2)=y-2$. Also, $\SA[y]=i-1$, $\SA[y-1]=j-1$,
and $\SA[y-2]=k-1$.
Therefore, $\LCP[\LF(x)]=\LCP[y]=lcp(T[\SA[y]..n],T[\SA[y-1]..n) =
lcp(T[i-1..n],T[j-1..n])$. Since $x$ is not the first position in a $\BWT$
run, it holds that $T[j-1] = \BWT[x-1] = \BWT[x] = T[i-1]$,
and thus $lcp(T[i-1..n],T[j-1..n])=1+lcp(T[i..n],T[j..n])=1+\LCP[x]$.
Similarly, $\LCP[\LF(x)-1]=\LCP[y-1]=lcp(T[\SA[y-1]..n],T[\SA[y-2]..n) =
lcp(T[j-1..n],T[k-1..n])$. Since $x-1$ is not the first position in a $\BWT$
run, it holds that $T[k-1] = \BWT[x-2] = \BWT[x-1] = T[j-1]$,
and thus $lcp(T[j-1..n],T[k-1..n])=1+lcp(T[j..n],T[k..n])=1+\LCP[x-1]$.
Therefore $\DLCP[y]=\LCP[y]-\LCP[y-1]=(1+\LCP[x])-(1+\LCP[x-1])=\DLCP[x]$.
\qed
\end{proof}

Therefore, Lemmas~\ref{lem:pass} and \ref{lem:passes} appliy with 
$W=\DLCP$, $p_i^+=p_i+2$ and $p_i^-=p_i$, where $p_i$ are the $r$ positions 
where runs start in $\BWT$, and $\pi=\LF$. In this case, $\lambda = 3$. 
We use a structure analogous to that of Section~\ref{sec:dsa} to obtain the 
following result.

\begin{theorem} \label{thm:dlcp}
Let the $\BWT$ of a text $T[1..n]$ contain $r$ runs. Then there exists a data 
structure using $O(r \log(n/r))$ words that can retrieve any $\ell$ 
consecutive values of its longest common prefix array in time 
$O(\log(n/r)+\ell)$.
\end{theorem}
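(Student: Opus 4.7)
The plan is to mirror the construction used for Theorems~\ref{thm:dsa} and \ref{thm:disa}, with $\DLCP$ playing the role of $\DSA$. The key structural fact has already been established in Lemma~\ref{lem:dlcp}: whenever $[x-2,x]$ lies inside a $\BWT$ run, we have $\DLCP[\LF(x)] = \DLCP[x]$, and moreover $\LF(x-1) = \LF(x)-1$ and $\LF(x-2) = \LF(x)-2$ (the latter follows from Lemma~\ref{lem:dsa}). Thus, defining $p_i$ as the $r$ starting positions of runs in $\BWT$, and setting $p_i^- = p_i$ and $p_i^+ = p_i + 2$, the permutation $\pi = \LF$ on $W = \DLCP$ satisfies conditions (a) and (b) of Lemma~\ref{lem:pass} outside the $r$ windows $[p_i^-..p_i^+]$, each of length $\lambda = 3$.

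Next I would invoke Lemma~\ref{lem:passes} with these parameters to compress $\DLCP$ into a run-length context-free grammar of size $O(\lambda\, r \log(n/r)) = O(r\log(n/r))$ and height $O(\log(n/r))$, ending in a top-level sequence of $O(r)$ symbols (rather than a single start symbol, so as to preserve the logarithmic extraction time). Each nonterminal $X$ is augmented with two numeric fields: $l(X)$, the length of its expansion, and $d(X)$, the sum of the $\DLCP$ values in its expansion. For rules $X \to Y^t$ these are obtained by multiplication, so the augmentation only adds $O(1)$ words per nonterminal.

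For random access I would sample $\LCP[i\cdot (n/r)]$ together with a pointer to the corresponding symbol of the top-level sequence, and record for each position of that sequence its starting offset $pos$ in $\DLCP$ together with the cumulative sum $abs = \LCP[pos-1]$. To answer a query $\LCP[i]$, a binary search on the $pos$ fields between two consecutive samples (there are at most $n/r$ of them) identifies in $O(\log(n/r))$ time the top-level nonterminal containing position $i$, together with the internal offset $o \gets i - pos$ and the accumulated value $a \gets abs$. From there I descend the grammar tree in the standard way: at a rule $X \to YZ$, go left if $l(Y) \ge o$, else go right updating $o \gets o - l(Y)$ and $a \gets a + d(Y)$; at a rule $X \to Y^t$, compute $j = \lceil o / l(Y) \rceil$ and update $o \gets o - (j-1)\,l(Y)$ and $a \gets a + (j-1)\,d(Y)$. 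When the descent reaches the leaf corresponding to position $i$, we have $\LCP[i] = a + \DLCP[i]$. The total time is $O(\log(n/r))$ for the first cell, after which each additional consecutive cell is obtained in $O(1)$ amortized time by walking along the leaves of the grammar tree and carrying the running sum, giving $O(\log(n/r)+\ell)$ overall.

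No step presents a real obstacle, since the whole mechanism is essentially the one already used for $\DSA$ and $\DISA$; the only place where the proof depends on properties specific to $\LCP$ is Lemma~\ref{lem:dlcp}, and the only change in the accounting is the slightly larger $\lambda = 3$, which is absorbed into the big-$O$ and does not affect the final $O(r\log(n/r))$ space bound.
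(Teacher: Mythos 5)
Your proposal is correct and follows essentially the same route as the paper: the paper likewise applies Lemma~\ref{lem:pass} and Lemma~\ref{lem:passes} to $W=\DLCP$ with $\pi=\LF$, $p_i^-=p_i$, $p_i^+=p_i+2$ (so $\lambda=3$), justified by Lemma~\ref{lem:dlcp}, and then reuses the access machinery of Section~\ref{sec:dsa} (fields $l(X)$, $d(X)$, sampled absolute values, binary search on $pos$, and grammar-tree descent). The only difference is that the paper states this construction by reference, whereas you spell it out explicitly; the content is the same.
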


\subsection{Accessing the Text, Revisited}
\label{sec:grammar-access}

In Section~\ref{sec:extract} we devised a data structure that uses
$O(r\log(n/r))$ words and extracts any substring of length $\ell$ from $T$
in time $O(\log(n/r)+\ell\log(\sigma)/w)$. We now obtain a result that, although
does not improve upon the representation of Section~\ref{sec:extract}, is an
easy consequence of our results on accessing $\SA$, $\ISA$, and $\LCP$, and will
be used in Section~\ref{sec:grammar} to shed light on the relation between 
run-compression, grammar-compression, Lempel-Ziv compression, and bidirectional
compression.

Let us define $\phi$ just as in Section~\ref{sec:isa}; we also define phrases
in $T$ exactly as those in $\ISA$. We can then use Lemma~\ref{lem:phi}
verbatim on $T$ instead of $\ISA$, and also translate Lemma~\ref{lem:disa}
as follows.

\begin{lemma} \label{lem:S}
Let $T[i-1..i]$ be within a phrase. Then it holds $T[i-1]=T[\phi(i)-1]$.
\end{lemma}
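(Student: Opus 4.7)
The plan is to mirror the reasoning used in the proof of Lemma~\ref{lem:disa} (and Lemma~\ref{lem:phi}), replacing the statement about $\DISA$ differences with the corresponding claim about text symbols. The key observation is that the definition of $\BWT$ in terms of $\SA$ already gives us $T[i-1] = \BWT[\ISA[i]]$, and likewise $T[\phi(i)-1] = \BWT[\ISA[i]-1]$. So the lemma reduces to showing that these two adjacent $\BWT$ entries coincide.

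First, I would let $x = \ISA[i]$, so that $\SA[x] = i$ and, by definition of $\phi$, $\SA[x-1] = \phi(i)$ (the case $\ISA[i]=1$ cannot arise here, since by the definition of phrases in $T$ the starting position $\SA[\ISA[1]]=\SA[1]$ of the first phrase is $i=n$ or some earlier sampled position, but in any case $i$ would be a phrase boundary, contradicting the hypothesis). Thus
\[
\BWT[x] = T[\SA[x]-1] = T[i-1], \qquad \BWT[x-1] = T[\SA[x-1]-1] = T[\phi(i)-1].
\]

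Second, I would invoke the defining correspondence between phrases in $T$ (equivalently, in $\ISA$) and runs in $\BWT$: a phrase starts at position $\SA[p]$ precisely when $\BWT[p]$ begins a new run. Since $T[i-1..i]$ lies within a single phrase, position $i$ is not the start of a phrase, and consequently $x = \ISA[i]$ is not the first position of a run in $\BWT$. Therefore $\BWT[x-1] = \BWT[x]$, which combined with the two equalities above yields $T[i-1] = T[\phi(i)-1]$, as required.

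There is really no obstacle: once one imports the by-now-familiar bookkeeping that links the phrasing of $T$ to run boundaries in $\BWT$ (already used in Lemmas~\ref{lem:phi} and \ref{lem:disa}), the claim is a one-line consequence of the $\BWT$ definition $\BWT[p] = T[\SA[p]-1]$. The only care needed is the boundary case $\ISA[i]=1$, which is handled by noting that the hypothesis ``$[i-1..i]$ lies within a phrase'' excludes it.
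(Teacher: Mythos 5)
Your proof is correct and follows essentially the same route as the paper, which (via the proof of Lemma~\ref{lem:phi}) chains $T[i-1]=\BWT[x]=\BWT[x-1]=T[\phi(i)-1]$ for $x=\ISA[i]$, using that $i$ not being a phrase start means $x$ is not the first position of a $\BWT$ run. Your handling of the case $\ISA[i]=1$ is a harmless extra remark (and indeed that case is excluded since $\SA[1]=n$ is a phrase start), so nothing further is needed.
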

\begin{proof}
From the proof of Lemma~\ref{lem:phi}, it follows that $T[i-1]=\BWT[x]=
\BWT[x-1]=T[j-1]=T[\phi(i)-1]$.
\qed
\end{proof}

We can then apply Lemmas~\ref{lem:pass} and \ref{lem:passes} again, with $W=T$, 
$p_i^+ = p_i+1$ and $p_i^- = p_i-1$ (so $\lambda=3$), where $p_i$ are the $r$ 
positions where phrases start in $T$, and $\pi = \phi$. Therefore, there
exists a data structure (analogous to that of Section~\ref{sec:dsa}) 
of size $O(r\log(n/r))$ that
gives access to any substring $T[i..i+\ell-1]$ in time $O(\log(n/r)+\ell)$.

\section{A Run-Length Compressed Suffix Tree} \label{sec:stree}

In this section we show how to implement a compressed suffix tree within
$O(r\log(n/r))$ words, which solves a large set of navigation operations
in time $O(\log(n/r))$. The only exceptions are going to a child by some letter
and performing level ancestor queries on the tree depth. The first compressed
suffix tree for repetitive collections built on runs \cite{MNSV09}, but just
like the self-index, it needed $O(n/s)$ space to obtain $O(s\log n)$ time in key
operations like accessing $\SA$. Other compressed suffix trees for repetitive 
collections appeared later \cite{ACN13,NO16,FGNPS17}, but they do not offer 
formal space guarantees (see later). A recent one, instead,
uses $O(\overline{e})$ words and supports a number of operations in time 
typically $O(\log n)$ \cite{BC17}. The two space measures are not comparable.

\subsection{Compressed Suffix Trees without Storing the Tree}

Fischer et al.~\cite{FMN09} showed that a rather complete suffix tree
functionality including all the operations in Table~\ref{tab:stree}
can be efficiently supported by a representation where suffix tree nodes $v$
are identified with the suffix array intervals $\SA[v_l..v_r]$ they cover. 
Their representation builds on the following primitives:
\begin{enumerate}
\item Access to arrays $\SA$ and $\ISA$, in time we call $t_\SA$. 
\item Access to array $\LCP[1..n]$, in time we call $t_\LCP$.
\item Three special queries on $\LCP$:
\begin{enumerate}
	\item Range Minimum Query, $\RMQ(i,j)=\arg\min_{i\le k\le j} \LCP[k]$,
	choosing the leftmost one upon ties, in time we call $t_\RMQ$.
	\item Previous/Next Smaller Value queries,
		$\PSV(i)=\max (\{k<i,\LCP[k]<\LCP[i]\} \cup \{0\})$ and
		$\NSV(i)=\min (\{k>i,\LCP[k]<\LCP[i]\} \cup \{n+1\})$,
		in time we call $t_\SV$.
\end{enumerate}
\end{enumerate}

\begin{table}[t]
\begin{center}
\begin{tabular}{ll}
Operation & Description \\
\hline
{\em Root}() & Suffix tree root.\\
{\em Locate}($v$) & Text position $i$ of leaf $v$.\\
{\em Ancestor}($v,w$) & Whether $v$ is an ancestor of $w$.\\
{\em SDepth}($v$) & String depth for internal nodes, i.e., length of string represented by $v$. \\
{\em TDepth}($v$) & Tree depth, i.e., depth of tree node $v$. \\
{\em Count}($v$) & Number of leaves in the subtree of $v$. \\
{\em Parent}($v$) & Parent of $v$. \\
{\em FChild}($v$) & First child of $v$.\\
{\em NSibling}($v$) & Next sibling of $v$.\\
{\em SLink}($v$) & Suffix-link, i.e., if $v$ represents $a\cdot\alpha$ then
the node that represents $\alpha$, for $a\in [1..\sigma]$.\\
{\em WLink}($v,a$) & Weiner-link, i.e., if $v$ represents $\alpha$ then
the node that represents $a \cdot \alpha$.\\
{\em SLink}$^i$($v$) & Iterated suffix-link.\\
{\em LCA}($v,w$) & Lowest common ancestor of $v$ and $w$.\\
{\em Child}($v,a$) & Child of $v$ by letter $a$.\\
{\em Letter}($v,i$) & The $ith$ letter of the string represented by $v$.\\
{\em LAQ}$_S$($v,d$) & String level ancestor, i.e., the highest ancestor of $v$ with string-depth $\ge d$.\\
{\em LAQ}$_T$($v,d$) & Tree level ancestor, i.e., the ancestor of $v$ with tree-depth $d$.\\
\hline
\end{tabular}
\end{center}
\caption{Suffix tree operations.}
\label{tab:stree}
\end{table}

An interesting finding of Fischer et al.~\cite{FMN09} related to our results 
is that array $\PLCP$, which stores the $\LCP$ values in text order, can be 
stored in $O(r)$ words and accessed efficiently; therefore we can compute 
any $\LCP$ value in time $t_\SA$ (see also Fischer~\cite{Fis10}). 
We obtained a generalization of this property in Section~\ref{sec:lcp}.
They \cite{FMN09} also show how to represent the array $\TDE[1..n]$,
where $\TDE[i]$ is the tree-depth of the lowest common ancestor of the 
$(i-1)$th and $i$th suffix tree leaves ($\TDE[1]=0$). Fischer et 
al.~\cite{FMN09} represent
its values in text order in an array $\PTDE$, which just like $\PLCP$ can be 
stored in $O(r)$ words and accessed efficiently, giving access to $\TDE$ 
in time $t_\SA$. They use $\TDE$ to compute operations {\em TDepth} and 
{\em LAQ}$_T$ efficiently.

Abeliuk et al.~\cite{ACN13} show that primitives $\RMQ$, $\PSV$, and $\NSV$
can be implemented using a simplified variant of {\em range min-Max trees
(rmM-trees)} \cite{NS14}, consisting of a perfect binary tree on top of $\LCP$ 
where each node stores the minimum $\LCP$ value in its subtree. 
The three primitives are then computed in logarithmic time. They 
show that the slightly extended primitives
$\PSV'(i,d)=\max (\{k<i,\LCP[k]<d\} \cup \{0\})$ and
$\NSV'(i,d)=\min (\{k>i,\LCP[k]<d\} \cup \{n+1\})$,
can be computed with the same complexity $t_\SV$ of the basic $\PSV$ and $\NSV$
primitives, and they can be used to simplify some of the operations of Fischer 
et al.  \cite{FMN09}. 

The resulting time complexities are given in the second column of
Table~\ref{tab:streeops}, where $t_\LF$ is the time to compute function $\LF$ 
or its inverse, or to access a position in $\BWT$. Operation {\em WLink},
not present in Fischer et al.~\cite{FMN09}, is trivially obtained with two
$\LF$-steps. We note that most times 
appear multiplied by $t_\LCP$ in Fischer et al.~\cite{FMN09} because
their $\RMQ$, $\PSV$, and $\NSV$ structures do not store $\LCP$ values inside,
so they need to access the array all the time; this is not the case when we
use rmM-trees. The times of {\em NSibling} and {\em LAQ}$_S$ owe to 
improvements obtained with the extended primitives $\PSV'$ and $\NSV'$ 
\cite{ACN13}. The time for {\em Child}($v,a$) is obtained by binary searching 
among the $\sigma$ minima of $\LCP[v_l,v_r]$, and extracting the desired
letter (at position {\em SDepth}$(v)+1$) to compare with $a$.
Each binary search operation can be done with an extended primitive
$\RMQ'(i,j,m)$ that finds the $m$th left-to-right occurrence
of the minimum in a range. This is easily
done in $t_\RMQ$ time on a rmM-tree that stores, in addition, the number of 
times the minimum of each node occurs below it \cite{NS14}. Finally, the
complexities of {\em TDepth} and {\em LAQ}$_T$ make use of array $\TDE$.
While Fischer et al.~\cite{FMN09} use an $\RMQ$ operation to compute {\em
TDepth}, we note that
{\em TDepth}$(v) = 1 + \max(\TDE[v_l],\TDE[v_r+1])$, because the suffix tree 
has no unary nodes (they used this simpler formula only for leaves).%
\footnote{We observe that {\em LAQ}$_T$ can be solved exactly as {\em LAQ}$_S$,
with the extended $\PSV'/\NSV'$ operations, now defined on the array $\TDE$ 
instead of on $\LCP$. However, an equivalent to 
Lemma~\ref{lem:dlcp} for the differential $\TDE$ array does not hold, and
therefore we cannot build the structures on its grammar within the desired
space bounds.}

\begin{table}[t]
\begin{center}
\begin{tabular}{lcc}
Operation & Generic    & Our \\
          & Complexity & Complexity \\
\hline
{\em Root}() & $1$ & $1$ \\
{\em Locate}($v$) & $t_\SA$ & $\log(n/r)$ \\
{\em Ancestor}($v,w$) & $1$ & $1$ \\
{\em SDepth}($v$) & $t_\RMQ + t_\LCP$ & $\log(n/r)$ \\
{\em TDepth}($v$) & $t_\SA$ & $\log(n/r)$ \\
{\em Count}($v$) &  $1$ & $1$ \\
{\em Parent}($v$) & $t_\LCP + t_\SV$ & $\log(n/r)$ \\
{\em FChild}($v$) & $t_\RMQ$ & $\log(n/r)$ \\
{\em NSibling}($v$) & $t_\LCP + t_\SV$ & $\log(n/r)$ \\
{\em SLink}($v$) & $t_\LF + t_\RMQ + t_\SV$ & $\log(n/r)$ \\
{\em WLink}($v$) & $t_\LF$ & $\log\log_w(n/r)$ \\
{\em SLink}$^i$($v$) & $t_\SA + t_\RMQ + t_\SV$ & $\log(n/r)$ \\
{\em LCA}($v,w$) & $t_\RMQ + t_\SV$ & $\log(n/r)$ \\
{\em Child}($v,a$) & $t_\LCP + (t_\RMQ + t_\SA + t_\LF)\log\sigma$ & 
				$~~~~\log(n/r) \log\sigma~~~~$ \\
{\em Letter}($v,i$) & $t_\SA + t_\LF$ & $\log(n/r)$ \\
{\em LAQ}$_S$($v,d$) & $t_\SV$ & $\log(n/r)$ \\
{\em LAQ}$_T$($v,d$) & $(t_\RMQ+t_\LCP)\log n$ & $\log(n/r)\log n$ \\
\hline
\end{tabular}
\end{center}
\caption{Complexities of suffix tree operations. {\em Letter}($v,i$) can also 
be solved in time $O(i\cdot t_\LF) = O(i \log\log_w(n/r))$.}
\label{tab:streeops}
\end{table}

\subsection{Exploiting the Runs}

An important result of Abeliuk et al.~\cite{ACN13} is that they represent
$\LCP$ differentially, that is, array $\DLCP$, in grammar-compressed form.
Further, they store the rmM-tree information in the nonterminals, that is, 
a nonterminal $X$ expanding to a substring $D$ of $\DLCP$ stores the minimum 
$m(X) = \min_{0 \le k \le |D|} \sum_{i=1}^k D[i]$ and
its position $p(X) =\arg\min_{0 \le k \le |D|} \sum_{i=1}^k D[i]$.
Thus, instead of a perfect rmM-tree, they conceptually use the grammar tree
as an rmM-tree. They show how to adapt the algorithms on the perfect rmM-tree 
to run on the grammar, and thus solve primitives $\RMQ$,
$\PSV'$, and $\NSV'$, in time proportional to the grammar height.

Abeliuk et al.~\cite{ACN13}, and also Fischer et al.~\cite{FMN09}, claim that
RePair compression \cite{LM00} reaches size $O(r\log(n/r))$. This is an
incorrect result borrowed from Gonz\'alez et al.~\cite{GN07,GNF14}, where it
was claimed for $\DSA$. The proof fails for a reason we describe in 
Appendix~\ref{app:fail}.
In Section~\ref{sec:sa} we have shown that, instead, the use of locally 
consistent parsing does offer a
guarantee of $O(r\log(n/r))$ words, with a (run-length) grammar height 
of $O(\log(n/r))$, for $\DSA$, $\DISA$, and $\DLCP$.%
\footnote{The reason of failure is subtle and arises pathologically; in our
preliminary experiments RePair actually compresses to about half the space of 
locally consistent parsing in typical repetitive texts.}

The third column of Table~\ref{tab:streeops} gives the resulting
complexities when $t_\SA = \log(n/r)$, $t_\LF = \log\log_w(n/r)$, and
$t_\LCP = \log\log_w(n/r) + t_\SA = \log(n/r)$,
as we have established in previous sections. To complete the picture, we 
describe how to compute the extended primitives $\RMQ'$ and $\PSV'/\NSV'$ on 
the grammar, in time $t_\RMQ = t_\SV = \log(n/r)$. 
While analogous procedures have been described before \cite{ACN13,NS14}, 
some particularities in our structures deserve a complete description.

Note that the three primitives can be solved on $\DLCP$ or on $\LCP$, as they
refer to relative values. The structure we use for $\DLCP$ in
Section~\ref{sec:dlcp} is formed by a compressed sequence $\DLCP'$ of length 
$O(r)$, plus a run-length grammar of size $O(r\log(n/r))$ and of height 
$h=O(\log(n/r))$. 

The first part of our structure is built as follows. Let $\DLCP[i_m,i_{m+1}-1]$
be the area to which $\DLCP'[m]$ expands; then we build the array 
$M[m] = \min_{i_m \le k < i_{m+1}} \LCP[k]$. We store a succinct $\RMQ$ data
structure on $M$, which requires just $O(r)$ bits and answers $\RMQ$s on $M$ 
in constant time, without need to access $M$ \cite{FH11,NS14}. 

As in Section~\ref{sec:dsa}, in order to give access to $\DLCP$, we store for 
each nonterminal $X$ the number $l(X)$ of positions it expands to and its
total difference $d(X)$. We also store, to compute $\RMQ'$, the values $m(X)$, 
$p(X)$, and also $n(X)$, the number of times $m(X)$ occurs inside the expansion
of $X$. We also store the sampling information described in 
Section~\ref{sec:dsa}, now to access $\DLCP$.

To compute $\RMQ'(i,j,m)$, we use a mechanism analogous to the one described 
in Section~\ref{sec:dsa} to access $\DLCP[i..j]$, but do not traverse the cells
one by one. Instead, we determine that $\DLCP[i..j]$ is contained in the area
$\DLCP[i'..j']$ that corresponds to $\DLCP'[x..y]$ in the compressed sequence, 
partially overlapping $\DLCP'[x]$ and $\DLCP'[y]$ and completely covering 
$\DLCP'[x+1..y-1]$ (there are various easy particular cases we ignore). We 
first obtain in constant time the minimum position of the central area, 
$z = \RMQ(x+1,y-1)$, using the $O(r)$-bit structure. We must then obtain 
the minimum values in $\DLCP'[x] \langle i'-i+1,l(\DLCP'[x])\rangle$, 
$\DLCP'[z] \langle 1,l(\DLCP'[z])\rangle$, and
$\DLCP'[y] \langle 1,l(\DLCP'[y])+j-j' \rangle$, where $X \langle a,b \rangle$
refers to the range $[a..b]$ in the expansion of nonterminal $X$.

To find the minimum in $\DLCP'[w] \langle a,b \rangle$, where $\DLCP'[w]=X$, we 
identify the at most $2k$ maximal nodes of the grammar tree that cover the range
$[a..b]$ in the expansion of $X$. Let these nodes be $X_1, X_2, \ldots, X_{2k}$.
We then find the minimum of $m(X_1)$, $d(X_1)+m(X_2)$, $d(X_1)+d(X_2)+m(X_3)$, 
and so on, in $O(k)$ time. Once the minimum is identified at $X_s$, we obtain
the absolute value by extracting $\LCP[i_w-1+l(X_1)+\ldots+l(X_{s-1})+
p(X_s)]$.

Once we have the three minima, the smallest of the three
is $\mu$, the value of $\RMQ(i,j)$. To solve $\RMQ'(i,j,m)$, we first compute
$\mu$ and then find its $m$th occurrence through 
$\DLCP'[x] \langle i'-i+1,l(\DLCP'[x])\rangle$, $\DLCP'[x+1..y-1]$, and
$\DLCP'[y] \langle 1,l(\DLCP'[y])+j-j' \rangle$.
To process $X \langle a,b \rangle$, we scan again $X_1, X_2, \ldots, X_{2k}$.
For each $X_s$, if $d(X_1)+\ldots+d(X_{s-1})+m(X_s) = m$, we subtract
$n(X_s)$ from $m$. When the result is below 1, we enter the children 
of $X_s$, $Y_1$ and $Y_2$, doing the same process we ran on $X_1, X_2,
\ldots$ on $Y_1, Y_2$ to find the child where $m$ falls below 1. We
recursively enter this child of $X$, and so on, until reaching the precise
position in $X \langle a,b\rangle$ with the $m$th occurrence of $\mu$. The
position itself is computing by adding all the lengths $l(X_i)$, $l(Y_i)$,
etc. we skip along the process. All this takes time $O(k)=O(\log(n/r))$.

We might find the answer as we traverse 
$\DLCP'[x] \langle i'-i+1,l(\DLCP'[x])\rangle$. If, however, there are only
$m' < m$ occurrences of $\mu$ in there, and the minimum in $\DLCP'[x+1..y-1]$
is also $\mu$, we must find the $(m-m')$th occurrence of the minimum in
$\DLCP'[x+1..y-1]$. This can also be done in constant time with the $O(r)$-bit
structure \cite{NS14}. If, in turn, there are only $m'' < m-m'$ occurrences
of $\mu$ inside, we must find the $(m-m'-m'')$th occurrence of $\mu$ in
$\DLCP'[y] \langle 1,l(\DLCP'[y])+j-j' \rangle$.

Our grammar also has rules of the form $X \rightarrow Y^t$. It is easy to 
process several copies of $Y$ in constant time, even if they span only some of 
the $t$ copies of $X$. For example, the minimum in $Y^s$ is $m(Y)$ if 
$d(Y) \ge 0$, and $(s-1)\cdot d(Y)+m(Y)$ otherwise. Later, to find where $\mu$
occurs, we have that it can only occur in the first copy if $d(Y)>0$, only
in the last if $d(Y)<0$, and in every copy if $d(Y)=0$. In the latter case,
if $s \cdot n(Y) < m$, then the $m$th occurrence is not inside the copies;
otherwise it occurs inside the $\lceil m/n(Y) \rceil$th copy.
The other operations are trivially derived.

Therefore, we can compute any query $\RMQ'(i,j,m)$ in time $O(\log(n/r))$ plus
$O(1)$ accesses to $\LCP$; therefore $t_\RMQ = \log(n/r)$. 

Queries $\PSV'(i,d)$ and $\NSV'(i,d)$ are solved analogously. 
Let us describe $\NSV'$; $\PSV'$ is similar. Let $\DLCP[i..n]$
intersect $\DLCP'[x..y]$, which expands to $\DLCP[i'..n]$, and let us subtract
$\LCP[i'-1]$ from $d$ to have it in relative form. We consider 
$\DLCP'[x] \langle i-i'+1,l(\DLCP'[x]) \rangle = X\langle a,b\rangle$, obtaining
the nonterminals $X_1, X_2, \ldots, X_{2k}$ that cover $X\langle a,b\rangle$,
and find the first $X_s$ where $d(X_1)+\ldots+d(X_{s-1})+m(X_s) < d$.
Then we enter the children of $X_s$ to find the precise point 
where we fall below $d$, as before. If we do not fall below $d$ inside 
$X \langle a,b \rangle$, we must run the query on $\DLCP'[x+1..y]$ for 
$d-d(X_1)-\ldots-d(X_{2k})$. Such a query boils down to the {\em forward-search
queries on large trees} considered by Navarro and Sadakane 
\cite[Sec.~5.1]{NS14}. They build a so-called left-to-right minima tree where
they carry out a level-ancestor-problem path decomposition \cite{BFC04}, and 
have a precedessor structure on the consecutive values along such paths. 
In our case,
the union of all those paths has $O(r)$ elements and the universe is of size
$2n$; therefore predecessor queries can be carried out in $O(r)$ space and
$O(\log\log_w(n/r))$ time \cite[Thm.~14]{BN14}. Overall, we can also obtain time
$t_\SV = \log(n/r)$ within $O(r\log(n/r))$ words of space.

\begin{theorem} \label{thm:stree}
Let the $\BWT$ of a text $T[1..n]$, over alphabet $[1..\sigma]$, contain $r$ 
runs. Then a compressed suffix
tree on $T$ can be represented using $O(r \log(n/r))$ words, and it
supports the operations with the complexities given in Table~\ref{tab:streeops}.
\end{theorem}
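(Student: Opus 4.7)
The plan is essentially an accounting exercise: we combine the generic complexities in the second column of Table~\ref{tab:streeops}, due to Fischer et al.~\cite{FMN09} with Abeliuk et al.'s~\cite{ACN13} improvements, with the specific primitive times we have already established in the preceding sections and in the discussion immediately before the theorem. The space bound is obtained by summing the components, each already shown to fit in $O(r\log(n/r))$ words.

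First I would fix the primitive times. From Theorems~\ref{thm:dsa} and~\ref{thm:disa} we have $t_\SA = O(\log(n/r))$ for direct access to $\SA$ and $\ISA$; from Lemma~\ref{lem:rlfm} we have $t_\LF = O(\log\log_w(n/r))$ for $\LF$, $\LF^{-1}$, and $\BWT$ access; and from Theorem~\ref{thm:dlcp} we get $t_\LCP = O(\log(n/r))$ for $\LCP$. The discussion preceding the theorem establishes $t_\RMQ = t_\SV = O(\log(n/r))$ for the extended primitives $\RMQ'$ and $\PSV'/\NSV'$ implemented on top of the grammar-compressed $\DLCP$ augmented with the field $n(X)$ and the $O(r)$-bit $\RMQ$ structure on the array $M$; the predecessor structure used for $\NSV'$/$\PSV'$ on the ``large tree'' case fits in $O(r)$ words and takes $O(\log\log_w(n/r))$ time~\cite{BN14}, all absorbed into $O(\log(n/r))$.

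Next I would substitute these times into the generic column of Table~\ref{tab:streeops} row by row. Most operations become $O(\log(n/r))$ since either $t_\SA$, $t_\LCP$, $t_\RMQ$, $t_\SV$, or their sum dominates. The only cases that need a separate remark are: {\em WLink}, which is just two $\LF$ steps after locating the appropriate run extreme, giving $O(t_\LF)=O(\log\log_w(n/r))$; {\em Child}($v,a$), where a binary search on the $\sigma$ distinct minima of $\LCP[v_l..v_r]$, each located by an $\RMQ'$ query and verified by extracting one text letter via $\SA+\LF$, gives $O(t_\LCP + (t_\RMQ + t_\SA + t_\LF)\log\sigma) = O(\log(n/r)\log\sigma)$; and {\em TDepth} and {\em LAQ}$_T$, which use the representation of $\PTDE$ analogous to $\PLCP$ (Fischer et al.~\cite{FMN09}), hence $O(t_\SA)=O(\log(n/r))$ and $O(\log(n/r)\log n)$ respectively. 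For {\em TDepth}$(v)$ I would use the identity $1+\max(\TDE[v_l],\TDE[v_r+1])$ noted before the theorem, avoiding an unnecessary $\RMQ$.

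For the space bound I would sum: the grammar-compressed $\DSA$, $\DISA$, $\DLCP$ each fit in $O(r\log(n/r))$ words by Lemma~\ref{lem:passes}; the samplings described in Section~\ref{sec:dsa} add $O(r)$ words each; the $\PLCP$ and $\PTDE$ structures of Fischer et al.\ fit in $O(r)$ words; the $O(r)$-bit succinct $\RMQ$ on $M$ adds $O(r/w)$ words; the fields $l(X),d(X),m(X),p(X),n(X)$ at each grammar node add only a constant factor; and the RLFM-index itself is $O(r)$ words. The main obstacle, I expect, will not be any single argument but making the accounting airtight: in particular, verifying that the grammar-based implementation of $\RMQ'$ and $\PSV'/\NSV'$ genuinely runs in $O(\log(n/r))$ time on our run-length grammar of height $O(\log(n/r))$, and that the predecessor structure used in the ``large tree'' case of $\NSV'$ indeed fits in $O(r)$ words over a universe of size $2n$ via~\cite[Thm.~14]{BN14}; both of these are already argued in the paragraphs preceding the theorem, so the proof amounts to invoking them and closing the table.
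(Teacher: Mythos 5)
Your proposal is correct and follows essentially the same route as the paper: the theorem is obtained by plugging the established primitive times $t_\SA=t_\LCP=t_\RMQ=t_\SV=O(\log(n/r))$ and $t_\LF=O(\log\log_w(n/r))$ into the generic complexities of Fischer et al.\ as refined by Abeliuk et al., and by summing the $O(r\log(n/r))$-word components (grammar-compressed $\DSA$, $\DISA$, $\DLCP$ with their samplings and augmented nonterminals, the $O(r)$-word $\PLCP$/$\PTDE$ and predecessor structures, and the RLFM-index). Like the paper, you take the one substantive ingredient beyond this accounting---the $O(\log(n/r))$-time implementation of $\RMQ'$ and $\PSV'/\NSV'$ on the run-length grammar---from the development immediately preceding the theorem, so the two arguments coincide.
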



\section{Counting in Optimal Time}\label{sec: optimal count}

Powered by the results of the previous sections, we can now show how to achieve optimal counting time, both in the unpacked and packed settings. 

\begin{theorem} \label{thm:optimal time count}
	We can store a text $T [1..n]$ in $O(r\log (n/r))$ words, where $r$ is the number of runs in the
	$\BWT$ of $T$, such that later, given a pattern $P [1..m]$, we can count the occurrences of $P$ in optimal $O(m)$ time.
\end{theorem}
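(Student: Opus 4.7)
The plan is to combine the single-occurrence locate of Theorem~\ref{thm:optimal time}, the $\ISA$-access of Theorem~\ref{thm:disa}, and the $\PSV'/\NSV'$ primitives on $\LCP$ provided by Theorem~\ref{thm:stree}, so as to recover the $\BWT$ range $[sp..ep]$ of $P$ from a single toe-hold position known to lie inside it. The short-pattern regime $m\le\log(n/r)$, where the $\log(n/r)$-overhead of those primitives would exceed the desired $O(m)$ budget, is handled separately by the path-compressed trie of $\log(n/r)$-mers already built inside Theorem~\ref{thm:optimal time}, whose explicit nodes carry triples $\langle occ,i,\delta\rangle$ whose first component is exactly the count we want.

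For $m>\log(n/r)$, I would first invoke the locate of Theorem~\ref{thm:optimal time} and stop as soon as it produces a single text position $i$ with $T[i..i+m-1]=P$; if its Karp--Rabin verification rejects the candidate, the pattern does not occur and I return $0$. This step costs $O(m+\log(n/r))=O(m)$. A subsequent access to the inverse suffix array via Theorem~\ref{thm:disa} yields $p=\ISA[i]$ in $O(\log(n/r))=O(m)$, and by construction $p\in[sp..ep]$.

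The endpoints of the range are then the unique pair $(sp,ep)$ such that $\LCP[k]\ge m$ for all $sp<k\le ep$, with $\LCP[sp]<m$ (or $sp=1$) and $\LCP[ep+1]<m$ (or $ep=n$). I would recover them as $sp=\PSV'(p+1,m)$ and $ep=\NSV'(p,m)-1$, with the convention that a $0$-answer becomes $sp=1$ and an $(n+1)$-answer becomes $ep=n$. By Theorem~\ref{thm:stree}, each query runs in $t_\SV=O(\log(n/r))=O(m)$ within $O(r\log(n/r))$ words, and the answer is $ep-sp+1$, computed in $O(m)$ overall.

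For $m\le\log(n/r)$ I would simply descend the trie $\mathcal T$ of Theorem~\ref{thm:optimal time}: perfect hashing at every explicit node gives $O(1)$ branching, and edge labels are compared through offsets into the $O(r)$ length-$2\log(n/r)$ plain-text windows kept around run boundaries, so the descent costs $O(m)$. A failed descent returns $0$; a successful one places the locus inside (or at the end of) an edge, every point of which shares the leaf descendants of the first explicit node at or below it, and the count is then read off from the $occ$ field of the triple stored at that node in $O(1)$.

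The main subtlety I foresee is setting the $\PSV'/\NSV'$ boundary conventions so that a single pair of queries returns the correct $(sp,ep)$ irrespective of where $p$ sits inside the range: in particular, when $p=sp$ one has $\LCP[p]<m$, so $\PSV'(p,m)$ would jump past $sp$ into the previous $\LCP$-interval, which is precisely why the correct query is $\PSV'(p+1,m)$. Beyond this bookkeeping, the proof is a direct assembly of results already established earlier in the paper, and the $O(r\log(n/r))$-word space budget is unchanged because every component we reuse already fits within it.
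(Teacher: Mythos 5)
Your proposal is correct and follows essentially the same route as the paper's proof: find one occurrence with the machinery of Lemma~\ref{lem:one pattern} in $O(m+\log(n/r))$ time, map it to its suffix-array position via Theorem~\ref{thm:disa}, recover $[sp,ep]$ with the $\PSV'/\NSV'$ primitives of Section~\ref{sec:stree}, and handle patterns shorter than $\log(n/r)$ with the path-compressed trie of Theorem~\ref{thm:optimal time}, reading the count from the stored occurrence numbers. Your choice of $\PSV'(p+1,m)$ in place of the paper's $\max(1,\PSV'(p,m))$ is a harmless, indeed slightly more careful, variant that cleanly covers the boundary case $p=sp$.
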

\begin{proof}
By Lemma \ref{lem:one pattern}, we find one pattern occurrence in $O(m+\log(n/r))$ time with a structure of $O(r\log(n/r))$ words. By Theorem \ref{thm:disa}, we can compute the corresponding suffix array location $p$ in $O(\log(n/r))$ time with a structure of $O(r \log(n/r))$ words. Our goal is to compute the $\BWT$ range $[sp,ep]$ of the pattern; then the answer is $ep-sp+1$. Let $\LCE(i,j) = lcp(T[\SA[i]..],T[\SA[j]..])$ denote the length of the longest common prefix between the $i$-th and $j$-th lexicographically smallest text suffixes. 
Note that $p\in [sp,ep]$ and, for every $1\leq i\leq n$, $\LCE(p,i)\geq m$ if and only if $i\in [sp,ep]$. 
On the other hand, it holds $\LCE(i,j) = \min_{i<k\le j} \LCP[i]$ \cite{Sad07}. 
We can then find the area with the primitives $\PSV'$ and $\NSV'$
defined in Section~\ref{sec:stree}: $sp = \max(1,\PSV'(p,m))$ and $ep = \NSV'(p,m)-1$.
These primitives are computed in time $O(\log(n/r))$ and need $O(r \log(n/r))$ space. This gives us count in $O(m + \log(n/r))$ time and $O(r \log(n/r))$ words. To speed up counting for patterns shorter than $\log(n/r)$, we index them using a path-compressed trie as done in Theorem \ref{thm:optimal time}. We store in each explicit trie node the number of occurrences of the corresponding string to support the queries for short patterns. By Lemma~\ref{lemma:distinct kmers}, the size of the trie and of the text substrings explicitly stored to support path compression is $O(r\log(n/r))$. Our claim follows. \qed
\end{proof}

\begin{theorem} \label{thm:optimal time count packed}
	We can store a text $T [1..n]$ over alphabet $[1..\sigma]$ in 
	$O(rw\log_\sigma(n/r))$ words, where $r$ is the number of runs 
	in the $\BWT$ of $T$, such that later, given a packed pattern $P[1..m]$,
	we can count the occurrences of $P$ in optimal $O(m\log(\sigma)/w)$ time.
\end{theorem}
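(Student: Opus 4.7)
The plan is to adapt the proof of Theorem~\ref{thm:optimal time count} to the packed setting, replacing each building block by its packed counterpart, while keeping the same reduction from counting to $\PSV'/\NSV'$ queries on $\LCP$. In a nutshell, we find one occurrence of $P$, compute its position in $\SA$, and then extend it into the full $\BWT$ range $[sp,ep]$ using $\LCE$ queries. Since $\LCE(i,j)=\min_{i<k\leq j}\LCP[k]$ and $p\in[sp,ep]$ iff $\LCE(p,i)\geq m$, we obtain $sp=\max(1,\PSV'(p,m))$ and $ep=\NSV'(p,m)-1$, and then return $ep-sp+1$. The $\PSV'/\NSV'$ primitives take $O(\log(n/r))$ time within $O(r\log(n/r))$ space by Theorem~\ref{thm:stree}, and $\SA$-to-$\ISA$ access is supplied by Theorem~\ref{thm:disa} in the same bounds.

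For long patterns, meaning $m\geq w\log_\sigma(n/r)$, I would use Lemma~\ref{lem:one pattern packed} to produce one occurrence of $P$ in $O(m\log(\sigma)/w+\log(n/r))$ time using $O(rw\log_\sigma(n/r))$ space. Since in this regime $\log(n/r)=O(m\log(\sigma)/w)$, the additive $\log(n/r)$ terms coming from Theorem~\ref{thm:disa} and from $\PSV'/\NSV'$ are absorbed into the pattern-length term, yielding the target $O(m\log(\sigma)/w)$ counting time.

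For short patterns, meaning $m<w\log_\sigma(n/r)$, the reduction above is too slow, so I would index all short text substrings exactly as in Theorem~\ref{thm:optimal time packed}, using a z-fast trie with the weak-prefix-search data structure of Belazzougui et al.\ that navigates in optimal $O(m\log(\sigma)/w)$ time with $O(n')$ space, where $n'$ is the number of indexed strings. By Lemma~\ref{lemma:distinct kmers}, the number of distinct text $\ell$-mers is $O(r\ell)$ for $\ell=w\log_\sigma(n/r)$, so the trie fits in $O(rw\log_\sigma(n/r))$ words; the surrounding contexts of sampled positions are stored in a packed string $V$ of the same size so that a candidate node can be verified in $O(m\log(\sigma)/w)$ time by comparing $P$ with a substring of $V$. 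I would store with each explicit trie node the precomputed count of occurrences of the corresponding string, so that once the locus of $P$ is identified the count is returned in $O(1)$.

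The main obstacle, as in Theorem~\ref{thm:optimal time packed}, is avoiding any additive $O(\log m)$ or $O(\log(n/r))$ term when $m\log(\sigma)/w$ is itself smaller than these; this is precisely why the short-pattern regime needs the optimal-time weak prefix search of Belazzougui et al.\ rather than a plain z-fast trie as used in Lemma~\ref{lemma: z-fast}. Everything else is a matter of verifying that the structures for long patterns (Lemma~\ref{lem:one pattern packed}, Theorems~\ref{thm:disa} and~\ref{thm:stree}) and the trie for short patterns coexist within the stated $O(rw\log_\sigma(n/r))$ space bound, which they do since each component already meets that budget individually.
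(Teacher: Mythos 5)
Your proposal is correct and follows essentially the same route as the paper's proof: find one occurrence via Lemma~\ref{lem:one pattern packed}, map it to its suffix-array position via Theorem~\ref{thm:disa}, recover $[sp,ep]$ with the $\PSV'/\NSV'$ primitives of Section~\ref{sec:stree}, and handle short patterns with the optimal-time weak-prefix-search trie of Theorem~\ref{thm:optimal time packed} augmented with occurrence counts at its nodes. The regime split at $m \approx w\log_\sigma(n/r)$ and the space accounting match the paper's argument.
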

\begin{proof}
By Lemma \ref{lem:one pattern packed}, we find one pattern occurrence in $O(m\log(\sigma)/w+\log(n/r))$ time with a structure
of $O(rw\log_\sigma(n/r))$ words. By Theorem \ref{thm:disa}, we compute the corresponding suffix array location $p$ in $O(\log(n/r))$ time with a structure of $O(r \log(n/r))$ words. As in the proof of Theorem \ref{thm:optimal time count}, we retrieve the $\BWT$ range $[sp,ep]$ of the pattern with the primitives $\PSV'$ and $\NSV'$ of Section~\ref{sec:stree}, and then return $ep-sp+1$. 
Overall, we can now count in $O(m\log(\sigma)/w + \log(n/r))$ time. 
To speed up counting patterns shorter than $w\log_\sigma n$, we index them using a z-fast trie~\cite[Sec.~H.2]{belazzougui2010fast} offering $O(m\log(\sigma)/w)$-time prefix queries, as done in Theorem \ref{thm:optimal time packed}. The trie takes $O(rw\log_\sigma(n/r))$ space. We store in each explicit trie node the number of occurrences of the corresponding string. The total space is dominated by
$O(rw\log_\sigma(n/r))$.
\qed
\end{proof}


\section{Experimental results} \label{sec:experiments}

In this section we report on preliminary experiments that are nevertheless
sufficient to expose the orders-of-magnitude time/space savings offered by
our structure (more precisely, the simple variant developed in 
Section~\ref{sec:locate}) compared with the state of the art.

\subsection{Implementation}

We implemented the structure of Theorem \ref{thm:locating} with $s=1$ using the \texttt{sdsl} library~\cite{gbmp2014sea}. For the run-length FM-index, we used the implementation described by Prezza~\cite[Thm. 28]{Pre16} (suffix array sampling excluded), taking $(1+\epsilon)r\log(n/r) + r(\log\sigma+2)$ bits of space for any constant $\epsilon > 0$ (in our implementation, $\epsilon = 0.5$) and supporting $O(\log(n/r)+\log\sigma)$-time LF mapping. This structure employs Huffman-compressed wavelet trees (\texttt{sdsl}'s \texttt{wt\_huff}) to represent run heads, as in our experiments they turned out to be comparable in size and faster than Golynski et al.'s structure~\cite{golynski2006rank}, which is implemented in \texttt{sdsl}'s \texttt{wt\_gmr}.
Our \texttt{locate} machinery is implemented as follows. We store two gap-encoded bitvectors \texttt{U} and \texttt{D} marking with a bit set text positions that are the last and first in their $\BWT$ run, respectively. These bitvectors are implemented using \texttt{sdsl}'s \texttt{sd\_vector}, take overall $2r(\log(n/r)+2)$ bits of space, and answer queries in $O(\log(n/r))$ time. We moreover store two permutations, \texttt{DU} and \texttt{RD}. \texttt{DU} maps the (\texttt{D}-ranks of) text positions corresponding to the last position of each $\BWT$ run to the (\texttt{U}-rank of the)  first  position of the next run. \texttt{RD} maps ranks of $\BWT$ runs to the (\texttt{D}-ranks of) text positions associated with the last position
of the corresponding $\BWT$ run. \texttt{DU} and \texttt{RD} are implemented using Munro et al.'s representation~\cite{munro2003succinct}, take $(1+\epsilon')r\log r$ bits each for any constant $\epsilon'>0$, and support map and inverse in $O(1)$ time.
These structures are sufficient to locate each pattern occurrence in $O(\log(n/r))$ time with the strategy of Theorem \ref{thm:locating}. 
We choose $\epsilon'=\epsilon/2$. 
Overall, our index takes at most $r\log(n/r) + r\log\sigma + 6r + (2+\epsilon)r\log n \le (3+\epsilon)r\log n + 6r$ bits of space for any constant $\epsilon>0$ and, after counting, locates each pattern occurrence in $O(\log(n/r))$ time. Note that this space is $(2+\epsilon)r\log n + O(r)$ bits larger than an optimal run-length $\BWT$ representation, and since we store $2r$ suffix array samples, this is just $\epsilon\,r\log n + O(n)$ bits over the optimum (i.e., RLBWT + samples). In the following, we refer to our index as \texttt{r-index}. The code is publicly available~\cite{r-index}.

\subsection{Experimental Setup}

We compared \texttt{r-index} with the state-of-the-art index for each compressibility measure: \texttt{lzi}~\cite{migumar} ($z$), \texttt{slp}~\cite{migumar} ($g$), \texttt{rlcsa}~\cite{rlcsa} ($r$), and \texttt{cdawg~}\cite{cdawg} ($e$). We tested \texttt{rlcsa} using three suffix array sample rates per dataset: the rate $X$ resulting in the same size for \texttt{rlcsa} and \texttt{r-index}, plus rates $X/2$ and $X/4$.
We measured memory usage and \texttt{locate} times per occurrence of all indexes on 1000 patterns of length 8 extracted from four repetitive datasets, also published with our implementation: 
\begin{description}
\item[\texttt{DNA:}] an artificial dataset of 629145 copies of a DNA sequence of length 1000 (Human genome) where each character was mutated with probability $10^{-3}$;
\item[\texttt{boost:}] a dataset consisting of concatenated versions of the \texttt{GitHub}'s \texttt{boost} library;
\item[\texttt{einstein:}] a dataset consisting of concatenated versions of \texttt{Wikipedia}'s English \texttt{Einstein} page;
\item[\texttt{world\_leaders:}] a collection containing all pdf files of CIA World Leaders from January 2003 to December 2009 downloaded from the \texttt{Pizza\&Chili} corpus.
\end{description}

Memory usage (Resident Set Size, RSS) was measured using \texttt{/usr/bin/time} between index loading time and query time. This choice was made because, due to the datasets' high repetitiveness, the number $occ$ of pattern occurrences was very large. This impacts sharply  on the working space of indexes such as \texttt{lzi} and \texttt{slp}, which report the occurrences in a recursive fashion. When considering this extra space, these indexes always use more space than \texttt{r-index}, but we prefer to emphasize the relation between the index sizes and their associated compressibility measure.
The only existing implementation of \texttt{cdawg} works only on DNA files, so we tested it only on the \texttt{DNA} dataset.

\subsection{Results}

The results of our experiments are summarized in Figure \ref{fig:results}. On all datasets, \texttt{r-index} significantly deviates from the space-time curve on which all other indexes are aligned. We locate occurrences one to three orders of magnitude faster than all other indexes except \texttt{cdawg}, which however is one order of magnitude larger. It is also clear that \texttt{r-index} dominates all practical space-time tradeoffs of \texttt{rlcsa} (other tradeoffs are too space- or time- consuming to be practical). The smallest indexes, \texttt{lzi} and \texttt{slp}, save very little space with respect to \texttt{r-index} at the expense of being one to two orders of magnitude slower.

\begin{figure}[t]
\includegraphics[width=\textwidth]{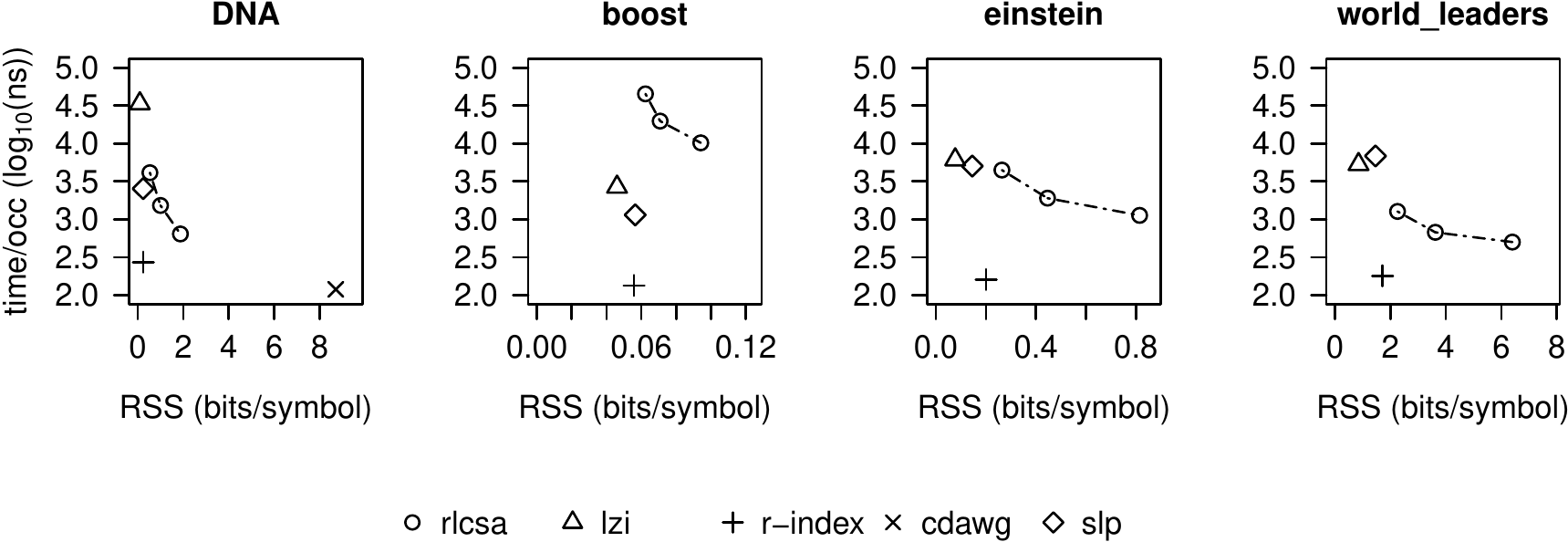}
\caption{Locate time per occurrence and working space (in bits per symbol) of the indexes. The $y$-scale measures nanoseconds per occurrence reported and is 
logarithmic.}
\label{fig:results}
\end{figure}

\section{New Bounds on Grammar and Lempel-Ziv Compression} \label{sec:grammar}

In this section we obtain various new relations between repetitiveness measures,
inspired in our construction of RLCFGs of size $O(r\log(n/r))$ of
Section~\ref{sec:sa}. We consider general {\em bidirectional schemes}, which
encompass grammar compression and Lempel-Ziv parsings, and are arguably the most
general mechanism to capture repetitiveness. We first prove that the $\BWT$ 
runs induce a bidirectional scheme on $T$. We then obtain upper and lower bounds
relating the size of the smallest bidirectional scheme with the number of 
$\BWT$ runs, the size of the smallest grammar, and the number of phrases in
the Lempel-Ziv parse.

We will use $g$ to denote the size of the smallest CFG and $g_{rl}$ to denote
the size of the smallest RLCFG. Similarly, we will use $z$ for the size of 
the Lempel-Ziv parse that allows overlaps between source and target (i.e., the
original one \cite{LZ76}) and $z_{no}$ the size of the Lempel-Ziv parse that
does not allow overlaps. Both $z$ \cite{LZ76} and $z_{no}$ \cite{SS82} are
optimal left-to-right parsings allowing or not allowing overlaps, respectively.
Therefore we have $g_{rl} \le g$ and $z \le z_{no}$.  
Storer and Szymanski \cite{SS82} define a number of so-called {\em macro 
schemes}, which allow expressing a text $T$ in terms of copying blocks of it 
elsewhere and having some positions given explicitly.
Lempel-Ziv is a particular case, equivalent to the scheme they call {\em OPM/L} 
(i.e., ``original pointer macro to the left'', where phrases can be copied 
from anywhere at the left) in the case of allowing overlaps, and 
otherwise {\em OPM/TR/L} (i.e., adding the restriction called ``topological
recursion'', where a block cannot be used to define itself). If we 
remove the restriction of pointers pointing only to the left, then we can 
recreate $T$ by copying blocks from anywhere else in $T$. Those are called 
{\em bidirectional (macro) schemes}. We are
interested in the most general one, {\em OPM}, where sources and targets can 
overlap and the only restriction is that every character of $T$ can be 
eventually deduced from copies of sources to targets. Finding the optimal
{\em OPM} scheme, of $b$ macros ($b$ is called $\Delta_{OPM}$ in their paper 
\cite{SS82}),
is NP-complete \cite{Gal82}, whereas linear-time algorithms to find the optimal
unidirectional schemes, of sizes $z = \Delta_{OPM/L}$ or 
$z^R = \Delta_{OPM/R}$ (i.e., $z$ for the reversed $T$), are well-known 
\cite{SS82}.\footnote{What they call Lempel-Ziv, producing $LZ$
phrases, does not allow source/target overlaps, so it is our $z_{no}$.}
Further, little is known about the relation between the optimal bidirectional
and unidirectional parsings, except that for
any constant $\epsilon>0$ there is an infinite family of strings for which
$b < (\frac{1}{2}+\epsilon)\cdot\min(z,z^R)$ \cite[Cor.~7.1]{SS82}. 
Given the difficulty of finding an optimal bidirectional
parsing, the question of how much worse can unidirectional parsings be is of
interest.

\subsection{Lower Bounds on $r$ and $z$}

In this section we exhibit an infinite family of strings for which
$z = \Omega(b \log n)$, which shows that the gap between
bidirectionality and unidirectionality is significantly larger than what was
previously known. The idea is to show that the phrases we defined in previous
sections (i.e., starting at positions $\SA[p]$ where $p$ starts a $\BWT$ run)
induce a valid bidirectional macro scheme of size $2r$, and then use
Fibonacci strings as the family where $z = \Omega (r \log n)$ \cite{Pre16}.

\begin{definition}
Let $p_1, p_2, \ldots, p_r$ be the positions that start runs in $\BWT$, and
let $s_1 < s_2 < \ldots < s_r$ be the positions $\{ \SA[p_i], 1 \le i \le r \}$ 
in $T$ where phrases start (note that $s_1=1$ because $\BWT[\ISA[1]]=\$$ is a
size-1 run). Assume $s_{r+1}=n+1$. Let also $\phi(i)=\SA[\ISA[i]-1]$ if
$\ISA[i]>1$ and $\phi(i)=\SA[n]$ otherwise. Then we define the {\em macro 
scheme of the $\BWT$} as follows:
\begin{enumerate}
\item For each $1 \le i \le r$, $T[\phi(s_i)..\phi(s_{i+1}-2)]$ is copied from
$T[s_i..s_{i+1}-2]$.
\item For each $1 \le i \le r$, $T[\phi(s_{i+1}-1)]$ is stored explicitly.
\end{enumerate}
\end{definition}

\begin{lemma} \label{lem:r}
The macro scheme of the $\BWT$ is a valid bidirectional macro scheme, and
thus $r \ge b$.
\end{lemma}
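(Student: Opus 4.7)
My plan is to verify that the given scheme (i) assigns every text position to exactly one macro or explicit character, (ii) yields copies that actually hold as equalities in $T$, and (iii) has an acyclic dependency graph so that each character is ultimately determined by an explicit one. The hard part will be (iii), because the scheme is genuinely bidirectional --- sources can lie to the right of their targets --- and a priori nothing rules out a chain of pointers that loops forever without ever reaching an explicit character.

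For (i), I would use Lemma~\ref{lem:phi} to observe that within each $\ISA$-phrase $[s_i..s_{i+1}-1]$ the map $\phi$ is strictly increasing by $1$, so $\phi([s_i..s_{i+1}-1]) = [\phi(s_i)..\phi(s_{i+1}-1)]$ is a contiguous block of length $s_{i+1}-s_i$. Since the $s_i$ partition $[1..n]$ and $\phi$ is a bijection on $[1..n]$, these $r$ blocks partition $[1..n]$; inside each one, item~1 accounts for the first $s_{i+1}-s_i-1$ positions and item~2 supplies the last one, $\phi(s_{i+1}-1)$, so every position of $T$ is specified exactly once. For (ii), I would apply Lemmas~\ref{lem:phi} and~\ref{lem:S} pointwise: for every $j\in[s_i..s_{i+1}-2]$ the pair $[j..j+1]$ lies in an $\ISA$-phrase, so $\phi(j)=\phi(j+1)-1$ and $T[j]=T[\phi(j+1)-1]=T[\phi(j)]$; concatenating over $j$ yields $T[\phi(s_i)..\phi(s_{i+1}-2)]=T[s_i..s_{i+1}-2]$.

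For (iii), the decisive fact is that $\phi$ is a single permutation cycle of length $n$ on $[1..n]$: it sends the position of lex rank $k$ to the one of lex rank $k-1$, wrapping $1\to n$ via the else-clause of its definition, so iterating $\phi^{-1}$ visits every text position exactly once before returning to its starting point. From any non-explicit position $j$, item~1 derives $T[j]$ from $T[\phi^{-1}(j)]$, so the decoding chain is $j,\phi^{-1}(j),\phi^{-2}(j),\ldots$; by the single-cycle property this chain must meet one of the $r$ explicit positions within at most $n-1$ steps, which rules out cycles and guarantees that every character is ultimately derived from an explicit one. Finally, grouping each copy with its trailing explicit character into a single LZ77-style macro covering the contiguous range $[\phi(s_i)..\phi(s_{i+1}-1)]$ produces a valid bidirectional scheme with exactly $r$ macros, establishing $b\le r$.
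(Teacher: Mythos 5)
Your proposal is correct and follows essentially the same route as the paper's own proof: the same use of Lemma~\ref{lem:phi} to show each phrase maps under $\phi$ to a contiguous block (so, with $\phi$ a permutation, every position is a target exactly once), the same use of Lemma~\ref{lem:S} for the correctness of the copies, and decodability from the fact that $\phi$ is a single $n$-cycle --- the paper decodes forward along that cycle starting from the explicitly stored $T[n]$, while you follow the dependency chain backward via $\phi^{-1}$ until an explicit position, which is the same argument viewed in reverse. Your final step of merging each copy with its trailing explicit character into $r$ LZ77-style macros is a small refinement of the accounting; the paper simply observes that $T$ is recoverable from the $2r$ directives and concludes $r\ge b$.
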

\begin{proof}
Lemma~\ref{lem:phi}, proved for $\ISA$, applies verbatim on $T$, since we
define the phrases identically. Thus $\phi(j-1)=\phi(j)-1$ if $[j-1..j]$ is
within a phrase. From Lemma~\ref{lem:S}, it also holds $T[j-1]=T[\phi(j)-1]$.
Therefore, we have that $\phi(s_i+k)=\phi(s_i)+k$ for $0 \le k <
s_{i+1}-s_i-1$, and therefore $T[\phi(s_i),\ldots,\phi(s_{i+1}-2)]$ is 
indeed a contiguous range. We also have that 
$T[\phi(s_i)..\phi(s_{i+1}-2)] = T[s_i..s_{i+1}-2]$, and therefore it is
correct to make the copy. Since $\phi$ is a permutation, every position of
$T$ is mentioned exactly once as a target in points 1 and 2.

Finally, it is easy to see that we can recover the whole $T$ from those $2r$
directives. We can, for example, follow the cycle $\phi^k(n)$, $k=0,\ldots,n-1$
(note that $T[\phi^0(n)]=T[n]$ is stored explicitly), and copy $T[\phi^k(n)]$ to
$T[\phi^{k+1}(n)]$ unless the latter is explicitly stored.
\qed
\end{proof}

We are now ready to obtain the lower bound on bidirectional versus
unidirectional parsings. We recall that, with $z$, we refer to the Lempel-Ziv
parsing that allows source/target overlaps.

\begin{theorem} \label{thm:lowz}
There is an infinite family of strings over an alphabet of size 2 for which
$z = \Omega(b \log n)$.
\end{theorem}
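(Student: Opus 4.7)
The plan is to combine Lemma~\ref{lem:r} with the classical behavior of the Lempel-Ziv parse on Fibonacci strings. By Lemma~\ref{lem:r} we have $b \le r$ for every text, so to prove $z = \Omega(b \log n)$ it suffices to exhibit an infinite family of binary strings for which $z = \Omega(r \log n)$, and then invoke $b \le r$ to finish.

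For the family I would take the Fibonacci words $F_k$ over the binary alphabet $\{a,b\}$, defined by $F_1 = b$, $F_2 = a$, and $F_k = F_{k-1} F_{k-2}$ for $k \ge 3$, whose length $n = |F_k|$ is the $k$-th Fibonacci number and hence tends to infinity with $k$. Two known facts about Fibonacci words are needed: first, the $\BWT$ of $F_k$ has only $r = O(1)$ runs, independently of $k$; second, the Lempel-Ziv parse of $F_k$ has $z = \Theta(\log n)$ phrases, which is a direct consequence of the self-similar recurrence $F_k = F_{k-1}F_{k-2}$ forcing a logarithmic-depth parse. Both properties are established in Prezza~\cite{Pre16}, where the stronger statement $z = \Omega(r \log n)$ is already extracted for this family.

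Putting the ingredients together, on the Fibonacci family we get $z = \Omega(\log n)$ while $r = O(1)$, hence $z = \Omega(r \log n)$; combining with Lemma~\ref{lem:r} yields $z = \Omega(b \log n)$, which proves the theorem over a binary alphabet. The only delicate point is not in the new argument but in citing the two properties of Fibonacci words correctly: both are standard in combinatorics on words, and the $\BWT$-runs bound in particular is proved in the cited reference, so the proof reduces to a short bookkeeping argument on top of Lemma~\ref{lem:r}.
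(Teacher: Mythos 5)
Your proposal is correct and follows essentially the same route as the paper: it takes the Fibonacci family, uses the known facts $r=O(1)$ and $z=\Theta(\log n)$ for these strings (as cited via Prezza), and applies Lemma~\ref{lem:r} to get $b\le r$, concluding $z=\Omega(b\log n)$. The only differences are cosmetic (ordering of the base cases of the Fibonacci recurrence and slightly less specific attribution of the two cited facts).
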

\begin{proof}
Consider the family of the Fibonacci stings, $F_1 = a$, $F_2 = b$, and
$F_k = F_{k-1} F_{k-2}$ for all $k > 2$. As shown by Prezza 
\cite[Thm.~25]{Pre16}, for $F_k$ we have $r=O(1)$ \cite{MRS07} and 
$z=\Theta(\log n)$ \cite{Fic15}.
By Lemma~\ref{lem:r}, it also holds $b = O(1)$, and therefore
$z = \Omega(b \log n)$.
\qed
\end{proof}

\subsection{Upper bounds on $g$ and $z$}

We first prove that $g_{rl} = O(b\log(n/b))$, and then that $z \le 2 g_{rl}$.
For the first part, we
will show that Lemmas~\ref{lem:pass} and \ref{lem:passes} can be applied
to {\em any} bidirectional scheme, which will imply the result.

Let a bidirectional scheme partition $T[1..n]$ into $b$ chunks $B_1, \ldots,
B_b$, such that each $B_i = T[t_i..t_i+\ell_i-1]$ is either $(1)$ copied from 
another substring $T[s_i..s_i+\ell_i-1]$ with $s_i \not= t_i$, which may 
overlap $T[t_i..t_i+\ell_i-1]$, or $(2)$ formed by $\ell_i=1$ explicit symbol 

We define the function $f:[1..n] \rightarrow [1..n]$ so that, in case (1),
$f(t_i+j)=s_i+j$ for all $0 \le j < \ell_i$, and in case (2), $f(t_i) = -1$.
Then, the bidirectional scheme is valid if there is an order in which the 
sources $s_i+j$ can be copied onto the targets $t_i+j$ so that we can rebuild 
the whole of $T$. 

Being a valid scheme is equivalent to saying that $f$ has no cycles, that is,
there is no $k>0$ and $p$ such that $f^k(p)=p$:
Initially we can set all the explicit positions (type (2)), and then copy
sources with known values to their targets. If $f$ has no cycles, we will
eventually complete all the positions in $T$ because, for every $T[p]$, there 
is a $k>0$ such that $f^k(p)=-1$, so we can obtain $T[p]$ from the symbol 
explicitly stored for $T[f^{k-1}(p)]$. 

Consider a locally consistent parsing of $W = T$ into blocks. We will count 
the number of
different blocks that appear, as this is equal to the number of 
nonterminals produced in the first round. We will charge to each chunk $B$
the first and the last block that intersects it. Although a block overlapping 
one or more consecutive chunk boundaries will be charged several times, we do 
not charge more than $2b$ overall. On the other hand, we do not charge the 
other blocks, which are strictly contained in a chunk, because they will be 
charged somewhere else, when they appear intersecting an extreme of a chunk. 
We show this is true in both types of blocks:
\begin{enumerate}
\item If the block is a pair of left- and right-alphabet symbols,
$W[p..p+1]=ab$, then it holds $[f(p-1)..f(p+2)]=[f(p)-1..f(p)+2]$ because 
$W[p..p+1]$ is strictly contained in a chunk. Moreover, $W[f(p)-1..f(p)+2] =
W[p-1..p+2]$. That is, the block appears again at $[f(p)..f(p)+1]$, surrounded
by the same symbols. Thus by Lemma~\ref{lem:lcp}, the locally consistent parsing
also forms a block with $W[f(p)..f(p+1)]$. If this block is not strictly 
contained in another chunk, then it will be charged. Otherwise, by the same 
argument, $W[f(p)-1..f(p+1)+1] = W[f(p)-1..f(p)+2]$ will be equal to
$W[f^2(p)-1..f^2(p)+2]$ and a block will be formed with $W[f^2(p)..f^2(p)+1]$. 
Since $f$ has no cycles, there is a $k>0$ for which $f^k(p)=-1$. Thus for some 
$l<k$ it must be that $W[f^l(p)-1..f^l(p)+2]$ is not contained in a
chunk. At the smallest such $l$, the block $W[f^l(p)..f^l(p)+1]$ will 
be charged to the chunk whose boundary it touches. 
Therefore, $W[p..p+1]$ is already charged to some chunk 
and we do not need to charge it at $W[p..p+1]$.
\item If the block is a maximal run $W[p..p+\ell-1] = a^\ell$, then
it also holds $[f(p-1)..f(p+\ell)]=[f(p)-1..f(p)+\ell]$, because all the area
$[p-1..p+\ell]$ is within the same chunk. Moreover, 
$W[f(p-1)..f(p+\ell)]=b a^\ell c = W[p-1..p+\ell]$ with $b \not= a$ and
$c \not= a$, because the run is maximal. It follows that the block
$a^\ell$ also appears in $W[f(p)..f(p+\ell-1)]$, since the parsing starts by 
forming blocks with the maximal runs. If $W[f(p)..f(p+\ell-1)]$ is not
strictly contained in a chunk, then it will be charged, otherwise we can repeat
the argument with $W[f^2(p)-1..f^2(p)+\ell]$. Once again, since $f$ has no 
cycles, the block will eventually be charged at some $[f^l(p)..f^l(p+\ell-1)]$,
so we do not need to charge it at $W[p..p+\ell-1]$.
\end{enumerate}

Therefore, we produce at most $2b$ distinct blocks, and the RLCFG has at most
$2b$ nonterminals. For the second round, we replace all the blocks of length
$2$ or more by their corresponding nonterminals. The new sequence, $W'$, is
guaranteed to have length at most $(3/4)n$ by Lemma~\ref{lem:lcp}. 
We define a new bidirectional scheme on $W'$, as follows:
\begin{enumerate}
\item The symbols that were explicit in $W$ are also explicit in $W'$.
\item The nonterminals associated with the blocks of $W$ that intersect the 
first or last position of a chunk in $W$ (i.e., those that were charged to the 
chunks) are stored as explicit symbols.
\item For the non-explicit chunks $B_i=W[t_i..t_i+\ell_i-1]$ of $W$, let $B'_i$
be obtained by trimming from $B_i$ the first and last block that overlaps $B_i$.
Then $B'_i$ appears inside $W[s_i..s_i+\ell_i-1]$, where the same sequence of
blocks is formed because of the locally consistent parsing. The sequence of
nonterminals associated with the blocks of $B_i'$ therefore forms a chunk in 
$W'$, pointing to the identical sequence of nonterminals that appear as blocks
inside $W[s_i..s_i+\ell_i-1]$.
\end{enumerate}

Let the original bidirectional scheme be formed by $b_1$ chunks of type (1) and
$b_2$ of type (2), thus $b=b_1+b_2$.
Now $W'$ has at most $b_1$ chunks of type (1) and $b_2+2b_1$ chunks of type (2).
After $k$ rounds, the sequence is of length at most $(3/4)^k n$ and it has at 
most $b_1$ chunks of type (1) and $b_2+2kb_1$ chunks of type (2), so we have 
generated at most $b_2+2kb_1 \le 2bk$ nonterminals. Therefore, if we choose to 
perform $k = \log_{4/3}(n/b)$
rounds, the sequence will be of length at most $b$ at the grammar size will be
$O(b\log(n/b))$. To complete the process, we add $O(b)$ nonterminals to reduce
the sequence to a single initial symbol.

\begin{theorem} \label{thm:rlcfg}
Let $T[1..n]$ have a bidirectional scheme of size $b$. Then there exists a
run-length context-free grammar of size $g_{rl} = O(b\log(n/b))$ that 
generates $T$.
\end{theorem}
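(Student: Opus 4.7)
The plan is to lift the locally-consistent-parsing machinery of Lemma~\ref{lem:pass} and Lemma~\ref{lem:passes} from the $\BWT$-induced setting (where a single permutation $\LF$ or $\phi$ drove the argument) to an arbitrary bidirectional scheme. The key observation is that the proof of Lemma~\ref{lem:passes} only used two facts about the underlying structure: that a ``source'' position mimics its ``target'' locally, and that repeatedly following the source relation eventually exits any prescribed region. A bidirectional scheme gives exactly this, once we replace the permutation by a (possibly multi-valued, acyclic) function.

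First, I would encode the scheme as a function $f:[1..n]\to[1..n]\cup\{-1\}$: inside a copy-chunk $B_i=T[t_i..t_i+\ell_i-1]$ with source $T[s_i..s_i+\ell_i-1]$, set $f(t_i+j)=s_i+j$; for an explicit singleton chunk set $f(\cdot)=-1$. The scheme's validity is equivalent to $f$ being acyclic, i.e., for every $p$ some iterate $f^k(p)=-1$. Note that if $[p-1..p+\ell]$ lies strictly inside one chunk, then $f$ acts as a shift on this window and $T[f(p)-1..f(p)+\ell]=T[p-1..p+\ell]$, so Lemma~\ref{lem:lcp} guarantees that the locally consistent parser produces the same block at $p$ and at $f(p)$.

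Second, I would run one round of the parser and bound the number of distinct blocks by $2b$ by a charging argument: charge each block that touches the first or last position of some chunk to that chunk (at most $2b$ charges). A block lying strictly inside a chunk need not be charged at its own position, because by the local-shift observation the very same block (be it a maximal run or a left-right pair) reappears at $f(p),f^2(p),\ldots$ with the same surrounding context until some iterate lands at a chunk boundary; acyclicity of $f$ guarantees this happens, and the block is charged there. Hence the first round introduces $\le 2b$ nonterminals and shrinks the sequence to length $\le (3/4)n$.

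Third, I would iterate. On the parsed sequence $W'$, I induce a new bidirectional scheme by trimming every copy-chunk at its two extremes (making the two boundary blocks explicit) and keeping the interior block-sequence as a copy of the analogous block-sequence inside the old source; the explicit singletons of the old scheme remain explicit. Writing $b=b_1+b_2$, the new scheme has $b_1$ copy-chunks and $\le b_2+2b_1$ explicit ones, so after $k$ rounds the total nonterminals produced sum to $\le b_2+2kb_1 = O(bk)$ while the sequence length drops to $(3/4)^k n$. Choosing $k=\log_{4/3}(n/b)$ yields a residual sequence of length $O(b)$ which I collapse into a single start symbol with $O(b)$ further rules, for a final RLCFG of size $O(b\log(n/b))$.

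The main obstacle is the charging step: one must check that both block types produced by Lemma~\ref{lem:lcp} (maximal-run blocks and left-right pair blocks) are genuinely preserved when shifted by $f$ inside a chunk, including the one-character margins on each side that Lemma~\ref{lem:lcp} requires. A smaller but necessary subtlety is to verify that the induced scheme on $W'$ remains a valid bidirectional scheme; this should follow because trimming chunks and declaring more positions explicit can only strengthen acyclicity, so no new cycles arise in the induced function on~$W'$.
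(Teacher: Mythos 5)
Your proposal is correct and follows essentially the same route as the paper's proof: the same acyclic function $f$ encoding the scheme, the same charging argument bounding the distinct blocks per round by $2b$ (with interior blocks pushed along $f$ until they touch a chunk boundary), the same induced scheme on the parsed sequence with $b_1$ copy-chunks and $b_2+2b_1$ explicit symbols, and the same choice of $k=\log_{4/3}(n/b)$ rounds. The two verification points you flag (preservation of both block types under the shift, including margins, and validity of the induced scheme) are exactly the details the paper's proof works out, in the way you describe.
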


With Theorem~\ref{thm:rlcfg}, we can also bound the size $z$ of the Lempel-Ziv
parse \cite{LZ76} that allows overlaps. The size without allowing overlaps is 
known to be bounded by the size of the smallest CFG, $z_{no} \le g$
\cite{Ryt03,CLLPPSS05}. We can easily see that $z \le 2 g_{rl}$ also holds by 
extending an existing proof \cite[Lem.~9]{CLLPPSS05} to handle the run-length 
rules. Let us call left-to-right parse to any parsing of $T$ where each new 
phrase is a letter or it occurs previously in $T$.

\begin{theorem}
Let a RLCFG of size $g_{rl}$ expand to a text $T$. Then the Lempel-Ziv parse 
(allowing overlaps) of $T$ produces $z \le 2 g_{rl}$ phrases.
\end{theorem}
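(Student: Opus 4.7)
The plan is to mimic the classical argument of Charikar et al.~\cite[Lem.~9]{CLLPPSS05} that $z_{no}\le g$ holds for ordinary CFGs, and extend it to accommodate run-length rules $X\to Y^t$. I would walk the parse tree of $T$ rooted at the start symbol $S$ in depth-first, left-to-right order, while maintaining the set $A$ of nonterminals already encountered. Arriving at a node labeled $X$: if $X\notin A$, I add it to $A$ and recurse into the body of $X$'s rule; otherwise I emit a single phrase that copies the string derived by $X$ from its first occurrence, which lies entirely to the left of the current text position. Terminal leaves are emitted as length-$1$ phrases. The resulting output is, by construction, a valid left-to-right parse of $T$, and hence its phrase count upper bounds $z$.

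The new case is $X\to Y^t$. On activating it I process the leftmost $Y$ in the standard way (recursing if $Y\notin A$, or emitting one copy-phrase for the string derived from $Y$ otherwise), and then emit a \emph{single} phrase for the remaining tail $Y^{t-1}$. The tail is realised as an overlap-allowed LZ factor whose source is the block for $Y$ just written: every character needed by the copy has already been produced by the time it is read, precisely because source and target are permitted to overlap. This is the only genuine point of departure from the CFG proof, and I expect it to be the main obstacle---one has to argue cleanly that $Y^{t-1}$ is realisable as one phrase in the $z$-model, whereas it would in general force $\Omega(t)$ phrases in the $z_{no}$-model, which is exactly why the bound refers to $z$ and not to $z_{no}$.

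The count is then routine amortisation. Assign to each rule the number $s'(X)$ of children that its activation contributes to the walked subtree: $s'(X)=1$ for $X\to a$, $s'(X)=2$ for $X\to YZ$, and $s'(X)=2$ for $X\to Y^t$ (leftmost $Y$ plus tail). The walked subtree has exactly $|A|$ internal nodes and total out-degree $\sum_{X\in A}s'(X)$, so its number of leaves---which equals the number of emitted phrases---is $\sum_{X\in A}s'(X)-|A|+1$. Since $X\to Y^t$ contributes $1$ to $g_{rl}$ but $2$ to $s'$, while the other rule types contribute equally to both quantities, we get $\sum_{X}s'(X)\le g_{rl}+|\{\text{RL rules}\}|\le 2g_{rl}$, and the desired inequality $z\le 2g_{rl}$ follows.
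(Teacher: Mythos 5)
Your proposal is correct and follows essentially the same route as the paper: prune the parse tree at non-leftmost occurrences of nonterminals (your DFS with the memo set $A$ is exactly this), emit one copy-phrase per pruned leaf, split each $X\to Y^t$ into a head $Y$ plus a single overlap-allowed phrase for the tail $Y^{t-1}$ sourced at the first copy, and invoke the optimality of Lempel-Ziv among left-to-right parses. Your leaf-counting argument (leaves $=\sum_X s'(X)-|A|+1\le g_{rl}+\#\{\text{RL rules}\}\le 2g_{rl}$) is in fact a slightly more explicit version of the paper's accounting, so there is nothing to fix.
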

\begin{proof}
Consider the parse tree of $T$, where all internal nodes representing any
but the leftmost occurrence of a nonterminal are pruned and left as leaves.
The number of nodes in this tree is precisely $g_{rl}$. We say that the internal node
of nonterminal $X$ is its definition. Our left-to-right  parse of $T$ is a sequence
$Z[1..z]$ obtained by traversing the leaves of the pruned parse tree left to
right. For a terminal leaf, we append the letter to $Z$. For a 
leaf representing nonterminal $X$, such that the subtree of its definition
generated $Z[i..j]$, we append to $Z$ a reference to the area $T[x..y]$ expanded
by $Z[i..j]$. 

Rules $X \rightarrow Y^t$ are handled as follows. First, we expand them to
$X \rightarrow Y\cdot Y^{t-1}$, that is, the node for $X$ has two children
for $Y$, and it is annotated with $t-1$.
Since the right child of $X$ is not the first occurrence of $Y$, it must be a
leaf. The left child of $X$ may or may not be a leaf, depending on whether
$Y$ occurred before or not. Now, when our leaf traversal reaches the right
child $Y$ of a node $X$ indicating $t-1$ repetitions, we append to $Z$ a
reference to $T[x..y+(t-2)(y-x+1)]$, where $T[x..y]$ is the area expanded by 
the first child of $X$. Note that source and target overlap if $t > 2$.
Thus a left-to-right parse of size $2 g_{rl}$ exists, and Lempel-Ziv is the 
optimal left-to-right parse \cite[Thm.~1]{LZ76}.
\qed
\end{proof}

We then obtain a result on the long-standing open problem of finding the
approximation ratio of Lempel-Ziv compared to the smallest bidirectional
scheme (the bound this is tight as a function of $n$, according to 
Theorem~\ref{thm:lowz}).

\begin{theorem}
Let $T[1..n]$ have a bidirectional scheme of size $b$. Then the Lempel-Ziv 
parsing of $T$ allowing overlaps has $z = O(b \log(n/b))$ phrases.
\end{theorem}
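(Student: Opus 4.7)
The plan is to obtain the bound by a direct composition of the two preceding results in this subsection. First I would invoke Theorem~\ref{thm:rlcfg}, which was just established by combining the locally consistent parsing machinery (Lemma~\ref{lem:lcp}) with a cycle-free argument on the mapping $f$ induced by the bidirectional scheme: this yields a run-length context-free grammar generating $T$ whose size satisfies $g_{rl} = O(b \log(n/b))$. Then I would apply the immediately preceding theorem, which states $z \le 2 g_{rl}$ (proved by traversing the pruned parse tree left to right, emitting one Lempel-Ziv phrase per leaf, and handling run-length rules $X \rightarrow Y^t$ by charging a single overlapping reference to $T[x..y+(t-2)(y-x+1)]$ for the right child of the binarized node).

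Chaining these two inequalities gives $z \le 2 g_{rl} = O(b\log(n/b))$, which is exactly the claimed bound. Since the argument is just the composition of two already-proved facts, the proof itself can be a one- or two-line observation; no further construction is needed.

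There is essentially no main obstacle here, because both ingredients have been done in the preceding pages. The only thing to double-check is that the two results compose cleanly, namely that the $g_{rl}$ produced by Theorem~\ref{thm:rlcfg} is indeed a RLCFG in the exact sense used by the previous theorem (a grammar whose rules are either ordinary CFG productions or size-1 run-length rules $X \rightarrow Y^t$), so that $z \le 2 g_{rl}$ applies verbatim. This is the case by construction, since Theorem~\ref{thm:rlcfg} produces its run-length rules precisely from repetitive areas under the locally consistent parsing of Lemma~\ref{lem:lcp}. Finally, I would remark that combined with Theorem~\ref{thm:lowz}, which exhibits Fibonacci-like strings where $z = \Omega(b \log n)$, this upper bound is tight as a function of $n$, thereby settling the long-standing question of the approximation ratio between Lempel-Ziv (with overlaps) and the smallest bidirectional scheme up to constants.
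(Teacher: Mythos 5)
Your proposal is correct and matches the paper's own (implicit) argument exactly: the paper states this theorem as an immediate consequence of Theorem~\ref{thm:rlcfg} ($g_{rl}=O(b\log(n/b))$) and the preceding bound $z\le 2g_{rl}$, which is precisely the composition you describe. Nothing further is needed.
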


We can also derive upper bounds for $g$, the size of the smallest CFG, and
for $z_{no}$, the size of the Lempel-Ziv parse that does not allow overlaps.
It is sufficient to combine the results that $z_{no} \le g$ 
\cite{Ryt03,CLLPPSS05} and that
$g = O(z\log(n/z))$ \cite[Lem.~8]{Gaw11} with the previous results.

\begin{theorem} 
Let $T[1..n]$ have a bidirectional scheme of size $b$. Then there exists a
context-free grammar of size $g = O(b\log^2(n/b))$ that generates $T$.
\end{theorem}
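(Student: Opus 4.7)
The plan is to chain the theorem immediately preceding with the known upper bound on the smallest grammar in terms of the (overlapping) Lempel-Ziv parse. First, I would invoke the preceding theorem, which shows that any text admitting a bidirectional scheme of size $b$ has an overlapping Lempel-Ziv parse with $z = O(b\log(n/b))$ phrases. Next, I would apply Gawrychowski's bound \cite[Lem.~8]{Gaw11}, which upper-bounds the size of the smallest context-free grammar as $g = O(z\log(n/z))$ for the overlapping variant of Lempel-Ziv, so the two ingredients compose directly.

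The remaining step is purely arithmetic. Since every unidirectional parse is a special case of a bidirectional one, we have $b \le z$, and therefore $\log(n/z) \le \log(n/b)$. Substituting the bound on $z$ then gives
\[
g \;=\; O(z\log(n/z)) \;=\; O\bigl(b\log(n/b)\cdot\log(n/b)\bigr) \;=\; O(b\log^2(n/b)),
\]
as claimed.

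I do not anticipate any real obstacle, since both cited ingredients are used as black boxes and the final step is just monotonicity of $\log(n/\cdot)$. The one subtlety worth flagging is to use the overlapping-LZ form of Gawrychowski's bound so that it matches the overlapping $z$ produced by the preceding theorem; if one instead relied only on the classical Rytter/Charikar inequality $z_{no} \le g$, one would have to route through a separate bound on $z_{no}$, which is less direct for upper-bounding $g$ itself.
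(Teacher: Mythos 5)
Your proposal is correct and follows essentially the same route as the paper, which likewise obtains this theorem by combining the preceding result $z = O(b\log(n/b))$ with Gawrychowski's bound $g = O(z\log(n/z))$. Your explicit use of $b \le z$ to bound $\log(n/z) \le \log(n/b)$ is a fine way to make the final substitution rigorous.
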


\begin{theorem}
Let $T[1..n]$ have a bidirectional scheme of size $b$. Then the Lempel-Ziv 
parsing of $T$ without allowing overlaps has $z_{no}=O(b \log^2(n/b))$ phrases.
\end{theorem}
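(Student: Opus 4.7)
My plan is to chain together two results that are already in hand by the time we reach this theorem. The previous theorem in the excerpt establishes that any text $T[1..n]$ admitting a bidirectional scheme of size $b$ can be generated by a (plain) context-free grammar of size $g = O(b\log^2(n/b))$. The classical inequality $z_{no} \le g$, attributed in the introduction to Rytter~\cite{Ryt03} and Charikar et al.~\cite{CLLPPSS05}, then yields the desired bound immediately.

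Concretely, first I would invoke the preceding theorem: starting from the bidirectional scheme of size $b$, apply Theorem~\ref{thm:rlcfg} to obtain a run-length context-free grammar of size $g_{rl} = O(b\log(n/b))$, and then apply the Lempel-Ziv bound $z \le 2 g_{rl}$ combined with Gawrychowski's conversion $g = O(z\log(n/z))$~\cite[Lem.~8]{Gaw11} to conclude $g = O(b\log^2(n/b))$ — exactly the statement of the previous theorem. Second, I would invoke the bound $z_{no} \le g$: any CFG of size $g$ derives $T$ through a parse tree whose left-to-right traversal of the internal nodes (pruned so that repeated nonterminals become leaves) yields a left-to-right parsing of $T$ without overlaps, and since Lempel-Ziv without overlaps is the optimal such parsing, $z_{no}$ cannot exceed $g$. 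Combining these two bounds gives $z_{no} \le g = O(b\log^2(n/b))$.

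There is essentially no obstacle here: both ingredients are cited or proved immediately above, and the argument is a one-line chain. The only thing to be careful about is distinguishing $z$ from $z_{no}$ in the derivation, since the bound $z \le 2 g_{rl}$ applies to the overlapping variant while the inequality $z_{no} \le g$ requires a plain CFG (without run-length rules) in order for the left-to-right parse extracted from the parse tree to remain overlap-free. This is precisely why we route through the previous theorem's plain CFG bound rather than using $g_{rl}$ directly.
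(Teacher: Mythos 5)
Your proposal is correct and follows essentially the same route as the paper, which derives this theorem exactly by combining the preceding bound $g = O(b\log^2(n/b))$ (itself obtained from Theorem~\ref{thm:rlcfg}, $z \le 2g_{rl}$, and $g = O(z\log(n/z))$) with the classical inequality $z_{no} \le g$ \cite{Ryt03,CLLPPSS05}. Your remark about routing through the plain CFG rather than the RLCFG to keep the extracted left-to-right parse overlap-free is consistent with the paper's treatment.
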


\subsection{Map of the Relations between Repetitiveness Measures}

Figure~\ref{fig:bounds} (left) illustrates the known asymptotic bounds that 
relate various repetitiveness measures: $z$, $z_{no}$, $r$, $g$, $g_{rl}$, 
$b$, and $e$ (the size of the CDAWG \cite{BBHMCE87}). 
We do not include $m$, the number of maximal matches \cite{BCGPR15}, because
it can be zero for all the $n!$ strings of length $n$ with all distinct symbols
\cite{BCGPR15}, and thus it is below the Kolmogorov complexity. Yet, we use
the fact that $m \le e$ to derive other lower bounds on $e$.

The bounds $e \ge \max(m,z,r)$ and $e = \Omega(g)$ are from 
Belazzougui et al.~\cite{BCGPR15,BC17}, $z_{no} \le g=O(z_{no}\log(n/z_{no}))$ 
is a classical result \cite{Ryt03,CLLPPSS05} and it also holds
$g = O(z\log(n/z))$ \cite[Lem.~8]{Gaw11}; $b \le z$ holds by definition \cite{SS82}. 
The others were proved in this section.


There are also several lower bounds on further possible upper bounds, 
for example, there are text families for which 
$g = \Omega (g_{rl} \log n)$ and $z_{no} = \Omega(z\log n)$ (i.e., $T=a^{n-1}\$$); 
$g = \Omega(z_{no}\log n/\log\log n)$ \cite{HLR16}; 
$e \ge m = \Omega(\max(r,z)\cdot n)$ \cite{BCGPR15} and thus 
$e = \Omega(g\cdot n/\log n)$ since $g = O(z\log n)$; 
$\min(r,z) = \Omega(m\cdot n)$ \cite{BCGPR15}; 
$r = \Omega(z_{no}\log n)$ \cite{BCGPR15,Pre16};
$z = \Omega(r\log n)$ \cite{Pre16};
$r = \Omega (g \log n/\log\log n)$ (since on a de Bruijn sequence of order
$k$ on a binary alphabet we have $r=\Theta(n)$ \cite{BCGPR15},
$z=O(n/\log n)$, and thus $g=O(z\log(n/z))=O(n\log\log n/\log n)$).
Those are shown in the right of Figure~\ref{fig:bounds}.
We are not aware of a separation between $z$ and $g_{rl}$.
From the upper bounds that hold for every string family, we can also deduce 
that, for example, there are text families where $r = \Omega(z \log n)$ and thus
$r = \Omega(b \log n)$ (since $r = \Omega(z_{no}\log n)$); 
$\{g,g_{rl},z_{no}\} = \Omega(r\log n)$ (since $z = \Omega(r\log n)$) and
thus $z = \Omega(b \log n)$ (since $r \ge b$, see Theorem~\ref{thm:lowz}).

Thus, there are no simple dominance relations between $r$ and $z$, $z_{no}$, 
$m$, $g$, or $g_{rl}$. Experimental results, on the other hand 
\cite{MNSV09,KN13,BCGPR15,CFMPN16}, show that it typically holds 
$z \approx z_{no} < m < r \approx g \ll e$. 

\begin{figure}[t]
\centerline{\includegraphics[width=4cm]{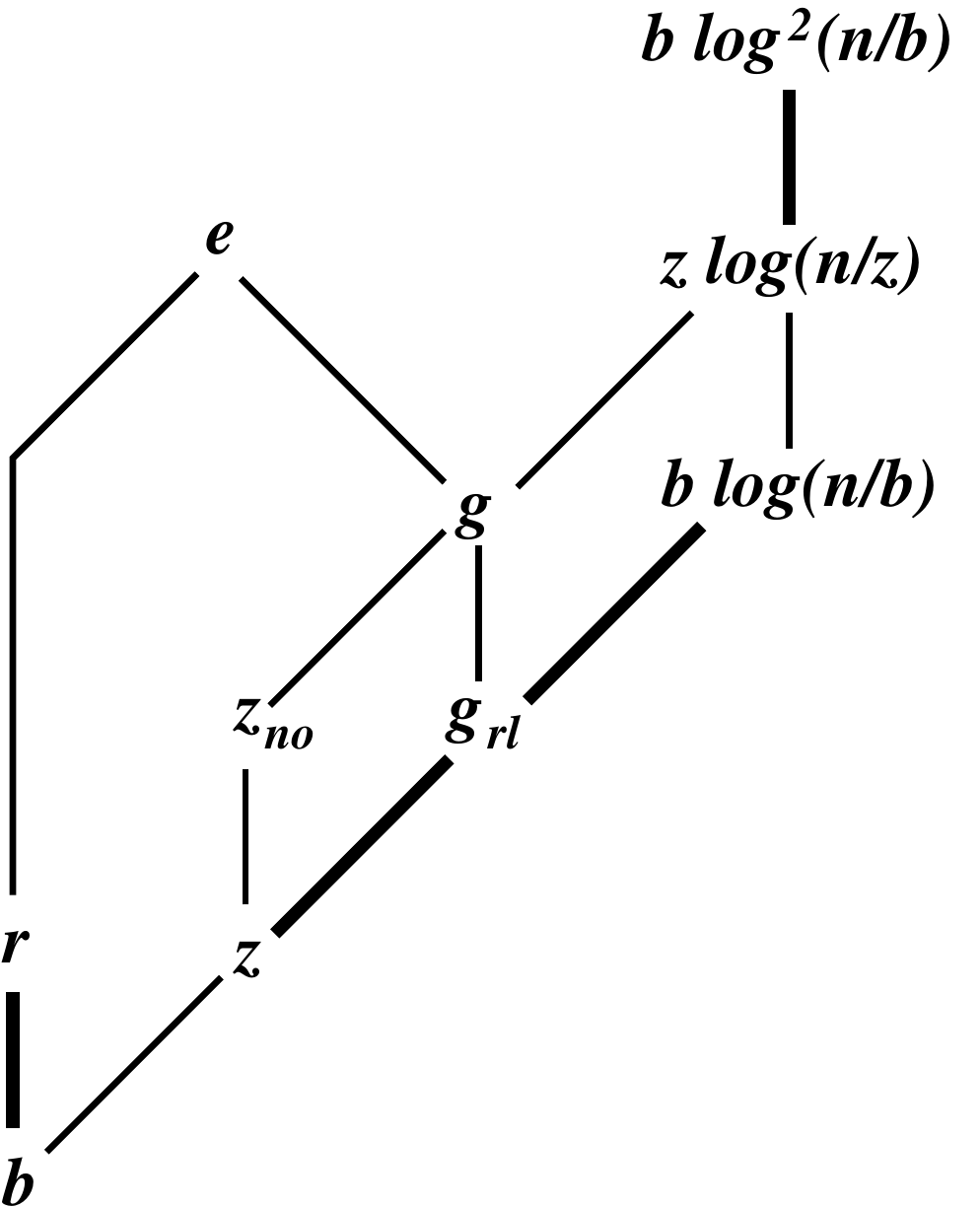}
\hspace{3cm}
\includegraphics[width=3cm]{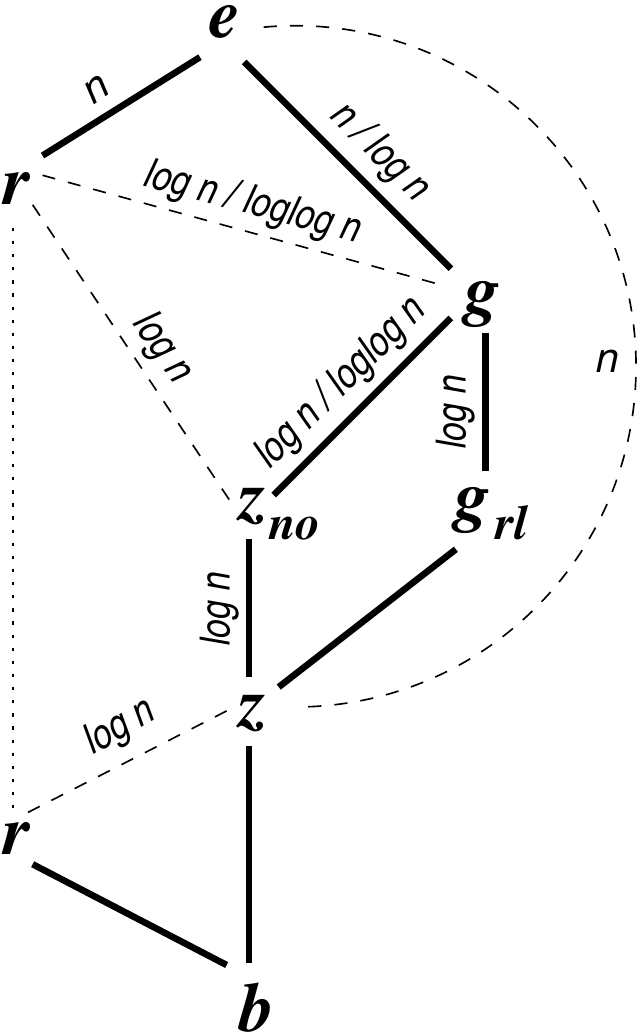}}
\caption{Known and new asymptotic bounds between repetitiveness measures. The
bounds on the left hold for every string family: an edge means that the lower 
measure is of the order of the upper. The thicker lines were proved in this 
section. The dashed lines on the right are lower bounds that hold for some 
string family. The solid lines are inherited from the left, and since they 
always hold, they permit propagating the lower bounds. Note that 
$r$ appears twice.}
\label{fig:bounds}
\end{figure}

%

\bibliographystyle{plain}
\bibliography{paper}

\appendix

\section{Failure in the Analysis based on RePair and Relatives} \label{app:fail}

We show that the proofs that RePair or other related compressors reach a
grammar of size $O(r\log(n/r))$ in the literature \cite{GN07,GNF14}
are not correct. The mistake is propagated in other papers \cite{FMN09,ACN13}.

The root of their problem is the possibly inconsistent parsing of equal
runs, that is, the same text may be parsed into distinct nonterminals, with 
the inconsistencies progressively growing from the run breaks. Locally 
consistent parsing avoids this problem. Although the run breaks do grow from 
one round to the next, locally consistent parsing reduces the text from $n$ 
to $(3/4)n$ symbols in each round, parsing the runs in the same way (except for a 
few symbols near the ends). Thus the text decreases fast compared to the 
growth in the number of runs.

Note that we are not proving that those compressors can actually produce a
grammar of size $\omega(r\log(n/r))$ on some particular string family; we 
only show 
that the proofs in the literature do not ensure that they always produce a 
grammar of size $O(r\log(n/r))$. In fact, our experiments show that both are 
rather competitive (especially RePair), so it is possible that they actually 
reach this space bound and the right proof is yet to be found.

\subsection{RePair}

The argument used to claim that RePair compression reached $O(r\log(n/r))$
space was simple \cite{GNF14}. As we show in Lemma~\ref{lem:pass} (specialized
with Lemma~\ref{lem:dsa}), as we
traverse the permutation $\pi = \LF$, consecutive pairs of symbols can
only change when they fall in different runs; therefore there are only $r$
different pairs. Since RePair iteratively replaces the most frequent pair
by a nonterminal \cite{LM00}, the chosen pair must appear at least $n/r$ times.
Thus the reduced text is of size $n-n/r=n(1-1/r)$. Since collapsing a pair
of consecutive symbols into a nonterminal does not create new runs, we can
repeat this process on the reduced text, where there are still $r$ different
pairs. After $k$ iterations, the grammar is of size $2k$ and the text is of
size $n(1-1/r)^k$, and the result follows.

To prepare for the next iteration, they do as follows. When they replace a 
pair $ab$ by a nonterminal $A$, they replace $a$ by $A$ and $b$ by a hole, 
which we write ``\underline{~~}''.  Then they remove all the positions in 
$\pi$ that fall into holes, that is, if $W[\pi(i)] = \underline{~~}$, they
replace $\pi(i) \leftarrow \pi(\pi(i))$, as many times as necessary. Finally, 
holes are removed and the permutation is mapped to the new sequence. Then they
claim that the number of places where $W[i..i+1] \not= W[\pi(i)..\pi(i)+1]$ 
is still $r$. 

However, this is not so. Let us regard the {\em chains} \cite{FMN09} in
$W$, which are contiguous sections of the cycle of $\pi$ where $W[i..i+1] =
W[\pi(i)..\pi(i)+1]$. The chains are also $r$, since we find the $r$ run 
breaks in another order. Consider a chain where the pair, $ab$, always falls
at the end of runs, alternatively followed by $c$ or by $d$. We write this as
$ab|c$ and $ab|d$, using the bar to denote run breaks. Now assume that the 
current step of RePair chooses the pair $bd$ because it is globally the most 
frequent. Then every $b|d$ will be replaced by $B|\underline{~~}$, and the 
chain of $ab$s will be broken into many chains of length 1, since $ab$ is now 
followed by $aB$ and $aB$ is followed by $ab$. Since each run end may break a 
chain once, we have $2r$ chains (and runs) after one RePair step. Effectively, 
we are adding a new run break preceding each existing one: $ab|c$ becomes
$a|b|c$ and $a|b|d$ becomes $a|B|\underline{~~}$.

If the runs grow by $r$ at each iteration, the analytical result does not 
follow. The size of $W$ after $k$ replacements becomes 
$n \prod_{i=1}^k (1-1/(ir))$.
This is a function that does tend to zero, but very slowly. In particular,
after $k=c \cdot r$ replacements, and assuming only $p\cdot r$ of the runs 
grow (i.e., the factors are $(1-1/(r(1+(i-1))p)$, the product is 
$\frac{n\Gamma(cr+(r-1)/(rp))}{\Gamma(cr+1/p)\Gamma((r-1)/(rp))}$, 
which tends to $n$ as $r$ tends to infinity for any constants $c$ and $p$. 
That is, the text has not been significantly reduced after adding $O(r)$ rules.

Even if we avoid replacing pairs that cross run breaks, there are still
problems. Consider a substring $W[i..i+2] = abc$, where we form the rule 
$A \rightarrow ab$ and replace the string by $A\underline{~~}c$. Now let 
$W[j..j+1]=|bc$ start a run (so it does not have to be preceded by $a$), and 
that $\pi(j)=i+1$. Now, since $i+1$ is a hole, we will replace $\pi(j) 
\leftarrow \pi(i+1)$, which breaks the chain at the edge $j \rightarrow i+1$.
We have effectively created a new run starting at $W[j+1]$, $|b|c$, and the
number of runs could increase by $r$ in this way.

RePair can be thought of as choosing the longest chain at each step, and as
such it might insist on a long chain that is affected by all the $r$ run
breaks while reducing the text size only by $n/r$. The next method avoids this 
(somewhat pathological) worst-case by processing all the chains at each 
iteration, so that the $r$ run breaks are spread across all the chains, and 
the text size is divided by a constant factor at each round. Still, their
parsing is even less consistent and the number of runs may grow faster.

\subsection{Following $\pi$}

They \cite{GNF14} also propose another method, whose proof is also subtly
incorrect.
They follow the cycle of $\pi=\LF$ (or its inverse, $\Psi$). For each chain
of equal pairs $ab$, they create the rule $A \rightarrow ab$ and replace every
$ab$ by $A\underline{~~}$ along the chain. If some earlier replacement in the 
cycle changed some $a$ to $\underline{~~}$ or some $b$ to a nonterminal $B$,
they skip that pair. When the chain ends (i.e., the next pair was not $ab$ 
before the cycle of changes started), they switch to the new 
chain (and pair). This also produces only
$r$ pairs in the first round, and reduces the text to at most $2n/3$, so after
several rounds, a grammar of size $O(r\log(n/r))$ would be obtained. 

The problem is that the number of runs may grow fast from one iteration to the 
next, due to inconsistent parsing. Consider that we are following a chain, 
replacing the pair $B \rightarrow ba$, until we hit a run break and start a
new chain for the pair $C \rightarrow cb$. At some point, this new chain may
reach a substring $cba$ that was converted to $cB\underline{~~}$ along
the previous chain. Actually, this $ba$ must have been the first in the chain
of $ba$, because otherwise the $b$ must come from the previous element in the
chain of $ba$. The chain for $cb$ follows for some more time, without replacing
$cb$ by $C\underline{~~}$ because it collides with the $B$. Thus this chain 
has been cut into two, with pairs $C\underline{~~}$ and $cB$. 

After finishing with this chain, we
enter a new chain $D \rightarrow dc$. At some point, this chain may find a 
$dcb$ that was changed to $dC\underline{~~}$ (this is the beginning of the 
chain of $cb$, as explained before), and thus it cannot change the pair. For
some time, all the pairs $dc$ we find overlap those $cb$ of the previous
chain, until we reach the point where the chain of $cb$ ``met'' the chain 
of $ba$. Since those $cb$ were not changed to $C\underline{~~}$, we can now 
again convert $dc$ to $D\underline{~~}$. This chain has then been cut into 
three, with pairs $D\underline{~~}$, $dC$, and then again $D\underline{~~}$. 

Now we may start
a new chain $E \rightarrow ed$, which after some time touches an $edc$ that
was converted to $eD\underline{~~}$, follows besides the chain of $dc$ and
reaches the point where $dc$ was not converted to $D\underline{~~}$ and thus
it starts generating the pairs $E\underline{~~}$ again. Then it reaches the 
point where $dc$ was again converted to $D\underline{~~}$ and thus it produces
pairs $eD$ again. Thus the chain for $ed$ was split into four. Note, however,
that in the process we removed one of the three chains created by $dc$, since
we replaced all those $edC\underline{~~}$ by $E\underline{~~}C\underline{~~}$.
So we have created 2 chains from $cb$, 2 from $dc$, and 4 from $ed$.
A new replacement $F \rightarrow fe$ will create 5 chains and remove one from
$ed$, so we have 2, 2, 3, 5.
Yet a new replacement $G \rightarrow gf$ will create 6 chains and remove 2 from
$fe$, so we have 2, 2, 3, 3, 6. 

The number of new chains (and hence runs) may then grow quadratically, thus at 
the end of the round we may have $\Theta(r^2)$ runs and at least $n/2$ 
symbols. The sum is optimized for $k=\frac{\log n}{1+2\log r}$ rounds, where 
the size of the grammar is $\Theta(n^{1-\frac{1}{2\log r + 1}})$, which even
for $r=2$ is $\Theta(n^{2/3})$.

Note, again, that we are not proving that, for a concrete family of strings,
this method does not produce a grammar of size $O(r\log(n/r))$. We are just 
exposing a failure of the proof in the 
literature \cite{GNF14}, where it is said that the runs do not increase without
proving it, as a consistent parsing of the runs is assumed. Indeed, two runs,
$|abcdef|x$ and $|abcdefg|$, where the first leads to the second by $\pi$,
may be parsed in completely different forms, $aBDF$ and $ACEg$, if the chains
we follow are $F \rightarrow fx$, $E \rightarrow ef$, $D \rightarrow de$, 
$C \rightarrow cd$, $B \rightarrow bc$, and $A \rightarrow ab$.

\end{document}